\definecolor{darkgreen}{rgb}{0,0.5,0}
\definecolor{darkblue}{rgb}{0,0,0.8}
\theoremstyle{theorem}
\newtheorem{theorem}{Theorem}[section]
\theoremstyle{lemma}
\newtheorem{lemma}[theorem]{Lemma}
\theoremstyle{corollary}
\newtheorem{corollary}[theorem]{Corollary}
\theoremstyle{claim}
\theoremstyle{proposition}
\newtheorem{proposition}[theorem]{Proposition}
\theoremstyle{observation}
\newtheorem{observation}[theorem]{Observation}
\theoremstyle{definition}
\theoremstyle{remark}
\newcommand{\einit}{e_{\text{init}}}
\newcommand{\ignore}[1]{}
\algnewcommand\algorithmicswitch{\textbf{switch}}
\algnewcommand\algorithmiccase{\textbf{case}}
\newcommand{\eps}{\varepsilon}
\newcommand{\arb}{a}
\newcommand{\cs}{\mathfrak C}
\newcommand{\congest}{\ensuremath{\mathsf{CONGEST}~}}
\newcommand{\local}{$\mathsf{LOCAL}$\xspace}
\newcommand{\slocal}{$\mathsf{SLOCAL}$\xspace}
\newcommand{\poly}{\operatorname{\text{{\rm poly}}}}
\newcommand{\polylog}{\operatorname{\text{{\rm polylog}}}}
\newcommand{\polyloglog}{\operatorname{\text{{\rm polyloglog}}}}
\newcommand{\Prob}[1]{\Pr\left(#1\right)}
\DeclareMathOperator{\E}{\mathbb{E}}
\def\cut{\mbox{\tt CUT}}
\renewcommand{\paragraph}[1]{\vspace{0.15cm}\noindent {\bf #1}:}
\begin{document}

\title{On the Locality of Nash-Williams Forest Decomposition and Star-Forest Decomposition\footnote{This is an extended version of a paper appearing in the ACM Symposium on Principles of Distributed Computing (PODC) 2021}}
\author{David G. Harris \\ University of Maryland \\ \url{davidgharris29@gmail.com} \and Hsin-Hao Su \footnote{supported by NSF Grant No. CCF-2008422.}\\ Boston College \\ \url{suhx@bc.edu} \and  Hoa T.~Vu \\ San Diego State University \\ \url{hvu2@sdsu.edu}}	
\date{}

\maketitle

%!TEX root = main.tex

\begin{abstract}
Given a graph $G=(V,E)$ with arboricity $\arb$, we study the problem of decomposing the edges of  $G$ into  $(1+\eps)\arb$ disjoint forests  in the distributed \local model. Here $G$ may be a simple graph or multi-graph. While there is a polynomial time centralized algorithm for $\arb$-forest decomposition  (e.g. [Imai, J. Operation Research Soc. of Japan `83]), it remains an open question how close we can get to this exact decomposition  in the \local model.

Barenboim and Elkin [PODC `08] developed a \local algorithm to compute a $(2+\eps)\arb$-forest decomposition in $O(\frac{\log n}{\eps})$ rounds. Ghaffari and Su [SODA `17] made further progress by computing a $(1+\eps) \arb$-forest decomposition in $O(\frac{\log^3 n}{\eps^4})$ rounds when $\eps \arb = \Omega(\sqrt{\arb \log n})$, i.e., the limit of their algorithm is an $(\arb+ \Omega(\sqrt{\arb \log n}))$-forest decomposition. This algorithm, based on a combinatorial construction of Alon, McDiarmid \& Reed [Combinatorica `92], in fact provides a decomposition of the graph into \emph{star-forests}, i.e., each forest is a collection of stars.

Our main goal is to reduce the threshold of $\eps \arb$  in $(1+\eps)\arb$-forest decomposition.  We obtain a number of results with different parameters; as some notable examples, we get:
\begin{itemize}
\item An $O(\frac{\Delta^{\rho} \log^4 n}{ \eps})$-round algorithm when  $\eps\arb = \Omega_{\rho}(1)$ in multigraphs, where $\rho > 0$ is any arbitrary constant.
\item An $O(\frac{\log^4 n \log \Delta}{\eps})$-round algorithm when $\eps\arb = \Omega(\frac{\log \Delta}{\log \log \Delta})$  in multigraphs.
\item An $O(\frac{\log^ 4 n}{\eps})$-round algorithm when $\eps\arb = \Omega(\log n)$ in multigraphs. This also covers an extension of the forest-decomposition problem to list-edge-coloring.
\item An $O(\frac{\log^3 n}{\eps})$-round algorithm for star-forest decomposition for $\eps\arb = \Omega(\sqrt{\log \Delta} + \log \arb)$ in simple graphs. When $\eps \arb \geq \Omega(\log \Delta)$, this also covers a list-coloring variant.
\end{itemize} 

Our techniques also give an algorithm for $(1+\eps) \arb$-outdegree-orientation in $O(\frac{\log^3 n}{\eps})$ rounds, which is the first algorithm with {\it linear dependency} on $\eps^{-1}$. 

At a high level, the first three results come from a combination of network decomposition, load balancing, and a new structural result on local augmenting sequences. The fourth result uses a more careful probabilistic analysis for the construction of Alon, McDiarmid \& Reed; the bounds on star-forest-decomposition were not previously known even non-constructively.
\end{abstract}
\thispagestyle{empty}
\newpage
\pagenumbering{arabic} 
\section{Introduction}
\sloppy
Consider a loopless (multi-)graph $G = (V,E)$ with $n = |V|$ vertices, $m = |E|$ edges, and maximum degree $\Delta$. A $k$-forest decomposition (abbreviated $k$-FD) is a partition of the edges into $k$ forests. The \emph{arboricity} of $G$, denoted $\arb(G)$, is a measure of sparsity defined as the minimum number $k$ for which a $k$-forest decomposition of $G$ exists. We also write $\arb(E)$ or just $\arb$ when $G$ is understood.   An elegant result of Nash-Williams \cite{NW64} shows that $\arb(G)$ is given by the formula:
$$
\arb(G) = \max_{\substack{H \subseteq G \\ |V(H)| \geq 2 }} \left\lceil \frac{ |E(H)| }{|V(H)| - 1} \right\rceil.
$$
Note that the RHS is clearly a lower bound on $\arb$ since each forest can consume at most $|V(H)| - 1$ edges in a subgraph $H$.

Forest decomposition can be viewed as a variant of  proper edge coloring: in the latter problem, the edges should be partitioned into matchings, while in the former they should be partitioned into forests. Like edge coloring, forest decomposition has applications to scheduling radio or wireless networks \cite{RL93,GKJ07}.  In the centralized setting, a series of polynomial-time algorithms have been developed to compute $\arb$-forest decompositions \cite{HI83,RT85,GS85,GW92}.  

In this work, we study the problem of computing forest decompositions in the \local model of distributed computing \cite{linial92}. In this model, the vertices operate in synchronized rounds where each vertex sends and receives messages of arbitrary size to its neighbors, and performs arbitrary local computations. Each vertex also has a unique ID which is a binary string of length
$O(\log n)$. An $r$-round \local algorithm implies that each vertex only uses information in its $r$-hop neighborhood to compute the answer, and vice versa.

There has been growing interest in investigating the gap between  efficient computation in the \local model and the existential bounds of various combinatorial structures. For example, consider proper edge coloring. Vizing's classical result \cite{vizing1964estimate} shows that there exists a $(\Delta + 1)$-edge-coloring in simple graphs. A long series of works have developed  \local algorithms using smaller number of colors \cite{PS97,dubhashi1998near,EPS15,ChangHLPU18,GKMU18,SV19}. This culminated with a $\poly(\Delta, \log n)$-round algorithm in \cite{bernshteyn2020fast} for $(\Delta+1)$-edge-coloring, matching the existential bound.

Computing an $\arb$-forest decomposition in the \local model requires $\Omega(n)$ rounds even in simple graphs with constant $\Delta$ (see Proposition~\ref{cor:lower_bound_simple}). Accordingly, we aim for    $(1+\eps) \arb$ forests, i.e. $\eps \arb$ excess forests beyond the $\arb$ forests 
required existentially. Beside round complexity, a key objective is to minimize the value $\eps \arb$.

The first results in the \local model were due to  \cite{BE10}, who developed an $O(\frac{\log n}{\eps})$-round algorithm for $(2+\eps)\arb$-FD along with a lower bound of $\Omega( \frac{\log n}{\log \arb} - \log^* n)$ rounds for $O(a)$-FD.  These have been building block in many distributed and parallel algorithms \cite{BE10, BE11, BBDFHKU19, Kuhn20, SDS20}.      Open Problem 11.10 of \cite{BE03} raised the question of whether it is possible to use fewer than $2 \arb$ forests.  Ghaffari and Su \cite{GS17} made some progress with a randomized algorithm for $(1+\eps)\arb$-FD in $O(\log^3 n/ \eps^4)$ rounds in simple graphs when $\eps = \Omega(\sqrt{\log n / \arb})$, i.e., the minimum number of obtainable forests is $a + \Omega(\sqrt{a \log n})$.  

We make further progress with a randomized algorithm  for $(\arb + 3)$-FD in $\poly(\Delta, \log n)$ rounds in multigraphs.  The polynomial dependence on $\Delta$ can be removed when $\eps \arb$ is larger; for example, we obtain a $(1+\eps)\arb$-FD in $O(1/\eps) \cdot \polylog(n)$ rounds for $\eps \arb = \Omega(\log \Delta / \log \log \Delta)$.

\paragraph{List Forest Decomposition}
Similar to edge coloring, there is a list version of the forest decomposition problem: each edge $e$ has a color-palette $Q(e)$ and should choose a color $\phi(e) \in Q(e)$ so that, for any color $c$, the subgraph induced by the $c$-colored edges forms a forest. We refer to this as \emph{list-forest decomposition} (abbreviated \emph{LFD}). We denote by $\cs = \bigcup_e Q(e)$ the set of all possible colors; this generalizes $k$-forest-decomposition, which can be viewed as the case where $ \cs = \{1, \dots, k \}$.

Based on general matroid arguments, Seymour \cite{seymour1998note} showed that an LFD exists whenever the palettes all have size at least $\arb$. The total number of forests (one per color) may then be much larger than $\arb$; in this case, the excess is measured in terms of the number of extra colors in edges' palettes (in addition to the $\arb$ colors required by the lower bound).

Seymour's construction can be turned into a polynomial-time centralized algorithm with standard matroid techniques.  However, these do not extend to the  \local model. As a proof of concept, we give $\poly(\log n, 1/\eps)$-round algorithms when palettes have size $(1+\eps) \arb$ for $\eps \arb = \Omega(\min\{ \log n, \sqrt{\arb \log \Delta} \})$. A key open problem is to find an efficient algorithm for $\eps \arb \geq \Omega(1)$.

\paragraph{Low-Diameter and Star-Forest Decompositions} We say the decomposition has \emph{diameter $D$} if every tree in every $c$-colored forest has strong diameter at most $D$.  Minimizing $D$ is interesting from both practical and theoretical aspects. For example, given a $k$-FD of diameter $D$, we can find an orientation of the edges to make it into $k$ rooted forests in $O(D)$ rounds of the \local model.

 In the extreme case $D = 2$, each forest is a collection of stars, i.e., a \emph{star-forest}. This has received some attention in combinatorics.  We refer to this as \emph{$k$-star-forest decomposition} (abbreviated $k$-SFD); we give an $O(\frac{\log^3 n}{\eps})$-round algorithm for $(1+\eps)\arb$-SFD when $\eps \arb = \Omega(\log a + \sqrt{\log \Delta})$ in simple graphs.  The algorithm also solves the list-coloring variant, which we call  \emph{list-star-forest-decomposition} (abbreviated LSFD), when $\eps \arb = \Omega(\log \Delta)$.

For larger diameters, we show how to convert an arbitrary $k$-FD into a $(1+\eps)k$-FD with diameter $D = O(\log n/\eps)$; when $\eps k$ is large enough, the diameter can be reduced further to $D = O(1/\eps)$, which is optimal (see Proposition~\ref{eps-bound}).

\subsection{Summary of Results}
Our results for forest decomposition balance a number of measures: the number of excess colors required, the running time, the tree diameters, LFD versus FD, and multigraphs versus simple graphs. Table~\ref{fi1} below summarizes a number of parameter combinations.

 Here, $\rho > 0$ represents any desired constant and we use $\Omega_{\rho}$ to represent a constant term which may depend on $\rho$.  Thus, for instance, the final listed algorithm requires excess $\overline K \log \Delta$ and the third listed algorithm requires excess $\overline K_{\rho}$, where $\overline K$ and $\overline K_{\rho}$ are universal constants.

\begin{table}[H]
\centering
\begin{tabular}{|l|l|l|l|l|}
\hline
Excess colors & Lists?  & Multigraph? & Runtime & Forest Diameter  \\
\hline
\hline
3 & No & Yes & $O( \Delta^2 \arb \log^4 n \log \Delta)$ & $\leq n$ \\
\hline
$\geq 4$  & No & Yes & $O( \Delta^2 \log^4 n \log \Delta / \eps)$ & $O(  \log n / \eps)$ \\
  \hline
  $\Omega_{\rho}(1)$ & No & Yes & $O( \Delta^{\rho} \log^4 n / \eps )$ &  $O( \log n / \eps )$ \\ 
  \hline
    $\Omega_{\rho}(  \frac{\log \Delta}{\log \log \Delta} )$ & No & Yes & $O_{\rho}( \log^4 n \log^{\rho} \Delta / \eps )$ &  $O( \log n / \eps )$ \\ 
    \hline
    $\geq 4 + \rho \log \Delta$ & No & Yes & $O_{\rho}( \log^4 n / \eps )$ &  $O( \log n / \eps)$ \\ 
    \hline    
        $\Omega( \sqrt{\arb \log \Delta} )$ & No & Yes & $O( \log^4 n / \eps )$ &  $O( 1/ \eps )$ \\ 
        \hline 
  $\Omega(\log n)$ & No & Yes & $O( \log^3 n / \eps )$ & $O(  1/ \eps )$ \\
        \hline        
        $\Omega( \sqrt{\arb \log \Delta} )$ & Yes & Yes & $O( \log^4 n / \eps^2 )$ &  $O( \log n / \eps^2)$ \\ 
        \hline
  $\Omega(\log n)$ & Yes & Yes & $O( \log^4 n / \eps )$ & $O(  \log n/ \eps )$ \\
  \hline
  $\Omega(\sqrt{\log \Delta} + \log a)$ & No & No & $ O(\log^3 n / \eps)$ & 2 (star) \\
  \hline
   $\Omega( \log \Delta)$ & Yes & No & $ O(\log^3 n / \eps)$ & 2 (star) \\
   \hline
\end{tabular}
\vspace{1mm}
\caption{Possible algorithms for forest decompositions of $G$}
\label{fi1}
\vspace{-1mm}
\end{table}

We also show that $\Omega(1/\eps)$ rounds are needed for $(1+\eps) \arb$-FD in  multigraphs (see Theorem \ref{thm:lower_bound_multi}).

\paragraph{Note on deterministic algorithms} All the algorithms we consider (unless specifically stated otherwise) are randomized algorithms which succeed {\it with high probability} (abbreviated w.h.p.), i.e. with probability at least $1 - 1/\poly(n)$. It will turn out that the algorithms we develop have the property that if the algorithm fails (i.e. the output does not satisfy desired properties), then this can be detected by a node checking its local neighborhood during the algorithm run. Such randomized algorithms are referred to as \emph{Las Vegas algorithms} in \cite{GHK18}. Using a recent breakthrough of \cite{GHK18, RG20}, such Las Vegas algorithms can be automatically derandomized with an additional $\polylog(n)$ factor in the runtime. For brevity, we will not explicitly show that the algorithms are Las Vegas, and do not discuss any further issues of determinization henceforth.

\subsection{Technical Summary: Distributed Augmentation}
The results for forest-decomposition in multigraphs are based on augmenting paths, where we color one {\it uncolored} edge and possibly change some of the colored edges while maintaining solution feasibility. Augmentation approaches have been  used for many combinatorial constructions, such as coloring and matching.  The forest-decomposition algorithm of Gabow and Westermann \cite{GW92} also follows this approach.  Roughly speaking, it works as follows: given an uncolored edge $e_1$, we try to assign it color $c_1$. If no cycle is created, we are done. Otherwise, if it creates a cycle $C_1$, we recolor some edge $e_2$ on $C_1$ with a different color $c_2 \neq c_1$. Continuing this way gives an {\it augmenting sequence} $e_1, c_1, e_2,c_2 \ldots, e_{\ell},c_{\ell}$, such that recoloring $e_{\ell}$ in $c_{\ell}$ does not create a cycle.  This can be found using a BFS algorithm  in the centralized setting.

There are two main challenges for the \local model. First,  to get a distributed algorithm, we must color edges in parallel. Second, to get a local algorithm, we must restrict the recoloring to edges which are near the initial uncolored edge.  Note that the augmenting sequences produced by the Gabow-Westermann algorithm can be long and consecutive edges in the sequence  (e.g. $e_1$ and $e_2$) can be arbitrarily far  from each other.

\paragraph{Structural Results on Augmenting Sequences} We first show a structural result on forest decomposition: given a partial $(1+\eps)\arb$-FD (or, more generally, an LFD) in a multigraph, there is an augmenting sequence of length $O(\log n / \eps)$ where, moreover, every edge in the sequence lies in the $O(\log n / \eps)$-neighborhood of the starting uncolored edge $\einit$. This characterization may potentially lead to other algorithms for forest decompositions. We show this through a key modification to the BFS algorithm for finding an augmenting sequence. In \cite{GW92}, when assigning $e_i$ to color $c_i$ creates a cycle,  then all edges on the cycle get enqueued for the next layer; by contrast, in our algorithm, only the edges within distance $i$ of $\einit$ get enqueued. 

\paragraph{Network Decomposition and Removing Edges}  
We will parallelize the algorithm by breaking the graph into low-diameter subgraphs similar to \cite{GKM17}. However, there is a major roadblock we need to address: {\it identifying} an augmenting sequence may require checking edges distant from the uncolored edge. For example, edge $e_1$ may belong to a color-$c_1$ cycle which extends far beyond the vicinity of $e_1$.\footnote{A closely related computational model called \slocal was developed in \cite{GKM17}, where each vertex sequentially (in some order) reads its $r$-hop neighborhood for some radius $r$, and then produces its answer. If a problem has an \slocal algorithm with radius $r$, then it can be solved in $O(r \log^2 n)$-rounds in the \local model. Again, augmenting sequences need not lead to \slocal algorithms because of the need to check far-away edges.}

To sidestep this issue,  we develop a procedure \cut{} to remove edges, thereby breaking long paths and allowing augmenting sequences to be locally checkable. At the same time, we must ensure that the collection of edges removed by \cut{} (the ``left-over graph'') has arboricity $O(\eps \arb)$.  This can be viewed as an online load-balancing problem, where the load of a vertex is the number of directed neighbors which get removed. It is similar to a load-balancing problem encountered in \cite{SV19}, where paths come in an online fashion and we need to remove internal edges. Here, we encounter rooted trees instead of paths, and we need to remove edges to disconnect the root from all the leaves.

 If edges are removed independently, then the load of a vertex would be stuck at $\Omega(\log n)$ due to the concentration threshold.  To break this barrier, as in \cite{SV19}, we randomly remove edges incident to vertices with small load. We  show that throughout the algorithm, the root-leaf paths of the trees always contain many such vertices; thus, long paths are always killed with high probability.

\paragraph{Palette Partitioning for List-Coloring} The final step  is to recolor the left-over edges using an additional $O(\eps \arb)$ colors. For ordinary forest decomposition, this is nearly automatic due to our bound on the arboricity of the left-over graph. For list coloring, we must reserve a small number of back-up colors for the left-over edges. We develop two different  methods for this; the first uses the Lov\'{a}sz Local Lemma and the second uses randomized network decomposition.

\medskip
 
There are some additional connections in our work to two related graph parameters, \emph{pseudo-arboricity} and \emph{star-arboricity}. Let us summarize these next.

\subsection{Pseudo-Forest Decomposition and Low Outdegree Orientation}
There is a closely related decomposition using \emph{pseudo-forests}, which are graphs with at most one cycle in each connected component. The \emph{pseudo-arboricity} $\arb^*$ is the minimum number of pseudo-forests into which a graph can be decomposed.  A result of Hakimi \cite{hakimi} shows that pseudo-arboricity is given by an analogous formula to Nash-Williams' formula for arboricity, namely:
$$
\arb^{*}(G) = \max_{\substack{H \subseteq G \\ |V(H)| \geq 1}} \Bigg \lceil \frac{|E(H)|}{ |V(H)|} \Bigg \rceil~.
$$

In particular, as noted in \cite{PQ82}, loopless multigraphs have $\arb^* \leq \arb \leq 2 \arb^*$, and simple graphs have $\arb \leq \arb^* + 1$.

There is an equivalent, completely local, characterization of pseudo-arboricity: a \emph{$k$-orientation} of a graph is an orientation of the edges where every vertex has outdegree at most $k$. It turns out that $k = \arb^*$ is the minimum value for which such a $k$-orientation exists. In a sense,  $\arb^*$ is a more fundamental graph parameter than $\arb$, and the problems of pseudo-forest decomposition, low outdegree orientation, and maximum density subgraph are better-understood than forest decomposition. For example,  maximum density subgraph has been studied in many computational models, e.g. \cite{PQ82, Goldberg84, GGT89, Charikar00, KS09,BHNT15, EsfandiariHW15, McGregorTVV15, SJ20 ,BahmaniKV12, BahmaniGM14, GLM19 ,SarmaLNT12}. Low outdegree orientation has been studied in the centralized context in \cite{GW92, BF99, Kowalik06, GKS14, KKPS14, BB20}. 

There has been a long line of work on \local algorithms for $(1+\eps)\arb^{*}$-orientation \cite{GS17, FGK17, GHK18, Harris19, SV19b}.\footnote{For many of these works, the graph was implicitly assumed to be simple, and the algorithm provides a $(1+\eps) \arb$-orientation; since simple graphs have $\arb^* \leq \arb \leq \arb^* + 1$, this is a minor adjustment of the parameters. Also note that \cite{Harris19} claims a $(1+\eps) \arb$-orientation in multigraphs, but the algorithm actually provides a $(1+\eps) \arb^*$-orientation.}  Most recently, \cite{SV19b} gave an algorithm in $\tilde{O}(\log^2 n /\eps^2)$ rounds for $\eps \arb^* \geq 32$; this algorithm also works in the \congest model, which is a special case of the \local model where messages are restricted to $O(\log n)$ bits per round. 

Our general strategy of augmenting paths and network decompositions can also be used for low outdegree orientations. We will show the following result:

\begin{theorem}
\label{orient-cor}
For a (multi)-graph $G$ with pseudo-arboricity $\arb^*$ and $\eps \in (0,1)$, there is a \local algorithm to obtain $\lceil \arb^*(1+\eps) \rceil$-orientation in $O( \frac{\log^3 n}{\eps})$ rounds w.h.p.
\end{theorem}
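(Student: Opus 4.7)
The plan is to adapt the distributed augmentation framework that the paper develops for forest decomposition. I would start with an arbitrary orientation of $E$ (for instance, orient each edge based on endpoint IDs) and then iteratively repair vertices whose outdegree exceeds the target $k := \lceil (1+\eps)\arb^* \rceil$ by performing directed augmenting-path flips. A flip along a directed path $v_0 \to v_1 \to \cdots \to v_\ell$ reverses every edge, which decreases $\outdeg(v_0)$ by one, leaves $\outdeg(v_i)$ unchanged for $1 \leq i \leq \ell-1$, and increases $\outdeg(v_\ell)$ by one; the flip is useful precisely when $v_\ell$ has spare capacity, i.e. $\outdeg(v_\ell) < \arb^*$.

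The first step is a structural lemma analogous to the one the paper establishes for forest decomposition: whenever a vertex $v$ has $\outdeg(v) > k$, there is an augmenting directed path from $v$ to some vertex of outdegree strictly less than $\arb^*$, of length $\ell = O(\log n/\eps)$, contained in the $\ell$-hop neighborhood of $v$. I would prove it by imitating the modified layered BFS from the excerpt: in layer $i$ enqueue out-neighbors of layer-$(i-1)$ vertices only if they also lie within distance $i$ of $v$. If the search fails to reach an under-loaded vertex by depth $\ell$, the explored set $S$ satisfies $|E(S)|/|V(S)| > \arb^*$, contradicting Hakimi's density formula. The $(1+\eps)$ slack in the outdegree bound forces a $1+\Omega(\eps)$ geometric growth of the BFS per layer, which gives depth $O(\log n/\eps)$ and, crucially, linear dependence on $\eps^{-1}$.

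Next I would make the search locally checkable and parallelize it. Although the BFS is depth-bounded, certifying that a candidate path is genuinely augmenting may require scanning the entire frontier; following the paper's strategy, I would precompute a random edge set $E'$ via a \cut{}-style procedure that destroys all long directed paths in $G \setminus E'$ while keeping $E'$'s pseudo-arboricity at $O(\eps \arb^*)$. The online load-balancing analysis of \cite{SV19} adapts from monochromatic color-paths to rooted BFS trees once we cut edges preferentially at low-load vertices. I would then run a $(O(\log n), O(\log n))$ randomized network decomposition of $G \setminus E'$, process the $O(\log n)$ cluster-color classes sequentially, and inside each cluster repeatedly invoke the structural lemma, each invocation costing $O(\log n/\eps)$ rounds; the three $\log n$ factors multiply to give $O(\log^3 n/\eps)$. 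Finally, the edges of $E'$ are oriented greedily into the remaining $O(\eps \arb^*)$ outdegree slack, which succeeds because their pseudo-arboricity is exactly that small.

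The main obstacle is calibrating the \cut{} procedure so that it simultaneously (i) makes the augmenting-path search locally terminating inside each cluster, (ii) succeeds with high probability, and (iii) achieves the $O(\eps \arb^*)$ arboricity bound with only \emph{linear} dependence on $\eps^{-1}$. The linear dependence is the distinguishing feature of this theorem, and it flows directly from the $1+\Omega(\eps)$ BFS-growth rate; preserving that rate through both the \cut{} step and the network decomposition --- rather than losing powers of $\eps$ to union bounds, repeated sampling, or a cruder Chernoff concentration threshold of $\Theta(\log n)$ on vertex loads --- is where the analysis must be most delicate.
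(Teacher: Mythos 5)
Your structural lemma and its proof sketch match the paper's Lemma~3.1 precisely: bound-depth BFS, density counting via Hakimi's formula, geometric growth at rate $1+\Omega(\eps)$, depth $O(\log n/\eps)$. The network-decomposition-based parallelization is also the right idea. However, you have imported a piece of machinery that the orientation problem does not need and that the paper deliberately omits here: the \cut{} procedure.

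The crucial observation, which the paper makes explicit at the end of its Section~3, is that low-outdegree orientation is \emph{already locally checkable}: whether a vertex is overloaded is determined entirely by counting its out-edges, and any directed path of length $\ell$ from an overloaded vertex to a vertex with outdegree strictly below $\arb^*(1+\eps)$ is a valid augmenting path regardless of anything outside the $\ell$-hop ball. There is no analogue of the far-away monochromatic cycle that plagues forest decomposition. Consequently the paper needs no randomly removed edge set $E'$, no leftover subgraph, and no worry about bounding the pseudo-arboricity of discarded edges. Your concern that ``certifying that a candidate path is genuinely augmenting may require scanning the entire frontier'' does not arise. The algorithm is simply: build an $(O(\log n),O(\log n))$-network decomposition of the power graph $G^{2R}$ with $R=\Theta(\log n/\eps)$, then for each of the $O(\log n)$ color classes in sequence, let each cluster's leader read its $R$-ball and apply the local patching guaranteed by the structural lemma (Proposition~\ref{aug:cor1}); this costs $O(R\log n)$ rounds per class to simulate within clusters of weak diameter $O(R\log n)$, giving $O(\log^3 n/\eps)$ total. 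Your last paragraph, about how delicate the \cut{} calibration would need to be, is a symptom of solving a harder problem than necessary: for this theorem the linear $\eps^{-1}$-dependence flows solely from the structural lemma and requires no further probabilistic load-balancing. Drop $E'$ and the rest of your outline becomes the paper's proof.

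One smaller point: you wrote that the flip is useful when $\outdeg(v_\ell) < \arb^*$. The paper targets $\outdeg(v_\ell)<\arb^*(1+\eps)$; this slacker threshold is what makes the BFS density counting close, since if every explored vertex had outdegree at least $\arb^*(1+\eps)$ the ball's edge density would exceed $\arb^*$. Demanding $\outdeg(v_\ell)<\arb^*$ is too strong and would break the geometric growth argument.
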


Note in particular the linear dependency on $1/\eps$. For example, if $\arb^* = \sqrt{n}$, we can get an $(\arb^* + 1)$-orientation in $\tilde{O}(\sqrt{n})$ rounds, while previous  results would require $\Omega(n)$ rounds. Notably, the $1/\eps^2$ factor in \cite{SV19b} comes from the number of iterations needed to solve the LP. In addition to being a notable result on its own merits, Theorem~\ref{orient-cor} provides a simple warm-up exercise for our more advanced forest-decomposition algorithms.

\subsection{Star-Arboricity and List-Star-Arboricity for Simple Graphs}
The \emph{star-arboricity} $\arb_{\text{star}}$ is the minimum number of star-forests into which the edges of a graph can be partitioned.  This has been studied in combinatorics \cite{algor1989star,AOKI1990115,AMR92}. We analogously define $\arb_{\text{star}}^{\text{list}}$ as the smallest value $k$ such that an LSFD exists whenever each edge has a palette of size $k$ (this has not been studied before, to our knowledge).   For general loopless multigraphs,  it can be shown that $\arb_{\text{star}} \leq 2 \arb^*$ (see Proposition~\ref{star-pseudo-cycle}) and $\arb_{\text{star}}^{\text{list}} \leq 4\arb^*$ (see  Theorem~\ref{degen-thm}). In simple graphs, Alon, McDiarmid \& Reed \cite{AMR92} showed that $\arb_{\text{star}} \leq \arb + O(\log \Delta)$. 

Our algorithms for star-forest-decomposition in simple graphs come from a strengthened version of the construction of \cite{AMR92}.   To briefly summarize, consider some fixed $k$-orientation of the graph. For each color $c$, mark each vertex as a $c$-center independently with some probability $p$. Then finding a star-forest decomposition reduces to finding a perfect matching, for each vertex $u$, between the colors $c$ for which $u$ is a not a $c$-center, and the out-neighbors of $u$ which are $c$-centers.

In the general LSFD case, we show that these perfect matchings exist, and can be found efficiently, when $\eps k = \Omega(\log \Delta)$. This is based on more advanced analysis of concentration bounds for the number of $c$-leaf neighbors.  In the ordinary SFD case, instead of perfect matchings, we obtain near-perfect matchings, leaving $\eps k$ unmatched edges per vertex. These left-over edges can later be  decomposed into $2\eps k$ stars. This gives a bound $\eps k = \Omega(\sqrt{\log \Delta} + \log \arb)$.

 In addition to being powerful algorithmic results, these also give two new combinatorial bounds:
\begin{corollary}
\label{simple-res1}
A simple graph has  $\arb_{\text{star}} \leq \arb + O({\log \arb} + \sqrt{\log \Delta})$ and $\arb_{\text{star}}^{\text{list}} \leq \arb + O(\log \Delta)$.
\end{corollary}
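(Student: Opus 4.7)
The plan is to derive both bounds from a common probabilistic construction that refines the Alon--McDiarmid--Reed scheme; the existential statements follow from the algorithmic results summarized in the last two rows of Table~\ref{fi1} (SFD and LSFD in simple graphs). First I would fix an orientation of $G$ in which every vertex has out-degree at most $k$, where $k$ is the target number of star-forests. Since simple graphs satisfy $\arb^* \leq \arb$, such a $k$-orientation exists for $k = \arb$ by Hakimi's formula. For each color $c$ (from $[k]$ in the SFD case or from the palette set $\mathcal{C}$ in the LSFD case) and each vertex $u$, independently label $u$ a $c$-center with probability $p$, and a $c$-leaf otherwise. In any star-forest decomposition respecting this labeling, each vertex $u$ must assign, for every color $c$ in which $u$ is a leaf, exactly one incident edge to color $c$ whose other endpoint is a $c$-center. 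Routing this demand through $u$'s out-edges reduces the task at each $u$ to a bipartite matching problem: match the leaf-color set $L_u$ to $u$'s out-neighbors, with each matched pair $(c, v)$ satisfying that $v$ is a $c$-center.

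The next step is to verify Hall's condition at every vertex with high probability. Writing $d_u \leq k$ for the out-degree of $u$, one has $\E[|L_u|] \approx (1-p)k$ and, for each $c$, the number of $c$-centers among $u$'s out-neighbors has mean $p \cdot d_u$. Choosing $p \approx 1/2$ and using concentration for these counts, I would argue that for every $T \subseteq L_u$ the union of center out-neighborhoods has size at least $|T|$. The effective slack is $\Theta(\eps k)$, so any failure of Hall's condition requires a deviation of order $\eps k$.

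For the LSFD case we need exact perfect matchings at every vertex, so Hall's condition must hold strictly. A union bound over the subsets $T$ and the $\Delta$ neighbors, combined with a Chernoff tail bound, yields the threshold $\eps k = \Omega(\log \Delta)$ and hence $\arb_{\text{star}}^{\text{list}} \leq (1+\eps)\arb = \arb + O(\log \Delta)$. For the sharper SFD bound, I would relax to near-perfect matchings, permitting up to $s = \Theta(\eps k)$ unmatched leaf-colors at each vertex; the residual edges then form a graph of maximum out-degree $O(\eps k)$, which can be decomposed into $O(\eps k)$ additional star-forests by a simple greedy post-processing. The relaxation weakens the concentration requirement substantially: instead of bounding a deviation simultaneously for every $T$, one bounds only a single large-deviation event per vertex. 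A Bernstein-type inequality then suffices, yielding $\eps k = \Omega(\sqrt{\log \Delta} + \log \arb)$, where $\sqrt{\log \Delta}$ comes from the per-vertex tail bound and $\log \arb$ absorbs a union bound over the $k = O(\arb)$ colors.

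The main obstacle is the concentration analysis that replaces a naive Chernoff bound with a Bernstein/Kim--Vu-style inequality. The random variables in question are not sums of independent indicators but depend on the joint labels of many neighbors across many colors; controlling the variance without destroying independence across colors is where the technical work concentrates. Once this bound is in place, choosing $\eps$ so that $\eps \arb$ matches each of the two thresholds produces $\arb_{\text{star}} \leq \arb + O(\log \arb + \sqrt{\log \Delta})$ and $\arb_{\text{star}}^{\text{list}} \leq \arb + O(\log \Delta)$ respectively.
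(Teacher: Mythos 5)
Your high-level plan is the same one the paper uses for Theorem~\ref{agab1} and Proposition~\ref{thm1a}: fix a low-outdegree orientation, randomly label each vertex as a $c$-center or $c$-leaf, reduce to a per-vertex bipartite matching between the leaf-colors of $v$ and the out-neighbors $A(v)$, verify Hall's condition with probability at least $1 - \Delta^{-10}$ so the Lov\'asz Local Lemma applies, and (for SFD) recolor the small residual.

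However, the choice $p \approx 1/2$ for the center probability is a genuine error, not a free parameter. The matching at $v$ must (nearly) saturate the $|A(v)| \approx \arb$ out-neighbors, and its size is bounded above by $|L_v|$, the number of leaf-colors at $v$. With center probability $p$ and $|\mathcal C| = (1+\eps)\arb$ colors, $\E|L_v| = (1-p)(1+\eps)\arb$; this must be at least $\arb$, forcing $p \lesssim \eps$. With $p \approx 1/2$ the matching can only cover about half the out-edges and the residual has outdegree $\Theta(\arb)$, giving $\arb_{\text{star}} \leq 2\arb + o(\arb)$ rather than $\arb + O(\log \arb + \sqrt{\log \Delta})$. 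The paper indeed uses a small center probability: for LSFD (Lemma~\ref{sclem2}), each color is excluded from $C_u$ independently with probability $\eps$; for SFD (Lemma~\ref{sclem1}), $C_u$ is a uniformly random $\arb$-element subset of $[t]$, giving center probability $\eps/(1+\eps)$, which also fixes $|C_u| = \arb$ exactly and removes the extra variance.

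Beyond that, the concentration analysis is more delicate than your sketch suggests, though your intuition for where the two terms come from is in the right neighborhood. For the LSFD case the random variables $Y_c$ (color $c$ reachable through a set $X$ of out-neighbors) have non-uniform success probabilities governed by the counts $z_c$, so the paper cannot apply a single Chernoff bound directly; Lemma~\ref{sclem2} instead controls $\E[(1-\theta)^Y]$ via a secant-line convexity estimate and then optimizes $\theta$. For the SFD case the paper does not invoke a Bernstein or Kim--Vu inequality; Lemma~\ref{sclem1} directly bounds the probability that some pair $(X,Y)$ with $|X|+|Y| = \arb(1+2\eps)$ witnesses a Hall deficiency, and the union over $X \subseteq A(v)$ and $Y \subseteq C_v$ contributes the $\arb^{O(x)}$ factor that is absorbed by the $\log \arb$ term in $\eps\arb$, while requiring the exponent $x \gtrsim \eps\arb$ to beat $\Delta^{O(1)}$ gives the $\sqrt{\log\Delta}$ term.
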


For lower bounds, \cite{AMR92} showed that there are simple graphs with $\arb_{\text{star}} = 2 \arb$ and $\Delta = 2^{\Theta(\arb^2)}$, while  \cite{algor1989star} showed that there are simple graphs where every vertex has degree exactly $d = 2 \arb$ and where $\arb_{\text{star}} \geq \arb + \Omega(\log \arb)$.   These two lower bounds  show that the dependence of $\arb_{\text{star}}$ on $\arb$ and $\Delta$ are nearly optimal in Corollary~\ref{simple-res1}.   In particular, the term $\log \arb$ cannot be replaced by a function $o(\log \arb)$ and the term $\sqrt{\log \Delta}$ cannot be replaced by a function  $o(\sqrt{\log \Delta})$.

%\begin{corollary}
%\label{star-arb-cor}
%A (multi-)graph $G$ has $\arb_{\text{star}} \leq 2 \arb$ and $\arb_{\text{star}}^{\text{list}} \leq 4 \arb - 2$.
%If $G$ is simple, then also $\arb_{\text{star}} \leq \arb + O( \sqrt{\log \Delta} + \log\arb )$ and $\arb_{\text{star}}^{\text{list}} \leq \arb + O(\log \Delta )$.
%\end{corollary}

\subsection{Preliminaries}
\label{prelim-sec}
Our algorithms will frequently use global parameters such as $\eps, \arb, \arb^*, m, n,  \Delta$. As usual in distributed algorithms, we always suppose that we are given some globally-known upper bounds on such values as part of the input;  when we write $\arb, n$ etc. we are technically referring to input values $\hat \arb, \hat n$ etc. which are upper bounds on them. Almost all of our results become vacuous if $\eps < 1/n$ (since, in the \local model, we can simply read in the entire graph in $O(n)$ rounds), so we assume throughout that $\eps \in (1/n, 1/2)$.

We define the $r$-neighborhood of a vertex $v$, denoted $N^r(v)$, to be the set of vertices within distance $r$ of $v$. We likewise write $N^r(e)$ for an edge $e$ and $N^r(X)$ for a set $X$ of vertices or edges. For any vertex set $X$, we define $E(X)$ to be the set of induced edges on $X$. We define the power-graph $G^r$ to be the graph on vertex set $V$ and with an edge $uv$ if $u, v$ have distance at most $r$ in $G$. Note that, in the \local model, $G^r$ can be simulated in $O(r)$ rounds of $G$.

For any integer $t \geq 0$, we define $[t] = \{1, \dots, t \}$. We write $A = B \sqcup C$ for a disjoint union, i.e. $A = B \cup C$ and $B \cap C = \emptyset$.

\paragraph{Concentration bounds} At several places, we refer to Chernoff bounds on sums of random variables. To simplify formulas, we define $F_+(\mu, t) = \bigl( \frac{e^{\delta}}{(1+\delta)^{\delta}} \bigr)^{\mu}$ where $\delta = t/\mu - 1$, i.e. the upper bound on the probability that a Binomial random variable with mean $\mu$ takes a value as large as $t$.   Some well-known bounds are $F_+(\mu,  t) \leq (e \mu/t)^t$ for any value $t \geq 0$, and $F_+(\mu, \mu (1+\delta)) \leq e^{-\mu \delta^2/3}$ for $\delta \in [0,1]$.  Chernoff bounds also apply to certain types of negatively-correlated random variables; for instance, we have the following standard result  (see e.g.~\cite{PS97}):
\begin{lemma}
\label{chernoff-ext}
Suppose that $X_1, \dots, X_k$ are Bernoulli random variables and for every $S \subseteq [k]$ it holds that $\Pr(\bigwedge_{i \in S} X_i = 1) \leq q^{|S|}$ for some parameter $q \in [0,1]$. Then, for any $t \geq 0$, we have $\Pr( \sum_{i=1}^k X_i \geq t ) \leq F_+(k q, t)$.
\end{lemma}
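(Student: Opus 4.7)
The plan is to follow the standard exponential-moment (Chernoff) derivation, but to replace the usual independence step by a combinatorial expansion that pulls out precisely the joint probabilities $\Pr(\bigwedge_{i\in S} X_i = 1)$ which the hypothesis controls.

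First, for any $\lambda \geq 0$, I will apply Markov's inequality to $e^{\lambda \sum_i X_i}$ to get
\[
\Pr\Bigl(\sum_i X_i \geq t\Bigr) \;\leq\; e^{-\lambda t}\,\E\Bigl[\prod_{i=1}^k e^{\lambda X_i}\Bigr].
\]
Since each $X_i \in \{0,1\}$, the identity $e^{\lambda X_i} = 1 + (e^\lambda - 1) X_i$ holds pointwise. Expanding the product and taking expectations gives
\[
\E\Bigl[\prod_{i=1}^k e^{\lambda X_i}\Bigr] \;=\; \sum_{S \subseteq [k]} (e^\lambda - 1)^{|S|}\, \Pr\Bigl(\bigwedge_{i \in S} X_i = 1\Bigr).
\]
This is exactly where the hypothesis is used: each term is upper bounded by $(e^\lambda - 1)^{|S|} q^{|S|}$, and summing over $S$ via the binomial theorem yields
\[
\E\Bigl[\prod_{i=1}^k e^{\lambda X_i}\Bigr] \;\leq\; \bigl(1 + (e^\lambda - 1) q\bigr)^k \;\leq\; e^{(e^\lambda - 1) q k} \;=\; e^{\mu(e^\lambda - 1)},
\]
where I set $\mu = qk$ and used $1+x \leq e^x$. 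Note this is the very same moment-generating-function bound one obtains for an honest Binomial$(k,q)$, so we have effectively reduced to the independent case.

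Combining, for every $\lambda \geq 0$,
\[
\Pr\Bigl(\sum_i X_i \geq t\Bigr) \;\leq\; \exp\bigl(-\lambda t + \mu(e^\lambda - 1)\bigr).
\]
If $t \leq \mu$ the bound $F_+(\mu,t) \geq 1$ is trivial, so I can assume $t > \mu$; then optimizing by setting $e^{\lambda} = t/\mu$ (with $\delta = t/\mu - 1 > 0$) gives exactly the quantity $F_+(\mu,t)$ as defined in the paper, completing the proof.

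The derivation is almost entirely routine; the only non-obvious step is the combinatorial expansion of $\prod_i (1 + (e^\lambda - 1) X_i)$, which is what makes the hypothesis on upper bounds of conjunctions suffice (no lower bounds, no true negative correlation, just one-sided control on $\Pr(\bigwedge_{i\in S} X_i = 1)$). I expect no real obstacle.
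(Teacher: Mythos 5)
The paper does not prove this lemma itself; it is cited as a standard result from [PS97], so there is no in-paper proof to compare against. Your derivation is the standard one and is correct in substance: the multilinear expansion
\[
\prod_{i=1}^k \bigl(1 + (e^\lambda - 1) X_i\bigr) \;=\; \sum_{S \subseteq [k]} (e^\lambda - 1)^{|S|} \prod_{i\in S} X_i
\]
is exactly what lets the one-sided hypothesis $\Pr(\bigwedge_{i\in S} X_i = 1) \leq q^{|S|}$ play the role of independence, and after the binomial-theorem collapse the argument is the usual Chernoff optimization. One small inaccuracy: the parenthetical claim that $F_+(\mu,t) \geq 1$ whenever $t \leq \mu$ is not correct under the paper's closed-form definition (for $\delta = t/\mu - 1 \in (-1,0)$ the quantity is strictly less than $1$), so that regime is not actually handled by the trivial bound. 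This does not affect the result in any place the paper uses it, since the lemma is always invoked with $t$ strictly larger than $kq$, precisely where your choice $e^\lambda = t/(kq) \geq 1$ is legitimate; but if you want the statement to hold verbatim for all $t \geq 0$ you would need to restrict to $t \geq kq$ or redefine $F_+$ to be $1$ below the mean.
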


\paragraph{Lov\'{a}sz Local Lemma (LLL)} The LLL is a general principle in probability theory which states that for  a collection of ``bad'' events  $\mathcal B = \{ B_1, \dots, B_t \}$ in a probability space, where each event has low probability and is independent of most of the other events, there is a positive probability that none of the events $B_i$ occur. It often appears in the context of graph theory and distributed algorithms, wherein each bad-event is some locally-checkable property on the vertices. 

We will use a randomized \local algorithm of \cite{chung2017distributed}  to determine values for the variables to avoid the bad-events. This algorithm runs in $O(\log n)$ rounds under the criterion $e p d^2 \leq 1 - \Omega(1)$, where   $p$ is the maximum probability of any bad-event and $d$ is the maximum number of bad-events $B_j$ dependent with any given $B_i$,    Note that this is stronger than the general symmetric LLL criterion, which requires merely $e p d \leq 1$.

\paragraph{Network Decomposition} A $(D,\chi)$-network decomposition is a partition of the vertices into $\chi$ classes such that every connected component in every class has strong diameter at most $D$. Each connected component within each class is called a  \emph{cluster}. An $( O(\log n), O(\log n) ) $-network decomposition can be obtained in $O(\log^2 n)$ rounds by randomized \local algorithms \cite{LinialS93, ABCP96, EN16}.

We also consider a related notion of \emph{$(D, \beta)$-stochastic network decomposition}, which is a randomized procedure to select an edge set $L \subseteq E$ such that (i) the connected components of the graph $G' = (V, L)$ have strong diameter at most $D$, and (ii) any given edge goes into $L$ with probability at least $1 - \beta$. There is an algorithm of \cite{miller2013parallel} to produce a $(O(\frac{\log n}{\beta}), \beta)$-stochastic network decomposition in $O( \frac{\log n}{\beta})$ rounds of the \local model.

%\paragraph{Related Work}

\subsection{Basic Forest Decomposition Algorithms}
\label{crude-sec}
We list here some simpler algorithms for forest decomposition.  These will be important building blocks for later and may also be of independent combinatorial and algorithmic interest.  
\begin{theorem}
\label{ace1prop}
Let $t = \lfloor (2 + \eps) \arb^* \rfloor$ for $\eps \in (0,1)$. There are deterministic $O( \frac{\log n}{\eps})$-round algorithms to obtain the following decompositions of $G$:
\begin{itemize}
\item A partition of the vertices into $k = O(\frac{\log n}{\eps})$ classes $H_1, \dots, H_k$, such that each vertex $v \in H_i$ has at most $t$ neighbors in $H_i \cup \dots \cup H_k$.
\item An orientation of the edges such that the resulting directed graph is acyclic and each vertex has outdegree at most $t$. (We refer to this as an \emph{acyclic $t$-orientation}).
\item A $3t$-star-forest-decomposition.
\item A list-forest-decomposition when every edge has a palette of size $t$.
\end{itemize}
\end{theorem}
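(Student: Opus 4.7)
My overall plan follows the classical $H$-partition approach of \cite{BE10}: a single combinatorial object will simultaneously yield all four bullets.

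\textbf{$H$-partition and acyclic $t$-orientation.} Set $G_1 := G$ and, in round $i$, let $H_i$ be the set of vertices of $G_i$ with $\deg_{G_i}(v) \leq t$, and put $G_{i+1} := G_i \setminus H_i$. Since pseudo-arboricity is monotone under taking subgraphs, Hakimi's formula forces $|E(G_i)| \leq \arb^* \cdot |V(G_i)|$, so the average degree in $G_i$ is at most $2\arb^*$. A Markov-style count then shows that more than a $\frac{\eps}{2+\eps}$-fraction of vertices of $G_i$ have current degree at most $t = \lfloor (2+\eps)\arb^* \rfloor$, so $|V(G_{i+1})| < \frac{2}{2+\eps}|V(G_i)|$, and the process terminates after $k = O(\log n/\eps)$ rounds; each round uses only one exchange of current degrees between neighbors. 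This gives the first bullet. For the second bullet, I orient every edge $uv$ toward the endpoint with the larger pair $(\text{class}, \text{ID})$: the first-bullet guarantee forces outdegree at most $t$, and acyclicity is immediate because the orientation respects a total order.

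\textbf{$t$-LFD.} On top of the acyclic $t$-orientation, each vertex $v$ would independently pick distinct colors for its outgoing edges, with the color of each outgoing $e$ drawn from its palette $Q(e)$. Since $v$ has at most $t$ outgoing edges and $|Q(e)| \geq t$ for each of them, Hall's theorem (applied locally at $v$) guarantees a system of distinct representatives, which $v$ computes in one extra round. In the resulting coloring every vertex has outdegree at most $1$ in each color class, and global acyclicity then forces each color class to be a forest. The standard $t$-FD is the special case $Q(e) = \{1,\dots,t\}$.

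\textbf{$3t$-SFD.} Building on the $t$-FD, each color class $F_c$ is a rooted forest whose roots are precisely the vertices with no outgoing color-$c$ edge. The key structural observation is that the partition class strictly increases along every outgoing edge, so the depth of every vertex in $F_c$ is at most $k = O(\log n/\eps)$. I would compute the depth parity of every vertex in every $F_c$ simultaneously by propagating the root label outward in $O(k)$ rounds, then split each color $c$ into two sub-colors according to the depth parity of each edge's child endpoint. A short bipartite-coloring check shows each sub-color class is a star-forest, producing in fact a $2t$-SFD (and hence a $3t$-SFD) in $O(\log n/\eps)$ rounds total. The most delicate step is precisely this one: naively rooting each tree and running BFS inside it would take time proportional to the tree's diameter, which is unbounded a priori; the strict class-monotonicity of the acyclic orientation is what caps the effective depth at $O(\log n/\eps)$ and keeps the SFD inside the target budget.
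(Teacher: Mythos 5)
Your handling of the first, second, and fourth bullets matches the paper's argument; in particular, for the $t$-LFD, the paper makes the same observation that each vertex can greedily (or via Hall) assign distinct palette colors to its at most $t$ out-edges, and that acyclicity of the orientation rules out monochromatic cycles.

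The $3t$-SFD step, however, has a genuine gap. You claim that ``the partition class strictly increases along every outgoing edge,'' and deduce that each tree in $F_c$ has depth at most $k = O(\log n/\eps)$. This is false: the orientation breaks intra-class ties by vertex ID, so a directed chain can remain inside a single class $H_i$ for arbitrarily many steps. Concretely, take a simple path $v_1 - v_2 - \cdots - v_n$ with IDs increasing along the path. Here $\arb^* = 1$, every vertex has degree $\leq 2 \leq t$, so $H_1 = V$; every edge is oriented toward the higher-ID endpoint, and the $t$-FD puts the whole path into one color, producing a rooted tree of depth $n-1$. Thus the BFS you propose for computing depth parity would take $\Omega(n)$ rounds, not $O(\log n/\eps)$. (More generally, a tree is $2$-colorable but a deterministic $O(\log n/\eps)$-round $2$-coloring of an arbitrary-diameter tree is impossible, which is exactly why the paper does not attempt a $2t$-SFD.) The paper's fix is to apply the deterministic Cole--Vishkin rooted-tree $3$-coloring, which runs in $O(\log^* n)$ rounds \emph{independently of tree depth}, then assign each edge the color of its parent; each of the $t$ forests then splits into $3$ star-forests, yielding the stated $3t$-SFD within the round budget. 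Your proposal needs to be amended to use a depth-independent vertex-coloring procedure on each rooted tree rather than depth parity.
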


\begin{theorem}
\label{lem:Hpartition3}
If every edge has a palette of size $\lfloor (4+\eps) \arb^* \rfloor$ for $\eps \in (0,1)$, then an LSFD can be computed in $\min \bigl \{ O(\frac{\log^3 n}{\eps}), \tilde O( \frac{\log n \log^*m}{\eps}) \bigr \}$ rounds w.h.p.
\end{theorem}

\begin{proposition} 
\label{decompose-prop3}
Suppose we are given some $k$-FD of the multigraph $G$, of arbitrary diameter. For any $\eps > 0$, there is an $O(\frac{\log n}{\eps})$-round algorithm to compute a $(k+\lceil \eps \arb \rceil)$-FD of diameter $D \leq O( \frac{\log n}{\eps} )$ w.h.p. If $\arb \geq \Omega \bigl( \min \{ \frac{\log n}{\eps}, \frac{\log \Delta}{\eps^2} \} \bigr)$, we can get $D \leq O( \frac{1}{ \eps})$ w.h.p. with the same runtime.
\end{proposition}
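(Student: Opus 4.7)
The plan is to apply a stochastic network decomposition separately to each forest of the given $k$-FD, retain the original color $i$ on the edges that stay inside each cluster, and recolor only the cut edges using a fresh palette of at most $\lceil\eps\arb\rceil$ colors. Concretely, for each forest $F_i$ I would run in parallel a $(D,\beta)$-stochastic network decomposition on the subgraph $(V,F_i)$ with $\beta = c\eps$ for a sufficiently small constant $c$ and $D = O(\log n/\eps)$; by the primitive recalled in Section~\ref{prelim-sec}, all $k$ decompositions finish in $O(\log n/\eps)$ rounds. Let $L_i \subseteq F_i$ be the kept edges and $R_i = F_i \setminus L_i$ the cut edges, so that every tree of $L_i$ has diameter at most $D$ and $\Pr[e \in R_i] \leq \beta$ for each $e \in F_i$. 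Keeping color $i$ on $L_i$ yields a coloring of $\bigcup_i L_i$ using the original $k$ colors in which every color class is a forest whose trees have diameter at most $D$.

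The technical core is to show that $R = \bigcup_i R_i$ has arboricity at most $\lceil\eps\arb\rceil$ w.h.p. By Nash-Williams this reduces to the bound $|E(H)\cap R| \leq \lceil\eps\arb\rceil(|V(H)|-1)$ for every subgraph $H \subseteq G$ with $|V(H)| \geq 2$, and since the $k$ per-forest decompositions use independent randomness,
\[
E\bigl[|E(H)\cap R|\bigr] = \sum_i E\bigl[|E(H)\cap R_i|\bigr] \leq \beta|E(H)| \leq \beta\arb\bigl(|V(H)|-1\bigr).
\]
The per-forest scheme is essential in multigraphs: the up to $\arb$ parallel edges between any two vertices belong to $\arb$ distinct forests, so their cut events are independent, which would fail if MPX were applied to $G$ directly (all parallel edges would be cut or kept together, inflating $\arb(R)$ for two-vertex subgraphs). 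A Chernoff-type bound together with a union bound restricted to the polynomially many maximal dense subgraphs—those that actually determine the arboricity, by a standard matroidal argument—lifts the expectation bound to a high-probability statement uniformly over $H$, provided $c$ is chosen small enough.

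Once $\arb(R) \leq \lceil\eps\arb\rceil$ is established (with appropriate scaling of $c$), I would color the edges of $R$ with a fresh palette of $\lceil\eps\arb\rceil$ colors by invoking Theorem~\ref{ace1prop}(2) on $R$ with constant parameter $\eps'=1$: this produces an acyclic orientation of $R$ of depth $O(\log n)$ in $O(\log n)$ rounds, after which greedily assigning each vertex's out-edges distinct colors from the new palette gives a list-forest-decomposition whose color classes are in-forests of depth $O(\log n) \leq D$. Combined with the coloring of $L = \bigcup_i L_i$, this yields the desired $(k+\lceil\eps\arb\rceil)$-FD of diameter $D$ in total $O(\log n/\eps)$ rounds. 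For the second part of the statement, when $\arb = \Omega\bigl(\min\{\log n/\eps,\log\Delta/\eps^2\}\bigr)$, I would replace the MPX step by a constant-radius ball-carving scheme that achieves cluster diameter $O(1/\eps)$ at the cost of a larger per-edge cut probability; the hypothesis on $\arb$ is exactly what allows the additive $\log n$ or $\log\Delta$ slack in the concentration bound of the previous paragraph to be absorbed into $\eps\arb$.

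The main obstacle is the concentration/union-bound argument for $\arb(R)$: the exponentially many subgraphs $H$ preclude a naive union bound, and one must either restrict attention to maximal dense subgraphs via a matroidal argument or bound $\arb^*(R)$ by a per-vertex charging and then appeal to $\arb \leq 2\arb^*+1$. The delicate regime is $\eps\arb = O(1)$, where the target essentially requires $R$ to be a forest with high probability, and this is precisely where the per-forest decorrelation of parallel edges becomes indispensable.
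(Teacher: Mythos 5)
Your scheme is genuinely different from the paper's, and its crucial step --- bounding $\arb(R)$ --- has a real gap. A per-edge cut probability $\beta = c\eps$ does not give $\arb(R) \leq \lceil\eps\arb\rceil$ w.h.p.\ when $\eps\arb = O(1)$. Consider a two-vertex subgraph with $\arb$ parallel edges, one per forest; with per-forest independence the number of these that land in $R$ is essentially binomial with mean $c\eps\arb = O(c)$, so the probability that at least two of them are cut is $\Theta(c^2)$, a constant, and there are $\Theta(n^2)$ such pairs to union-bound over. Shrinking $c$ to $1/\poly(n)$ to kill this probability wrecks the diameter bound $D = O(\log n / (c\eps))$. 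Your proposed repair --- a union bound over ``polynomially many maximal dense subgraphs'' --- does not apply, since the subgraphs certifying $\arb(R)$ live inside the \emph{random} set $R$, not inside $G$, so there is no fixed polynomial family to restrict to. A naive union bound over all vertex subsets does establish $\arb(R) = O(\eps\arb)$ w.h.p., but only when $\eps\arb \geq \Omega(\log n)$; the proposition carries no such restriction.

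The paper's Algorithm~\ref{alg:reduce_diameter} sidesteps this by making the per-vertex cut budget \emph{deterministic}: under a fixed acyclic $3\arb$-orientation, each vertex deletes at most $k' = \lceil\eps\arb/20\rceil$ of its out-edges, so the set $E_1'$ of recolored edges has pseudo-arboricity at most $k'$ with probability one, with no concentration-over-subgraphs argument required. The residual worry that some long monochromatic paths survive the random deletion is handled separately via a cleanup set $E_1''$ into which any given edge falls with probability $1/\poly(n)$; thus $|E_1''|$ (and hence $\arb^*(E_1'')$) is bounded by a simple Markov argument alone. Two further points. First, your claim that the acyclic orientation from Theorem~\ref{ace1prop}(2) has depth $O(\log n)$, so that the recolored forests have diameter $O(\log n)$, is unjustified --- within a layer $H_i$ the orientation breaks ties by vertex ID, so directed paths can be long; the paper instead routes $E_1''$ through the diameter-$2$ star-forest decomposition of Theorem~\ref{ace1prop}(3). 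Second, the ``constant-radius ball-carving'' you propose for the $D = O(1/\eps)$ case imposes a constant per-edge cut probability and overwhelms the $\eps\arb$ budget; the paper's Proposition~\ref{decompose-prop3a2} instead chooses a random depth offset in each rooted tree and cuts periodically at depth $O(1/\eps)$, concentrating via Chernoff when $\eps\arb \geq \Omega(\log n)$ or via the Lov\'{a}sz Local Lemma when $\eps^2\arb \geq \Omega(\log\Delta)$, which is precisely where the hypothesis on $\arb$ enters.
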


The first two results of Theorem~\ref{ace1prop} were shown (with slightly different terminology) in the $H$-partition algorithm of \cite{BE10}; we also provide full proofs in Appendix~\ref{ace1propapp}.  The proof of Theorem~\ref{lem:Hpartition3} appears in Appendix~\ref{Hpart3-sec}. The proof of Proposition~\ref{decompose-prop3}  appears in Appendix~\ref{dp3-app}, along with a more general result we will need later for reducing the diameter of list forest decompositions.

\section{Algorithm for Low Outdegree Orientation}
\label{sec:outdegree}
We now discuss a \local algorithm for $(1+\eps)\arb^{*}$-orientation, based on augmenting sequences and network decomposition.  Consider a multigraph $G$ of pseudo-arboricity $\arb^*$.  \textbf{For the purposes of this section only, we allow $G$ to contain loops.} Following \cite{GS17}, we can ``augment'' a given edge-orientation $\psi$ by reversing the edges in some directed path.  We begin with the following observation, which is essentially a reformulation of results of \cite{hakimi, GS17} with more careful counting of parameters.
\begin{lemma}
Let $\eps \in (0,1)$. For any edge-orientation $\psi$ and vertex $v$, there is a directed path of length $O( \frac{\log n}{\eps} )$ from $v$ to a vertex with outdegree strictly less than $a^*(1+\eps)$.
\end{lemma}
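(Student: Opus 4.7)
The plan is a standard BFS/ball-growing argument, essentially a quantified version of the Hakimi reachability lemma, with care taken to track the $\eps$ factor.

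Fix $R = \Theta(\log n/\eps)$ and, for contradiction, suppose every vertex reachable from $v$ by a directed path of length at most $R$ has outdegree $\geq a^*(1+\eps)$. Define $B_r$ to be the set of vertices reachable from $v$ by a directed path of length at most $r$, and let $b_r = |B_r|$, so $b_0 = 1$. The key observation is that every out-edge incident to a vertex of $B_{r-1}$ has its head in $B_r$ (by definition of reachability). Consequently the induced edge set $E(B_r)$ contains all out-edges of $B_{r-1}$, and since (by assumption) each vertex of $B_{r-1}$ has outdegree at least $a^*(1+\eps)$, we get
\[
|E(B_r)| \;\geq\; a^*(1+\eps)\, b_{r-1}.
\]

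On the other hand, Hakimi's formula for pseudo-arboricity gives $|E(H)| \leq a^*\, |V(H)|$ for every subgraph $H$ (loops included; a single-vertex loop-bearing subgraph already forces $a^* \geq 1$), and applying this to $H = G[B_r]$ yields
\[
|E(B_r)| \;\leq\; a^*\, b_r.
\]
Combining the two inequalities gives the geometric growth $b_r \geq (1+\eps)\, b_{r-1}$, hence $b_R \geq (1+\eps)^R$. Choosing $R = \lceil \ln(n+1)/\ln(1+\eps)\rceil = O(\log n/\eps)$ (using $\ln(1+\eps) \geq \eps/2$ for $\eps \in (0,1)$) forces $b_R > n$, contradicting $B_R \subseteq V$. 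Therefore some vertex in $B_R$ must have outdegree strictly less than $a^*(1+\eps)$, and the directed path witnessing its membership in $B_R$ has length $O(\log n/\eps)$.

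The only mildly subtle points are (i) verifying that out-edges of $B_{r-1}$ really all land in $B_r$ even when they are back-edges or loops (true by definition), and (ii) checking that the real-valued bound ``outdegree $\geq a^*(1+\eps)$'' is what actually comes out of negating the conclusion — this is fine because we negate ``strictly less than'' to ``at least,'' and both the edge-counting bound and Hakimi's bound are inequalities on real numbers. I do not expect any genuine obstacle; the main thing is simply to present the ball-growth inequality and the Hakimi upper bound in a matched form so the $(1+\eps)$ factor survives cleanly.
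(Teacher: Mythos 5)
Your proof is correct and uses the same ball-growing argument as the paper: grow the directed BFS ball from $v$, count out-edges of the ball to lower-bound $|E(B_r)|$, apply Hakimi's upper bound $|E(H)| \leq a^*|V(H)|$, and conclude geometric growth of ball sizes. The only cosmetic difference is that you frame it as a contradiction with a fixed radius $R$, while the paper phrases it via minimality of $r$; the content is identical.
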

\begin{proof}
For $i \geq 0$, let $V_i$ denote the vertices at distance at most $i$ from $v$. If all vertices in $V_i$ have outdegree at least $a^*(1+\eps)$, then for each $j < i$ we can count the edge-set $E(V_{j+1})$ in two ways. First, each vertex in $V_{j}$ has outdegree at least $a^*(1+\eps)$ and these edges have both endpoints in $V_{j+1}$, so $|E(V_{j+1})| \geq |V_{j}| a^* (1+\eps)$. On the other hand, by definition of pseudo-arboricity, we have $|E(V_{j+1})| \leq |V_{j+1}| a^*$. Putting these two observations together, we see that
\[
|V_{j+1}| \geq |V_{j}| \cdot (1+\eps) \qquad \text{for $j = 0, \dots, i-1$}.
\]

Since $V_0 = \{v \}$, by telescoping products this implies that $|V_i| \geq (1+\eps)^i$. Since clearly $|V_i| \leq n$, we must have $i \leq \log_{1+\eps} n \leq O( \frac{\log n}{\eps} )$.
\end{proof}

For the purposes of our algorithm, the main significance of this result is that we can locally fix a given edge-orientation.  For a given parameter $\eps > 0$, let us say that a vertex $v$ is \emph{overloaded} with respect to a given orientation $\psi$, if the outdegree of $v$ is strictly larger than $\lceil \arb^*(1+\eps) \rceil$; otherwise, if the outdegree is at most $\lceil \arb^*(1+\eps) \rceil$, it is \emph{underloaded}.
 We summarize this as follows:
\begin{proposition}
\label{aug:cor1}
Suppose multigraph $G$ has an edge-orientation $\psi$, and let $U \subseteq V$ be an arbitrary vertex set. Then there is an edge-orientation $\psi'$ with the following properties:
\begin{itemize}
\item $\psi'$ agrees with $\psi$ outside $N^r(U)$ where $r = O( \frac{\log n}{\eps})$.
\item All vertices of $U$ are underloaded with respect to $\psi'$.
\item All vertices which are underloaded with respect to $\psi$ remain underloaded with respect to $\psi'$. 
\end{itemize}
\end{proposition}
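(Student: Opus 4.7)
The plan is to obtain $\psi'$ from $\psi$ by iteratively applying the preceding lemma, one vertex of $L$ at a time. Start with $\psi^{(0)} = \psi$, and as long as some vertex $v \in L$ is overloaded under the current orientation $\psi^{(i)}$, invoke the lemma to obtain a directed path $P = v = u_0 \to u_1 \to \cdots \to u_\ell$ of length $\ell \leq r = O(\log n/\eps)$ ending at a vertex $u_\ell$ whose outdegree is strictly less than $\arb^*(1+\eps)$. Then form $\psi^{(i+1)}$ by reversing every edge of $P$. This manipulation is well-defined because the lemma applies to any orientation, including $\psi^{(i)}$, and an underloaded endpoint always exists since the average outdegree is $m/n \leq \arb^*$.

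The first thing I would verify is the elementary bookkeeping of a path reversal: reversing $P$ decreases $\outdeg(u_0)$ by exactly $1$, increases $\outdeg(u_\ell)$ by exactly $1$, and leaves $\outdeg(u_j)$ for $0 < j < \ell$ unchanged (each interior vertex swaps one in-edge of $P$ for one out-edge and vice versa). From this I would read off the three properties required by the proposition. For property (2), each iteration strictly decreases the quantity $\sum_{w \in L} \max(0,\outdeg(w)-\lceil \arb^*(1+\eps)\rceil)$, so after finitely many iterations no overloaded vertex of $L$ remains. For property (1), every reversed edge lies on a path of length at most $r$ starting in $L$, hence inside $N^r(L)$. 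For property (3), the only vertex whose outdegree grows during a reversal is the endpoint $u_\ell$, and there the lemma gives us an integer outdegree strictly less than $\arb^*(1+\eps)$, so after the $+1$ bump the outdegree is at most $\lceil \arb^*(1+\eps)\rceil$ and the vertex stays underloaded.

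The subtle point I expect to be the main obstacle is showing that the iteration cannot cascade: a later reversal might take a previously-processed vertex $w$ (in or outside $L$) as its endpoint, and one has to rule out that $w$ re-becomes overloaded. This is exactly where the \emph{strict} inequality in the lemma is essential. Because $w$ can only be selected as an endpoint when $\outdeg(w) < \arb^*(1+\eps)$, and because $\outdeg(w)$ is an integer, we have $\outdeg(w) \leq \lceil \arb^*(1+\eps)\rceil - 1$ at the moment of selection, so after the reversal $\outdeg(w) \leq \lceil \arb^*(1+\eps)\rceil$ and $w$ remains underloaded. Combined with the fact that interior vertices on any reversed path have their outdegree preserved, this yields an invariant that every originally underloaded vertex is underloaded in every $\psi^{(i)}$, and that every vertex in $L$ processed at some earlier stage stays underloaded as well. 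Taking $\psi'$ to be the terminal orientation then satisfies all three conditions simultaneously.
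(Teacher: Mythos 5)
Your proof is correct and follows essentially the same approach as the paper: iteratively apply the preceding lemma to reverse a short augmenting path out of an arbitrary overloaded vertex of $L$, noting that each reversal decreases that vertex's outdegree by one, increases only the endpoint's outdegree, and creates no new overloaded vertex. The paper's version is more terse (it simply asserts the no-new-overload invariant and termination), whereas you supply the ceiling/integer bookkeeping behind the strict inequality and an explicit potential-function argument, but the underlying idea is identical.
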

\begin{proof}
Following \cite{GS17}, consider the following process: while some vertex of $U$ is overloaded, we choose any arbitrary such vertex $v \in U$. We then find a shortest directed path $v, v_1, \dots, v_r$ from $v$ where vertex $v_r$ has outdegree strictly less than $a^*(1+\eps)$. Next, we reverse the orientation of all edges along this path. This does not change the outdegree of the vertices $v_1, \dots, v_{r-1}$, while it decreases the outdegree of $v$ by one and increases the outdegree of $v_r$ by one.  

This process never creates a new overloaded vertex,  while decreasing the outdegree of $v$ at each step. Thus, after finite number of steps, it terminates and all the vertices in $U$ are underloaded. Also, each step of this process only modifies edges within distance $r \leq O(\frac{\log n}{\eps})$ of some vertex  of $U$. Hence, $\psi'$ agrees with $\psi$ for all other edges.
\end{proof}

We remark that this type of ``local patching'' has also been critical for  other \local algorithms, such as the $\Delta$-vertex-coloring algorithm of \cite{ghaffari2018improved} or the $(\Delta+1)$-edge-coloring algorithm of \cite{bernshteyn2020fast}. We next use network decomposition to extend this local patching into a global solution, via the following Algorithm~\ref{alg:main0}. Here, $K$ is a universal constant to be specified.

 \begin{algorithm}[H]
\caption{\textsc{Low-degree\_Orientation\_Decomposition}$(G)$}\label{alg:main0}
\begin{algorithmic}[1]\small

\State Initialize $\psi$ to be some arbitrary orientation of $G$.
\State Compute an $( O(\log n), O(\log n) )$-network decomposition in $G^{2 R}$ for $R = \lceil K \log n / \eps \rceil$.

\For{\textbf{each}  class $z$ in the network decomposition} 

		\For{each component $U$ in the class $z$ {\bf in parallel}}

			\State Modify $\psi$ so that vertices inside $U$ become underloaded,  vertices outside $N^R(U)$ are unchanged, and no new overloaded vertices are created.
			 		\EndFor
\EndFor
\end{algorithmic}
\end{algorithm}

\begin{theorem}
Algorithm~\ref{alg:main0} runs in  $O(\log^3 n / \eps)$ rounds. At the termination, the edge-orientation $\psi$ has maximum outdegree $\lceil a^*(1+\eps) \rceil$ w.h.p.
\end{theorem}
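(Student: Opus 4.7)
The proof splits naturally into a correctness argument and a runtime argument, both of which follow the standard network-decomposition paradigm once the right bookkeeping is in place.

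The plan for correctness is to show that each iteration of the outer loop preserves the invariant \emph{every vertex underloaded at the start of the iteration remains underloaded at its end}, and that each vertex becomes underloaded in the iteration corresponding to its class. The key point is that two distinct clusters $C, C'$ in the same class $z$ lie in different connected components of $G^{2R}$, so their $G$-distance is strictly greater than $2R$. By Proposition~\ref{aug:cor1}, the local patching applied to $C$ only changes $\psi$ on edges in $N^R(C)$, and similarly for $C'$; since $N^R(C) \cap N^R(C') = \emptyset$, the parallel modifications over all clusters in class $z$ do not interfere. Thus applying Proposition~\ref{aug:cor1} simultaneously in all clusters is well-defined and legitimate. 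Since the proposition guarantees (i) all vertices in $C$ become underloaded, and (ii) no previously-underloaded vertex becomes overloaded, the invariant is maintained. After all $O(\log n)$ classes are processed, every vertex has been included in some cluster and is therefore underloaded, giving outdegree at most $\lceil \arb^*(1+\eps)\rceil$ as claimed.

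The plan for the runtime is to cost the two main operations separately. Computing the $(O(\log n), O(\log n))$-network decomposition of $G^{2R}$ takes $O(\log^2 n)$ rounds in $G^{2R}$, and each round of $G^{2R}$ can be simulated in $O(R) = O(\log n / \eps)$ rounds of $G$; this gives $O(\log^3 n / \eps)$ rounds for the decomposition phase. For the processing phase, each cluster $C$ has diameter $O(\log n)$ in $G^{2R}$, hence diameter $O(R \log n) = O(\log^2 n/\eps)$ in $G$. To execute the patching procedure of Proposition~\ref{aug:cor1} inside a single cluster, the cluster aggregates the state of its $R$-neighborhood (taking $O(R + R\log n) = O(\log^2 n/\eps)$ rounds) and then a designated leader simulates the sequential augmenting-path process locally before disseminating the new orientation back to the affected edges. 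Summed over the $O(\log n)$ classes, the total cost is $O(\log^3 n / \eps)$ rounds.

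The one subtlety I would be careful about is justifying why it is legal to sequence the patching within a cluster centrally once the $R$-neighborhood has been gathered. Here Proposition~\ref{aug:cor1} is crucial: the proposition is purely existential and nonconstructive in its use of augmenting paths, so after collecting $N^R(C)$, the leader can perform the sequential augmentation entirely offline on its local copy; no further communication is needed beyond the final broadcast of updated edge orientations. Combined with the disjointness of $N^R(C)$ across clusters in the same class, this yields a clean per-class round complexity of $O(\log^2 n / \eps)$, and the overall $O(\log^3 n / \eps)$ bound follows.
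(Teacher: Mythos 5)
Your proof is correct and follows essentially the same route as the paper: both use the disjointness of $N^R(C_1)$ and $N^R(C_2)$ for same-class clusters to justify the simultaneous application of Proposition~\ref{aug:cor1}, note the preservation of underloaded vertices as the key invariant, and account for the $O(\log^3 n/\eps)$ runtime by combining the $O(R\log^2 n)$ cost of the network decomposition with the $O(R\log n)$ per-class cost of a leader gathering $N^R(C)$ and broadcasting the locally-computed repair. The only cosmetic difference is that you state the per-iteration invariant a bit more explicitly than the paper does.
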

\begin{proof}
For Line 2, we use the algorithm of \cite{EN16} to obtain the network decomposition for $G^{2 R}$ in $O( R \log^2 n)$ rounds.  Algorithm~\ref{alg:main0} processes each cluster $U$ of a given class simultaneously, and we also define $U' = N^R(U)$. From Proposition~\ref{aug:cor1}, it is possible to modify $\psi$ within $U'$ for sufficiently large $K$, such that all vertices in $U$ become underloaded, and no additional overloaded vertices are created. This can be done by having some ``leader'' vertex in each cluster $U$ read in the neighborhood $N^R(U)$ and choose a modified $\psi'$ and broadcast it to the other nodes in the cluster.
 
The distance between two clusters in the same class is at least $2 R + 1$. Moreover,  if $u,v$ are adjacent in $G^{2 R}$, their distance in $G$ is at most $2 R$. So each cluster $U$ has weak diameter at most $O(R \log n)$, and also the balls $U'_1$ and $U'_2$ must be disjoint for any two clusters  $U_1$ and $U_2$ of the same class.  Therefore, each iteration can be simulated locally in $ O( R \log n)$ rounds.  Since there are $O(\log n)$ classes, the total running time is $O(R \log^2 n) = O(\log^3 n / \eps)$.
\end{proof}

Algorithm~\ref{alg:main0} can be viewed as part of a family of algorithms based on network decomposition described in \cite{GKM17}. (In the language of \cite{GKM17},  the algorithm can be implemented in \slocal with radius $r = O(\frac{\log n}{\eps})$.) However, we describe the algorithm explicitly to keep this paper self-contained, and because we later need a more general version of Algorithm~\ref{alg:main0}.

We will use the same overall strategy for forest decomposition, but we will encounter two technical obstacles. First, we must define an appropriate notion of local patching and augmenting sequences; this will be far more complex than Proposition~\ref{aug:cor1}.  Second, and more seriously, forest-decomposition, unlike low-degree orientation, cannot be locally checked due to the possibility of long cycles. To circumvent this, we must remove edges at each step to destroy these cycles. These leftover edges will need some post-processing steps at the end.

\section{Augmenting Sequences for List-Coloring}
\label{sec:aug}
\sloppy
We now show our main structural result on augmenting sequences. Given a partial LFD $\psi$ of multigraph $G$ and an edge $e=uv$, we define $C(e, c)$ to be the unique $u$-$v$ path in the $c$-colored forest; if $u$ and $v$ are disconnected in the color-$c$ forest then we write $C(e,c) = \emptyset$.

We define an \emph{augmenting sequence}  w.r.t.~$\psi$ to be a sequence $P = (e_1, e_2, \dots,  e_{\ell}, c)$, for edges $e_i$ and color $c$,  satisfying the following four conditions:

\begin{itemize}
\item[(A1)] $e_{i} \in C(e_{i-1}, \psi(e_i)  )$ for $2 \leq i \leq \ell$.
\item[(A2)] $e_{i} \notin C(e_{j}, \psi(e_i))$ for every $i$ and $j$ such that $j < i-1$.
\item[(A3)] $C(e_{\ell}, c) = \emptyset$.
\item[(A4)] $\psi(e_{i+1}) \in Q(e_i)$ for each $i = 1, \dots, \ell-1$ and $c \in Q(e_{\ell})$.
\end{itemize}

 Recall that $Q(e)$ denotes the list of available colors for edge $e$.   We say that $\ell$ is the \emph{length} of the sequence.  We define the \emph{augmentation} $\psi'$ to be a new (partial) coloring obtained by setting $\psi'(e_i) = \psi(e_{i+1})$ for $1 \leq i \leq \ell-1$ and $\psi'(e_{\ell}) = c$, and $\psi'(e) = \psi(e)$ for all other edges $e \in E \setminus \{e_1, \ldots, e_{\ell}\}$. See Figure~\ref{aug_sequence_1}.

\begin{figure}[h]
\centering
%\hspace{-15mm}
\begin{subfigure}[b]{0.28\textwidth}
\centering
\includegraphics[scale=1]{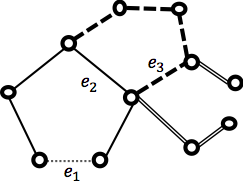}
%\caption{}
\end{subfigure}
\qquad
\begin{subfigure}[b]{0.28 \textwidth}
\centering
\includegraphics[scale=1]{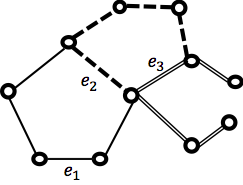}
%\caption{}
\end{subfigure}
\caption{An illustration of an augmenting sequence before and after the augmentation process. \label{aug_sequence_1}}
\end{figure}

\begin{lemma} \label{lem:augment} For an augmenting sequence $P$ w.r.t. $\psi$, the augmentation  $\psi'$ remains a partial list-forest decomposition.
 \end{lemma}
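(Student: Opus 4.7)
The plan is to verify the two defining properties of a partial list forest decomposition. Palette membership is immediate from (A5): each $e_i$ with $i<\ell$ receives $\psi(e_{i+1})\in Q(e_i)$, $e_\ell$ receives $c\in Q(e_\ell)$, and the remaining edges keep valid colors. The substantive content is showing that every color class of $\psi\oplus P$ is a forest.

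First, I would fix a color $c'$. Writing $c_i := \psi(e_i)$ for $2\le i\le\ell$ and $c_{\ell+1}:=c$, let $j_1<\cdots<j_m$ enumerate the indices in $\{2,\ldots,\ell+1\}$ with $c_{j_t}=c'$. Assuming the sequence is reduced (any ``stutter'' $e_{j+1}=e_j$ can be removed without changing $\psi\oplus P$), condition (A2) forbids $c_j=c_{j+1}$: otherwise $C(e_j,c_{j+1})=\{e_j\}$ and (A2) would force $e_{j+1}=e_j$. Hence the $j_t$'s are pairwise non-consecutive integers, and the $c'$-class of $\psi\oplus P$ is exactly
\[ F' := \bigl(F^{\psi}_{c'}\setminus\{e_{j_t}:j_t\le\ell\}\bigr)\cup\{e_{j_t-1}:1\le t\le m\}, \]
where $F^{\psi}_{c'}$ is the $c'$-forest under $\psi$. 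It remains to show that $F'$ is a forest.

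The key step is a cycle-space argument. For each $t$ with $j_t\le\ell$, (A2) gives $e_{j_t}\in P^{(t)}:=C(e_{j_t-1},c')$, so adding $e_{j_t-1}$ to $F^{\psi}_{c'}$ closes a unique fundamental cycle $\mathcal{C}_t:=P^{(t)}\cup\{e_{j_t-1}\}$; in the case $c=c'$ (so $j_m=\ell+1$), (A4) forces $e_\ell=e_{j_m-1}$ to instead bridge two components of $F^{\psi}_{c'}$. A standard component count then shows that $\{\mathcal{C}_t:j_t\le\ell\}$ is a basis of the cycle space of $F^*:=F^{\psi}_{c'}\cup\{e_{j_t-1}:1\le t\le m\}$. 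Any hypothetical cycle $\gamma$ in $F'\subseteq F^*$ is $\bigoplus_{t\in S}\mathcal{C}_t$ for some nonempty $S$, and for $\gamma$ to avoid every removed edge $e_{j_s}$ we need $\sum_{t\in S}[e_{j_s}\in P^{(t)}]\equiv 0\pmod 2$ for each $s$. Defining $M_{s,t}:=[e_{j_s}\in P^{(t)}]$, (A2) gives $M_{t,t}=1$, while (A3) applied with $(i,j)=(j_s,\,j_t-1)$ for $s>t$ gives $M_{s,t}=0$ below the diagonal; so $M$ is upper triangular with $1$'s on its diagonal and invertible over $\mathbb{F}_2$, forcing $S=\emptyset$ and contradicting the existence of $\gamma$.

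The hard part will be recognizing the right framework. Condition (A3) is \emph{asymmetric}: it only controls how an edge $e_i$ relates to paths from earlier edges $e_j$ with $j<i-1$, and it is not immediately apparent why this one-sided control suffices when additions and removals for a single color $c'$ are interleaved throughout the whole augmenting sequence. Reframing acyclicity via fundamental cycles and the incidence matrix $M$ converts this one-sided control into exactly the upper-triangularity of $M$ that the argument needs.
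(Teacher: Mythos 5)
Your proof is correct, and it takes a genuinely different route from the paper's. The paper argues by a \emph{downward induction} on intermediate colorings $\phi_{\ell+1}=\psi,\phi_\ell,\ldots,\phi_1=\psi\oplus P$, where $\phi_i$ changes only the tail $e_i,\ldots,e_\ell$: the inductive step shows that a hypothetical monochromatic cycle in $\phi_i$, together with the path $C(e_i,\psi(e_{i+1}))$, yields two distinct monochromatic paths with the same endpoints in $\phi_{i+2}$, contradicting the induction hypothesis that $\phi_{i+2}$ is a forest decomposition. Your argument instead fixes a color $c'$ and works \emph{globally}: it isolates the indices $j_1<\cdots<j_m$ where $c'$ appears, expresses the $c'$-class $F'$ of $\psi\oplus P$ as a subset of $F^*=F^\psi_{c'}\cup\{e_{j_t-1}\}$, and shows $F'$ is acyclic via the fundamental-cycle basis of $F^*$ and the upper-triangularity over $\mathbb{F}_2$ of the incidence matrix $M_{s,t}=[e_{j_s}\in C(e_{j_t-1},c')]$, where $M_{t,t}=1$ is exactly (A2) and $M_{s,t}=0$ for $s>t$ is exactly (A3). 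Both arguments use (A2)/(A3) in essentially the same spots, but the paper's induction is more elementary and shorter, while your linear-algebraic recasting gives a cleaner structural explanation of \emph{why} the one-sided condition (A3) suffices (it is precisely triangularity) and also makes visible that (A4) is only used to ensure the final added edge bridges rather than closes a cycle.

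Two small points to tighten. First, the ``reduced sequence'' assumption can be dispensed with: if $e_i=e_{i+1}$ with $i+1<\ell$, then (A2) at index $i+1$ gives $e_{i+1}\in C(e_i,\psi(e_{i+1}))=C(e_{i-1},\psi(e_{i+1}))$, which already contradicts (A3); so stutters (other than a degenerate $e_{\ell-1}=e_\ell$, for which $\psi\oplus P$ would be ill-defined) cannot occur, and the non-consecutivity of the $j_t$'s follows directly as you argue. Second, for the fundamental-cycle basis claim you should note explicitly that the added edges $e_{j_t-1}$ are pairwise distinct, not in $F^\psi_{c'}$, and distinct from the removed edges $e_{j_s}$; all three facts follow from (A3) plus non-consecutivity, but they are load-bearing for writing $F'$ the way you do and for the claim that $e_{j_t-1}$ occurs in $\mathcal{C}_t$ only.
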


\begin{proof}
In this proof, the notation $C(e, b)$ always refer to the cycles with respect to the original coloring $\psi$.

We first claim that $\psi'(e_i) \neq \psi'(e_{i+1})$ for $i = 1, \dots, \ell - 1$. For $i = \ell - 1$, this follows from (A3) since $C(e_{\ell}, \psi'(e_{\ell})) = C(e_{\ell}, c) = \emptyset$ while $C(e_{\ell}, \psi'(e_{\ell-1})) = C(e_{\ell}, \psi(c_{\ell})) = \{ e_{\ell} \}$. For $i < \ell - 1$, suppose for contradiction that  $\psi'(e_i) = \psi'(e_{i+1})$, i.e. $\psi(e_{i+1}) = \psi(e_{i+2})$.  By (A1) applied at $i+2$, we then have $e_{i+2} \in C(e_{i+1}, \psi(e_{i+2})) = C(e_{i+1}, \psi(e_{i+1})) = \{e_{i+1} \}$. So $e_{i+2} = e_{i+1}$. But then (A1) applied at $i+1$ would give $e_{i+2} = e_{i+1} \in C(e_i,  \psi(e_{i+1}))$, which contradicts (A2).

Now for each $i = 1, \dots, \ell+1$, define $\phi_i$ to be the coloring obtained by setting $\phi_i(e_j) = \psi'(e_j)$ for all $j = i, \dots, \ell$, and $\phi_i(e) = \psi(e)$ for all other edges $e$.  Thus, $\psi = \phi_{\ell+1}$ and $\psi' = \phi_1$. 

By (A3), $\phi_{\ell}$ does not have a cycle. So if $\psi'$ is not a partial LFD,  let $i \in \{1, \dots, \ell - 1 \}$ be maximal such that $\phi_i$ has a cycle. Since $\phi_{i}$ and $\phi_{i+1}$ differ only at edge $e_i$, it must be that $\phi_{i+1}$ has a path $p_1$ on color $\psi'(e_{i})$ from $u$ to $v$.  Since $\phi_{i+1}(e_{i+1}) = \psi'(e_{i+1}) \neq \psi'(e_i)$,  path $p_1$ does not contain edge $e_{i+1}$. Since $\phi_{i+2}$ and $\phi_{i+1}$ only differ at edge $e_{i+1}$ and $e_{i+1} \notin p_1$, path $p_1$ is also present in $\phi_{i+2}$. On the other hand, by (A1), we have $e_{i+1} \in p_2$, for path $p_2 = C(e_i, \psi(e_{i+1}))$. By (A2), none of the edges $e_{i+2}, \dots, e_{\ell}$ were on $p_2$, hence  $p_2$ remains in $\phi_{i+2}$. 

We must have $p_1 \neq p_2$ since $e_{i+1} \notin p_1$ but $e_{i+1} \in p_2$. Thus $\phi_{i+2}$ contains two distinct paths of the same color from $u$ to $v$. This contradicts the maximality of $i$.
\end{proof}

With this definition, we will show the following main result:
\begin{theorem} \label{thm:augmenting-length}Given a partial LFD of a multigraph $G$ where every edge has palette size $(1+\eps) \arb$, and an uncolored edge $e$, there is an augmenting sequence $P = (e, e_2, \dots, e_{\ell}, c)$ from $e$ where $e_2, \dots, e_{\ell} \in N^{r}(e)$ for $r = O( \frac{\log n}{\eps})$.
 \end{theorem}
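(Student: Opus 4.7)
The approach is to construct the augmenting sequence via a modified breadth-first search on a directed auxiliary graph $H$ whose vertices are the edges of $G$ and whose arcs are defined by $e \to e'$ iff $\psi(e') \in Q(e)$ and $e' \in C(e, \psi(e'))$. Call an edge \emph{terminal} if it admits some color $c \in Q(e)$ with $C(e, c) = \emptyset$. Any directed path from $e_1$ to a terminal edge, together with the choice of $c$, automatically satisfies (A1), (A2), (A4), (A5); property (A3) is equivalent to the path having no ``chord'' $e_j \to e_i$ in $H$ that skips over intermediate edges.

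I would run BFS on $H$ starting at $e_1$ with the locality restriction suggested by the excerpt: at layer $i$, only enqueue edges lying within distance $i$ of $e_1$ in $G$. This keeps every root-to-leaf BFS path inside $N^r(e_1)$, where $r$ is the termination layer. Upon termination, I would extract the BFS path from $e_1$ to the first discovered terminal edge and post-process it by iterative shortcutting: whenever the current path $f_1, \ldots, f_{\ell'}$ contains a chord $f_{j'} \to f_{i'}$ in $H$ with $j' < i'-1$, delete the intermediate edges $f_{j'+1}, \ldots, f_{i'-1}$. Each shortcut preserves (A2) and (A5) (the latter since the arc condition $\psi(f_{i'}) \in Q(f_{j'})$ is exactly the required palette membership) and strictly shortens the path, so this process terminates in a path that additionally satisfies (A3); since the shortcut chain is a sub-sequence of the original one, all its edges remain inside $N^r(e_1)$.

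It remains to bound the termination layer by $r = O(\log n / \eps)$ via a density/growth argument in the spirit of Section~\ref{sec:outdegree}. Let $V_i \subseteq N^i(e_1)$ denote the set of endpoints of all BFS-enqueued edges through layer $i$; initially $|V_0| = 2$. If the BFS has not yet terminated at layer $i$, then every enqueued edge $e$ has $C(e, c) \neq \emptyset$ for each of its $\geq (1+\eps)\arb$ palette colors, so each such (edge, color) pair forces at least one colored edge into $E(V_{i+1})$. On the other hand, Nash--Williams' formula gives $|E(V_{i+1})| \leq \arb(|V_{i+1}| - 1)$, and the per-color forest bound further caps each color's contribution at $|V_{i+1}| - 1$ edges; double-counting the forced (edge, color) pairs against these constraints should yield the multiplicative expansion $|V_{i+1}| \geq (1 + \Omega(\eps))|V_i|$. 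Since $|V_i| \leq n$, the BFS must terminate by layer $O(\log n / \eps)$, as required.

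The main obstacle is making the growth step rigorous. The locality restriction prevents us from enqueueing every cycle edge of $C(e, c)$---only the portion inside the current ball is captured---so the double-count must carefully isolate the colored edges that are genuinely new in $V_{i+1} \setminus V_i$ as opposed to edges on cycles that escape the ball. I expect the $(1+\eps)$-slack between the palette size and the arboricity bound to be absorbed exactly into this expansion rate. The shortcut post-processing ensures that (A3) holds for the final output, sidestepping any subtle interaction between the shortest-path structure of $H$ and the locality restriction that might otherwise allow an ancestor-to-descendant chord to slip through.
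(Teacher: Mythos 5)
Your overall architecture matches the paper's: a layered BFS-like exploration on the auxiliary digraph of ``recoloring moves,'' a growth argument to bound the number of layers, and then a shortcutting step to enforce (A3). The shortcutting part is fine --- taking a minimal-length (or iteratively chord-removed) almost-augmenting subsequence is exactly what the paper's Proposition~\ref{lem:shortcut} does, and your observation that the chord must carry the palette condition $\psi(f_{i'}) \in Q(f_{j'})$ in order to preserve (A5) is a correct and necessary point.

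However, there is a genuine gap in the growth step, and you have correctly located it but not filled it. Your double-count ``each (edge, color) pair forces at least one colored edge into $E(V_{i+1})$'' does not yield a useful lower bound, because many pairs can force the \emph{same} new edge and there is no direct accounting that prevents this. The paper's argument is different and more delicate: for each color $c$, it considers the $c$-colored forest $H_c = \bigcup_{e \in E_{i,c}} C(e,c)$ (where $E_{i,c}$ are the edges of $E_i$ with $c$ in their palette), chooses an arbitrary rooting, and observes that every vertex $x \in V_i$ that is a non-root in $H_c$ has a parent edge $e'$ incident to $x$ and hence adjacent to $E_i$; this $e'$ is a \emph{distinct} $c$-colored edge that gets enqueued. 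This gives $|E_{i+1}| \ge \sum_c (|V_i| - r_c)$ where $r_c$ is the number of components of $(V_i, E_{i,c})$. The lower bound $\sum_c (|V_i| - r_c) \ge (|V_i|-1)(1+\eps)\arb$ then comes from an auxiliary spanning tree $T$ of $(V_i,E_i)$ via $|T\cap E_{i,c}| \le |V_i|-r_c$ and $\sum_{e\in T}|Q(e)| \ge |T|(1+\eps)\arb$, and finally the arboricity bound $|E_i| \le \arb(|V_i|-1)$ converts this into the multiplicative growth $|E_{i+1}| \ge (1+\eps)|E_i|$. Your sketch does not contain this accounting (roots versus non-roots of $H_c$, the spanning-tree trick), and there is no obvious way to make the (edge, color) count work instead.

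A secondary issue: the paper's enqueue rule is ``$e' \in C(e,c)$ \emph{adjacent to an edge of $E_i$},'' not merely ``within distance $i$ of $e_1$.'' The former is strictly stronger and is what guarantees that $(V_i, E_i)$ stays connected at every layer; connectivity is used implicitly when taking the spanning tree $T$ with $|T| = |V_i| - 1$. Under your weaker ball-restriction the explored edge set can become disconnected, which would weaken the growth inequality to $(|V_i| - k)(1+\eps)\arb$ for $k$ components and break the chain. This is easy to fix by adopting the paper's rule, but it matters for the argument to close.
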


The main significance of Theorem~\ref{thm:augmenting-length} is that it allows us to locally fix a partial LFD, in the same way Proposition~\ref{aug:cor1} allows us to locally fix an edge-orientation. We summarize this as follows:
\begin{corollary}
\label{aug:cor}
Suppose multigraph $G$ has a partial LFD $\psi$, and every edge has palette size $(1+\eps) \arb$, and let $L \subseteq E$ be an arbitrary edge set. Then there is a partial LFD $\psi'$ with the following properties:
\begin{itemize}
\item $\psi'$ agrees with $\psi$ outside $N^r(L)$ where $r = O( \frac{\log n}{\eps})$.
\item $\psi'$ is a full coloring of the edges $L$.
\item All edges colored in $\psi$ are also colored in $\psi'$.
\end{itemize}
\end{corollary}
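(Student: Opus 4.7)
The plan is to prove this by iteratively applying Theorem~\ref{thm:augmenting-length} to the uncolored edges of $L$, one at a time. Specifically, I would construct a sequence of partial LFDs $\psi = \psi_0, \psi_1, \psi_2, \ldots$ as follows. At step $i$, if every edge of $L$ is already colored under $\psi_i$, stop and output $\psi' = \psi_i$. Otherwise, pick any uncolored edge $e \in L$, invoke Theorem~\ref{thm:augmenting-length} on $\psi_i$ and $e$ to obtain an augmenting sequence $P_i = (e, e_2, \dots, e_{\ell_i}, c)$ with $\ell_i = O(\log n/\eps)$ and $e_2, \dots, e_{\ell_i} \in N^r(e)$, and let $\psi_{i+1} = \psi_i \oplus P_i$.

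The key invariants are: (a) each $\psi_i$ is a valid partial $(1+\eps)\arb$-LFD (by Lemma~\ref{lem:augment}, which preserves the LFD property, combined with the fact that augmentation only assigns colors drawn from the palettes $Q(e_j)$ by (A5), so the color set is unchanged); (b) the set of colored edges in $\psi_{i+1}$ strictly contains that of $\psi_i$, since the augmentation colors the previously uncolored edge $e_1 = e$ while merely recoloring (never erasing) the remaining edges $e_2, \dots, e_{\ell_i}$; and (c) $\psi_{i+1}$ and $\psi_i$ differ only on edges contained in the augmenting sequence $P_i$, all of which lie in $N^r(e) \subseteq N^r(L)$ for $r = O(\log n/\eps)$. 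Invariant (b) guarantees that the process terminates after at most $|L|$ iterations, at which point every edge of $L$ is colored.

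Taking $\psi' = \psi_i$ at termination, the three required properties now follow directly. The second property (that $\psi'$ is a full coloring of $L$) is the termination condition. The third property (that every edge colored under $\psi$ is colored under $\psi'$) follows by a straightforward induction from invariant (b). The first property (that $\psi'$ agrees with $\psi$ outside $N^r(L)$) follows from invariant (c), since the union of all edges modified across the iterations is contained in $\bigcup_{e \in L} N^r(e) = N^r(L)$.

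There is no real obstacle here: Theorem~\ref{thm:augmenting-length} supplies exactly the local augmenting sequences we need, and Lemma~\ref{lem:augment} guarantees that chaining them together preserves the partial-LFD structure. The only mild subtlety is observing that augmentation is \emph{monotone} on the colored-edge set, which is immediate from the explicit description of $\psi \oplus P$: edge $e_1$ goes from uncolored to colored with $\psi(e_2)$, while all other affected edges $e_2,\dots,e_{\ell}$ are simply reassigned colors.
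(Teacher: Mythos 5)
Your proof is correct and takes essentially the same approach as the paper: iterate over the uncolored edges of $L$, apply Theorem~\ref{thm:augmenting-length} to each to obtain a local augmenting sequence, and augment, noting that augmentation is monotone on the colored-edge set and modifies only edges within $N^r(e) \subseteq N^r(L)$. Your write-up simply makes the invariants of the iteration a bit more explicit.
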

\begin{proof}[Proof (assuming Theorem~\ref{thm:augmenting-length})]
We can go through each uncolored edge $e \in L$ in an arbitrary order, obtain an augmenting sequence $P$ from Theorem~\ref{thm:augmenting-length}, and then replace $\psi$ with its augmentation w.r.t $P$. This ensures that $e$ is colored, and does not de-color any edges. Furthermore, since $P$ lies inside $N^r(e)$, it does not modify any edges outside $N^r(L)$. At the end of this process, all edges in $L$ have become colored, and none of the edges outside $N^r(L)$ have been modified.
\end{proof}

To prove Theorem~\ref{thm:augmenting-length}, we first construct a weaker object called an \emph{almost augmenting sequence}, which is a sequence satisfying properties  (A1), (A3), (A4) but not necessarily (A2).   The following Algorithm~\ref{alg:find_augmenting_sequence} finds an almost augmenting sequence starting from a given edge $e_{\text{init}}$. 

 \begin{algorithm}[H]
\caption{\textsc{Find\_Almost\_Augmenting\_sequence}$(e_{\text{init}})$}\label{alg:find_augmenting_sequence}
\begin{algorithmic}[1]\small

\State $E_1 = \{e_{\text{init}}\}$
\For{$i = 1 \ldots n$} \label{ln:loop1}
	\State $E_{i+1} \leftarrow E_{i}$.
	\For{\textbf{each} $e \in E_{i}$ \textbf{and each} color $c \in Q(e)$}
			\If{$C(e,c) \neq \emptyset$}
				\For{\textbf{each} edge $e' \in C(e,c) \setminus E_i$ which is adjacent to an edge of $E_i$} 
				\State Set $E_{i+1} \leftarrow E_{i+1} \cup \{e' \}$ and $\pi(e') \leftarrow e$.
				\EndFor
			\Else
				\State Return the almost augmenting sequence $P = (e_{\text{init}}, \dots, \pi(\pi(e)), \pi(e), e, c)$.
			\EndIf			
		\EndFor
%	\EndFor
\EndFor
\end{algorithmic}
\end{algorithm}

\begin{lemma}
\label{prop:find}
Algorithm~\ref{alg:find_augmenting_sequence} terminates within $O( \frac{\log n}{\eps})$ iterations.
\end{lemma}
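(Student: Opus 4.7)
The plan is to argue that if Algorithm~\ref{alg:find_augmenting_sequence} fails to return by iteration $i$, then the vertex set $V_i := \bigcup_{e \in E_i} e$ grows by a multiplicative factor $1 + \Omega(\eps)$ at each step. Since $|V_1| = 2$ and $|V_i| \leq n$ throughout, this forces termination in $i \leq O(\log n / \eps)$ iterations. The starting observation is that the algorithm reaches iteration $i+1$ without returning precisely when $C(e, c) \neq \emptyset$ for every $e \in E_i$ and every $c \in Q(e)$, so one can always identify explicit new edges that must enter $E_{i+1}$.

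Next, I would perform the per-vertex counting. Fix $u \in V_i$ and some incident $e_u \in E_i$; for each color $c \in Q(e_u) \setminus \{\psi(e_u)\}$, the first $c$-colored edge of $C(e_u, c)$ at $u$ is adjacent to $e_u$ and therefore added to $E_{i+1}$ by the inner loop. These edges are pairwise distinct because they have distinct colors, so $u$ is incident to at least $|Q(e_u)| - 1 \geq (1+\eps)\arb - 1$ edges of $E_{i+1}$. Summing over $u \in V_i$ and comparing to the Nash-Williams bound $|E(V_{i+1})| \leq \arb(|V_{i+1}| - 1)$ should yield an inequality of the form $|V_{i+1}| \geq (1+\Omega(\eps))|V_i| - O(1)$, whence the geometric growth (and hence the $O(\log n/\eps)$ iteration bound) follows in a standard way.

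The main obstacle I expect is the factor of $2$ arising in the double-counting: naively summing the per-vertex bound gives only $|V_{i+1}| \geq (1+\eps)|V_i|/2$ after using $|E(V_i)| \leq \arb(|V_i|-1)$, which is useless for $\eps < 1$. To close the gap, I plan to separately track edges of $E_{i+1}$ with both endpoints in $V_i$ (``interior'' edges) versus those crossing to $V \setminus V_i$ (``boundary'' edges), exploiting the fact that the extra $\eps\arb$ slack in the palette cannot all be absorbed by edges staying inside a subgraph of arboricity $\arb$. Concretely, I would argue that either $V_i$ has already expanded (so the boundary contribution gives many new vertices), or else $E_i$ essentially saturates $E(V_i)$, in which case the next iteration is forced to push $\Omega(\eps\arb\cdot|V_i|)$ edges across $\partial V_i$ and therefore adds $\Omega(\eps|V_i|)$ new vertices to $V_{i+1}$. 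Iterating this dichotomy over any window of $O(1/\eps)$ consecutive steps should amortize to the desired multiplicative growth, completing the proof.
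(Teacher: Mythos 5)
Your high-level plan (multiplicative growth forces termination in $O(\log n/\eps)$ iterations) matches the paper, but the specific per-vertex charging you set up cannot be pushed through, and the fix you sketch does not close the gap. You correctly observe that fixing $e_u \in E_i$ incident to $u$ and charging $u$ for the first $c$-colored edge of $C(e_u,c)$ at $u$ yields about $|Q(e_u)|$ distinct incident edges of $E_{i+1}$. But this charges each edge of $E_{i+1}$ from both of its $V_i$-endpoints, so summing gives only $2|E_{i+1}| \geq (1+\eps)\arb|V_i|$, hence $|E_{i+1}| \geq (1+\eps)|E_i|/2$ — exactly the factor-of-$2$ loss you flag. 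Your proposed repair via a dichotomy (``either $V_i$ already expanded, or $E_i$ saturates $E(V_i)$, so the next step pushes many edges across the boundary'') and amortizing over a window of $O(1/\eps)$ steps is not worked out: nothing in the algorithm forces $E_i$ to saturate $E(V_i)$ before $V_i$ grows, and the edges added are constrained to lie on specific color cycles, so the ``saturation'' case does not obviously occur. As written, the proposal has a genuine gap at this step.

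The paper closes the gap with a different charging scheme that avoids double-counting entirely: for each color $c$, form the $c$-colored forest $H_c = \bigcup_{e \in E_{i,c}} C(e,c)$, root it arbitrarily, and charge each non-root vertex $x \in V_i$ to its \emph{unique parent edge} in $H_c$. Since parents are unique, these charged edges are pairwise distinct within each color, giving $|E_{i+1}| \geq \sum_c (|V_i| - r_c)$ where $r_c$ is the number of components of $(V_i, E_{i,c})$. The term $|V_i| - r_c$ is then lower-bounded by $|T \cap E_{i,c}|$ for any spanning tree $T$ of $(V_i,E_i)$, so summing over $c$ yields $\sum_{e\in T}|Q(e)| \geq (1+\eps)\arb(|V_i|-1) \geq (1+\eps)|E_i|$ by Nash-Williams. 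This single-step geometric growth of $|E_i|$ is exactly what your per-vertex count falls short of by a factor of $2$. If you want to salvage your route, the key idea to import is charging each $V_i$-vertex to its parent edge in a rooted $c$-colored tree (one charge per vertex per color) rather than to all incident edges, and then converting the ``components of $G_c$'' bound into an edge count via a spanning tree of $(V_i, E_i)$.
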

\begin{proof}
In each iteration $i$, let $V_i$ denote the endpoints of the edges in $E_i$, and let $E_{i,c}$ be the set of edges in $E_i$ whose palette contains color $c$. An edge only gets added to $E_{i+1}$ if it is adjacent to an edge in $E_i$. Thus, the graph spanned by $(V_i, E_{i})$ is connected and $E_{i} \subseteq N^{i-1}(e_{\text{init}})$.

Let us assume we are at some iteration $i$ and the algorithm has not terminated, i.e. $C(e,c) \neq \emptyset$ for all $e \in E_{i,c}$.  For each color $c$, let $r_c$ be the number of connected components in the subgraph $G_{c} = (V_i, E_{i,c})$.  Consider forming a graph $H_c$ on vertex set $V$ with edge set given by $E(H_c) =  \bigcup_{e \in E_{i,c}}C(e,c)$. This is a forest consisting of $c$-colored edges. By our assumption that $C(e,c) \neq \emptyset$ for all $e \in E_{i,c}$, any vertices in $V_i$ which are connected in $G_c$ are also connected in $H_c$. Thus, $H_c$ has at most $r_c$ components, and if we choose some arbitrary rooting of forest $H_c$ then at most $r_c$ vertices in $V_i$ can be root nodes. 

Now, consider any such non-root node $x \in V_i$, with parent edge $e' \in H_c$. We have $e' \in C(e,c)$ for some edge $e \in E_{i,c}$. Since $x$ is an endpoint of $e'$ and is also an endpoint of an edge in $E_i$, the new edge $e'$ gets added to $E_{i+1}$ at iteration $i$, unless it was already part of $E_i$.  This holds for every non-root node in $H_c$, so $E_{i+1}$ contains at least $|V_i| - r_c$ edges from $H_c$, which are $c$-colored. (See Figure~\ref{fig:aug}.)  We sum over colors $c$ to get:
\begin{align*}
|E_{i+1}| &\geq  \sum_{c \in \mathcal{C}} \bigl( |V_i| - r_c \bigr).
\end{align*}

To bound this sum, consider an arbitrary spanning tree $T$ of the connected graph $(V_i, E_i)$, where $|T| = |V_i| - 1$. Since $T$ is a tree, we have $|T \cap E_{i,c} | \leq |V_i| - r_c$ for each color $c$, and so:
$$
|E_{i+1}| \geq \sum_{c \in \mathcal{C}} (|V_i| - r_c ) \geq \sum_{c \in \mathcal{C}} |T \cap E_{i,c}|  =  \sum_{e \in T} |Q(e)| \geq |T| \cdot (1+\eps) \arb = (1+\eps)\arb (|V_i| - 1)~. 
$$

Since $|V_1| = 2$, this implies $|E_2| \geq (1+\eps) \arb$. For iteration $i > 1$, note that by definition of arboricity, we have $|E_i|/(|V_i| - 1) \leq \arb$, and so 
$$
|E_{i+1}| \geq (1+\eps) \arb \cdot |E_i| / \arb = (1+\eps)  |E_i|.
$$

Hence $|E_{\ell+1}| \geq (1+\eps)^{\ell} \arb$ for each $\ell \geq 1$. The overall graph has $m \leq n \arb$ edges, so the process must terminate by iteration $\ell = \lceil \log_{1+\eps} n \rceil$.
\end{proof}

\begin{figure}[H]
\centering
\begin{subfigure}[b]{0.45\textwidth}
\centering
\includegraphics[scale=0.3]{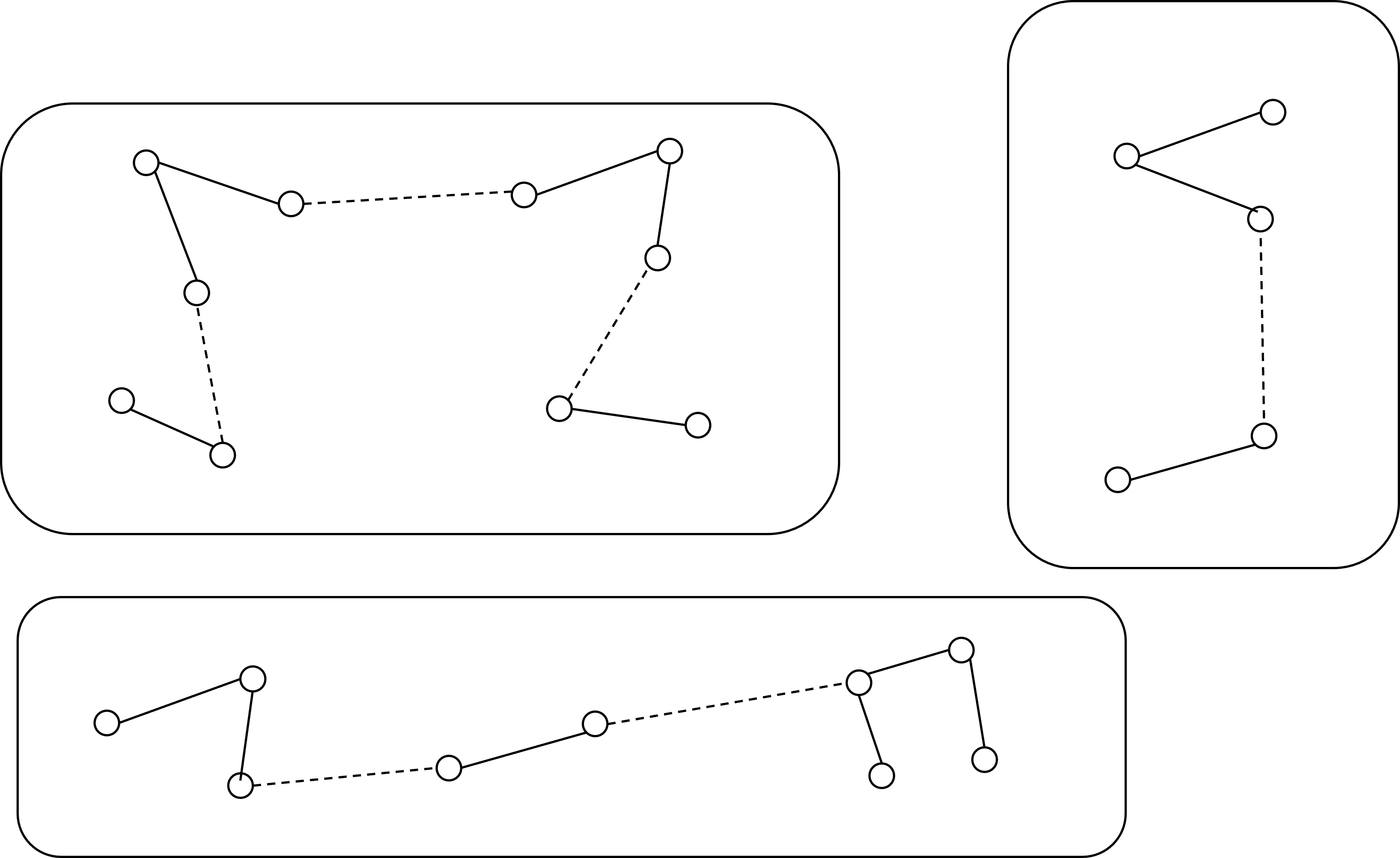}
\end{subfigure}
\quad
\begin{subfigure}[b]{0.45\textwidth}
\centering
\includegraphics[scale=0.3]{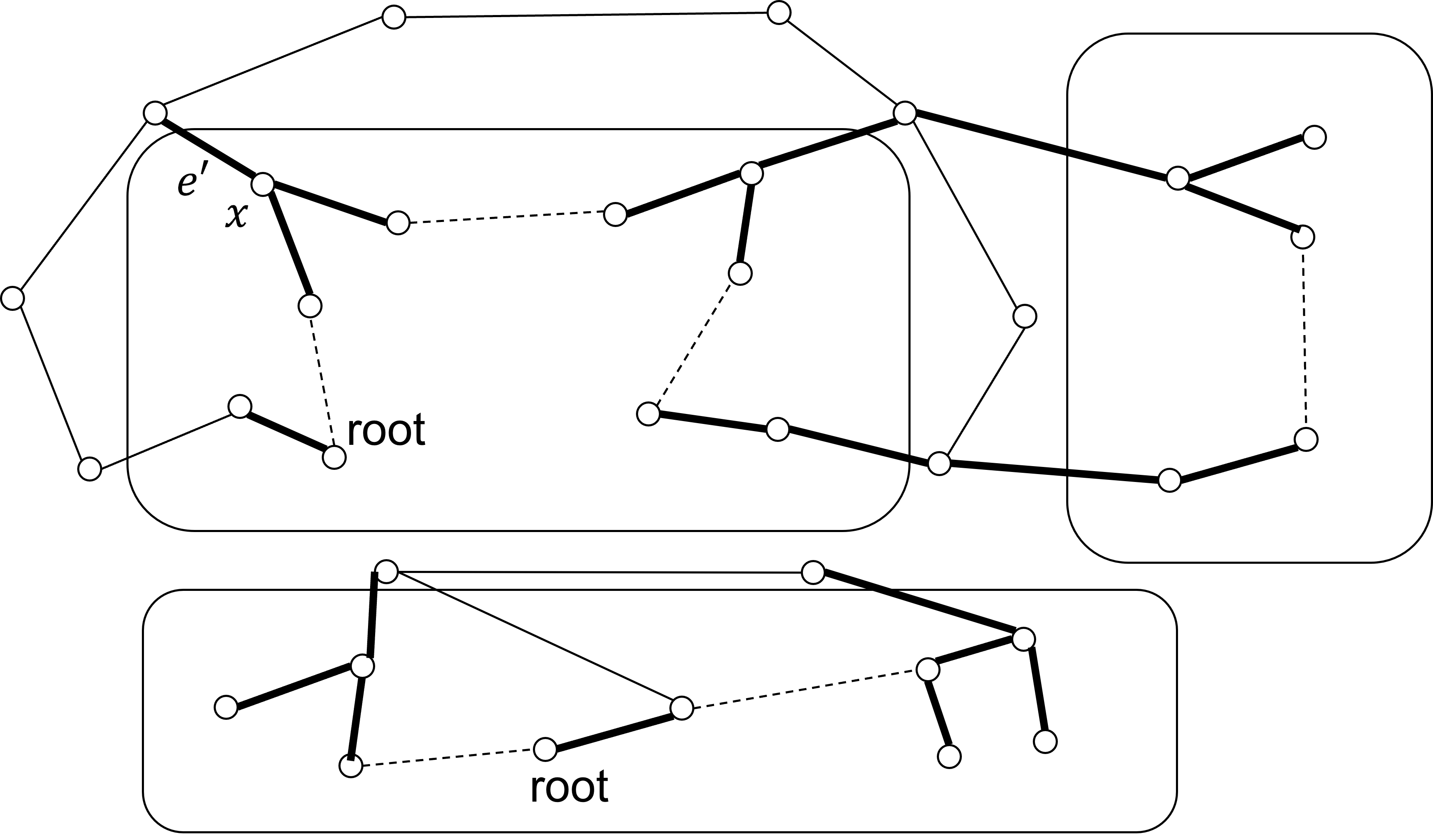}
\end{subfigure}
\caption{In the left-most figure, the ovals represent the connected component of $G_c$. The right-most figure shows the graph $H_c$; here, the bold edges are guaranteed to be added to $E_{i+1}$.\label{fig:aug}}
\end{figure}

 Note that if Algorithm~\ref{alg:find_augmenting_sequence} terminates by iteration $\ell = O( \frac{\log n}{\eps})$, then the sequence has length $\ell$ and all edges are within distance $\ell$ of the starting edge $e_{\text{init}}$. We can then short-circuit it into an augmenting sequence as shown in the following result:
\begin{proposition}\label{lem:shortcut} If there exists an almost augmenting sequence $P$  from $e$ to $e'$, then there exists an augmenting sequence from $e$ to $e'$ which is a subsequence of $P$. \end{proposition}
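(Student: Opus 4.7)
My plan is to prove Proposition~\ref{lem:shortcut} by strong induction on the length $\ell$ of the almost augmenting sequence $P = (e_1, \ldots, e_\ell, c)$. The base case $\ell = 1$ is immediate, since property (A3) is vacuous and so $P$ is already an augmenting sequence.

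For the inductive step with $\ell \geq 2$, if $P$ already satisfies (A3) we are done. Otherwise, some pair of indices $a < b-1$ witnesses a violation, i.e., $e_b \in C(e_a, \psi(e_b))$. I would then form the shortcut sequence $P' = (e_1, \ldots, e_a, e_b, e_{b+1}, \ldots, e_\ell, c)$ obtained by excising the intermediate edges $e_{a+1}, \ldots, e_{b-1}$, show that $P'$ is again an almost augmenting sequence, and apply the inductive hypothesis to the strictly shorter $P'$ to extract an augmenting subsequence of $P'$, which is also a subsequence of $P$.

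Properties (A1) and (A4) transfer trivially since the first edge $e_1$ and the last pair $(e_\ell, c)$ are preserved; property (A2) at the new adjacency $(e_a, e_b)$ is exactly the cycle containment hypothesis, and all other adjacencies in $P'$ are unchanged from $P$. The central obstacle is verifying property (A5) at the splice point, namely $\psi(e_b) \in Q(e_a)$: the cycle relation $e_b \in C(e_a, \psi(e_b))$ is purely a statement about the structure of the $\psi(e_b)$-colored forest and does not by itself carry any palette information. My plan to overcome this is to choose the violating pair $(a,b)$ extremally, for instance minimizing $a$ and then maximizing $b$, and argue that a failure $\psi(e_b) \notin Q(e_a)$ would force the existence of a different violating pair contradicting the extremal choice; one route is to walk along the path $C(e_a, \psi(e_b))$ from an endpoint of $e_a$ toward $e_b$, locating an earlier edge of the sequence in that path whose palette must carry $\psi(e_b)$ by (A5) of the original sequence applied at a consecutive pair. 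This (A5) verification at the splice is the crux of the proof; all remaining steps are routine bookkeeping, and unfolding the induction produces the desired augmenting subsequence of $P$.
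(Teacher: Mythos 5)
Your overall strategy (strong induction on length, which is equivalent to the paper's minimal-length-plus-contradiction argument) and the reduction to a single shortcut step are exactly the paper's. You have correctly put your finger on the one non-trivial point: verifying (A5) at the splice, i.e.\ $\psi(e_b)\in Q(e_a)$. As you note, $e_b\in C(e_a,\psi(e_b))$ is a purely structural statement about the $\psi(e_b)$-colored forest and, read literally, carries no palette information about $e_a$. The paper's proof simply asserts that the shortcut $P'$ is an almost augmenting sequence without checking this, so your concern is real, not pedantic.

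However, your proposed repair does not close the gap. There is no reason for any other edge of the sequence --- in particular none of the form $e_{a+1}$ --- to lie on $C(e_a,\psi(e_b))$: that path consists entirely of $\psi(e_b)$-colored edges joining the two endpoints of $e_a$, and in general the only sequence edge on it is $e_b$ itself. One can build small instances (e.g.\ a four-vertex multigraph with a three-step almost augmenting sequence $(e_1,e_2,e_3,c)$, palettes $Q(e_1)=\{A\}$, $Q(e_2)=\{A,B\}$, $Q(e_3)=\{B,C\}$, and a $B$-colored path from one endpoint of $e_1$ to the other passing through $e_3$) in which the unique (A3) violation is $(a,b)=(1,3)$, $\psi(e_3)\notin Q(e_1)$, and no intermediate sequence edge lies on the offending path, so no extremal choice of $(a,b)$ can rescue the argument. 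The clean resolution is instead to observe that (A3) should be (and is, in every place it is used) restricted to pairs $(i,j)$ with $\psi(e_i)\in Q(e_j)$: each invocation of (A3) inside the proof of Lemma~\ref{lem:augment} already has this palette membership supplied by (A5), so the weaker form of (A3) suffices there; and then a violation of the weakened (A3) automatically provides $\psi(e_b)\in Q(e_a)$, making (A5) at the splice immediate and the shortcut argument go through. So the statement and proof are salvageable, but not by the route you sketch.
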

\begin{proof}
We show this by induction on $\ell$. If $\ell = 0$ it holds vacuously. Otherwise, consider an almost augmenting sequence $P = (e_1, e_2, \ldots e_{\ell}, c)$ with $e_1 = e$ and $e_{\ell} = e'$. If $P$ satisfies (A2) we are done. If not, suppose that $e_i \in C(e_j, \psi(e_i))$ for $j < i - 1$. Then, $P' = (e_1, \ldots, e_j,  e_i,  \ldots e_{\ell}, c)$ is also an almost augmenting sequence of length $\ell' < \ell$ which is a subsequence of $P$. By induction hypothesis, it has a subsequence $P''$ which is augmenting sequence, which in turn is a subsequence of $P$.
 \end{proof}

Theorem~\ref{thm:augmenting-length} now follows immediately from Lemma~\ref{prop:find} and Proposition~\ref{lem:shortcut}.

\section{Local Forest Decompositions via Augmentation} \label{sec:main-alg}

%A partial $(k,r)$-decomposition of a graph $G$ is a coloring of edges $\phi: E \to \{1, 2, \ldots, k, \bot \}$ such that the subgraph induced by each color $c \in \{1, 2, \ldots, k\}$ is a graph of girth $R$. We call the subgraph induced by the color $c$ the $c$-colored subgraph.

Algorithm \ref{alg:main} is a high-level description of our forest decomposition algorithm, in terms of a parameter $R$,  a constant $K'$, and a subroutine $\cut$ (all to be specified).
 \begin{algorithm}[H]
\caption{\textsc{Forest\_Decomposition}$(G)$}\label{alg:main}
\begin{algorithmic}[1]\small

\State Initialize $\psi \leftarrow \emptyset$.
\State Compute an $( O(\log n), O(\log n) )$-network decomposition in $G^{2\cdot (R + R')}$ for $R' = \lceil K' \log n / \eps \rceil$. \label{step:nd}

\For{\textbf{each}  class $z$ in the network decomposition} \label{step:color-class}

		\For{each component $U$ in the class $z$ {\bf in parallel}}  \label{ln:outerloopstart}

				\State  Execute \cut{$(U,R)$}.
			
			\State Modify $\psi$ within $N^{R'}(U)$ so that edges inside $N(U)$ become colored.
			 \label{ln:lastline}
		\EndFor
\EndFor
\end{algorithmic}
\end{algorithm}

The subroutine $\cut(U,R)$ removes edges from the graph so as to break all long monochromatic paths in the vicinity of $U$.  We call the set of removed edges from all $\cut(U,R)$ instances the {\it leftover edges},  denote by by $E^{\text{leftover}}$; the graph induced on them is the \emph{leftover graph} denoted $G^{\text{leftover}}$. We also define the \emph{main edges} by $E^{\text{main}} = E \setminus E^{\text{leftover}}$ (i.e. the edges never removed by $\cut$) and the induced graph on them the \emph{main graph} $G^{\text{main}}$.

Formally, for each class $U$, define $U' = N^{R'}(U)$ and $U'' = N^{R+R'}(U)$, and define $H(U)$ to be the graph on the edges in $U'' \setminus U'$. Furthermore, for each color $c \in \mathfrak C$, define $H_c(U)$ to be the $c$-colored edges in $H(U)$; note that $H_c(U)$ is a forest. We say that the execution of Algorithm~\ref{alg:main} \emph{is good} if, after every application of $\cut(U,R)$, the vertex sets $U'$ and $V \setminus U''$ are disconnected in $H_c(U)$ for every color $c$. (See Figure~\ref{fig:cut_2}.)  

\begin{figure}[H]
\centering
\includegraphics[scale=0.54]{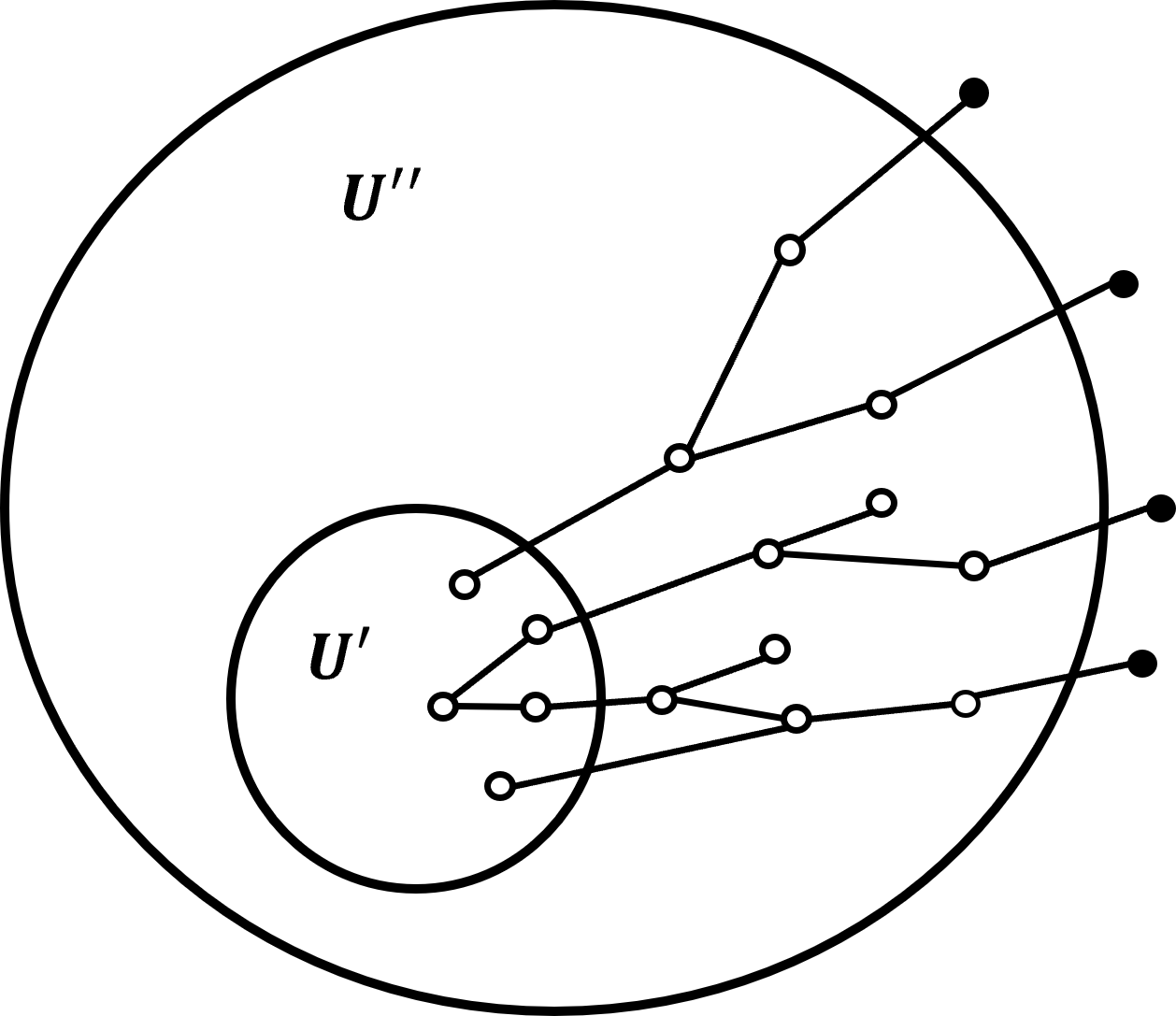}
\caption{$H_c(U)$ with $R=3$. We want to disconnect $U'$ from all nodes with distance $R$ from $U'$ (illustrated as black nodes).}\label{fig:cut_2}
\end{figure}

We will show the following main result for Algorithm~\ref{alg:main}:

\begin{theorem}
\label{combine-thm0}
If every edge has a palette of size $ (1+\eps) \arb$, then w.h.p. Algorithm~\ref{alg:main} generates a list-forest decomposition of $G^{\text{main}}$ such that $\arb^*(G^{\text{leftover}}) \leq  \eps \arb $. It has the following complexity:
\begin{itemize}
\item If $\eps \arb \geq \Omega(\log n)$, the complexity is $O(\frac{\log^4 n}{\eps})$ rounds.
\item If $\eps \arb \geq \Omega(\log n)$ and $\cs = \{1, \dots,  \arb (1 + \eps)  \}$, the complexity is $O(\frac{\log^3 n}{\eps})$ rounds.
\item If $\eps \arb \leq \log \Delta$, the complexity is $O(\frac{\Delta^{2/(\eps \arb)} \log \Delta \log^4 n}{\arb \eps^2})$ rounds.
\item If $\eps \arb \geq \log \Delta$,  the complexity is $O(\frac{\log^4 n}{\eps})$ rounds.
\end{itemize}
\end{theorem}

The key to the algorithm is to ensure that the  $\cut$ subroutine load-balances the number of removed out-neighbors of any vertex.   We describe implementation of $\cut$ to achieve this, along with choices of parameter $R$, in Section~\ref{sec32}.  At the end of this process, we finish with a decomposition of the leftover graph; this is summarized next in  Section~\ref{combine-sec}.   Putting aside the implementation of $\cut$ for the moment, we summarize the algorithm as follows:
\begin{theorem}
\label{alg2-thm}
Algorithm~\ref{alg:main} runs in  $O(R \log^2 n +  \log^3 n / \eps)$ rounds. If the execution of the algorithm is good and every edge has a palette of size $(1+\eps) \arb$, then at the termination, $\psi$ is a list forest decomposition of $G^{\text{main}}$.
\end{theorem}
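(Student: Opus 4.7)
The plan is to establish the two halves of the theorem separately, after fixing the constant $K'$ in $R' = \lceil K'\log n / \eps\rceil$ large enough that $r + 1 \leq R + R'$, where $r = O(\log n/\eps)$ is the radius bound from Corollary~\ref{aug:cor}. This ensures that every per-cluster augmentation stays inside the intended local ball $N^{R+R'}(C)$.

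For the runtime, line~2 simulates the $O(\log^2 n)$-round network-decomposition algorithm of \cite{EN16} on $G^{2(R+R')}$; since one round of $G^{2(R+R')}$ simulates in $2(R+R')$ rounds of $G$, this takes $O((R+R')\log^2 n) = O(R\log^2 n + \log^3 n/\eps)$ rounds. The outer loop iterates over $O(\log n)$ classes, and within each class every cluster $C$ has strong diameter $O(\log n)$ in $G^{2(R+R')}$, i.e.\ weak diameter $O((R+R')\log n)$ in $G$. A leader of $C$ can therefore read all of $N^{R+R'}(C)$, run \cut{} and Corollary~\ref{aug:cor} entirely from local information, and broadcast the resulting edge colors, all in $O((R+R')\log n)$ rounds. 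Multiplying by the $O(\log n)$ classes yields the claimed total.

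For correctness I would induct on classes with the invariant ``$\psi$ is a valid partial LFD of the edges not yet removed by \cut{}.'' The key non-trivial step is to verify that simultaneous modifications by sibling clusters within a single class do not conflict. Because two distinct clusters $C_i, C_j$ of the same class lie in different connected components of that class's subgraph in $G^{2(R+R')}$, they are non-adjacent in $G^{2(R+R')}$, so $\text{dist}_G(C_i, C_j) > 2(R+R')$ and hence $N^{R+R'}(C_i) \cap N^{R+R'}(C_j) = \emptyset$. Both \cut{$(N^{R'}(C_i), R)$} (by its specification, which removes only edges in $E(N^{R+R'}(C_i)) \setminus E(N^{R'}(C_i))$) and the subsequent augmentation (by the choice of $K'$, which confines Corollary~\ref{aug:cor}'s modifications to $N^{r+1}(C_i) \subseteq N^{R+R'}(C_i)$) touch only edges inside $N^{R+R'}(C_i)$, so the per-cluster modifications are pairwise disjoint and may be composed freely.

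To preserve the invariant at each cluster, the ``good'' hypothesis is essential: after \cut{$(N^{R'}(C_i), R)$} no monochromatic path connects $N^{R'}(C_i)$ to $V \setminus N^{R+R'}(C_i)$, so every cycle $C(e, c)$ queried by Algorithm~\ref{alg:find_augmenting_sequence} while coloring uncolored edges in $N(C_i)$ lies entirely in $N^{R+R'}(C_i)$ and can be reconstructed from local information. With the palette-size hypothesis $\lceil(1+\eps)\arb\rceil$ in place, Corollary~\ref{aug:cor} applies and yields a coloring of all uncolored non-removed edges in $N(C_i)$ while keeping $\psi$ a valid partial LFD globally. Termination then follows because every edge of $G$ has an endpoint in some cluster of some class and hence appears in some $N(C)$, so it is eventually either colored or removed. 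The main obstacle I anticipate is making the ``global-view versus local-view'' argument precise---that the globally-stated Corollary~\ref{aug:cor} can be executed from local information alone---but once the good hypothesis is used to equate the global and local cycle-membership queries performed by the augmentation, the induction closes and the theorem follows.
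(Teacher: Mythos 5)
Your proof follows the same overall approach as the paper's (network decomposition on a power graph, a leader reading a local ball, disjointness of balls across sibling clusters, and the ``good'' hypothesis giving local checkability). However, there is a genuine gap in how you set $K'$ and in the cycle-confinement argument that it is supposed to support.

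You fix $K'$ so that $r+1 \leq R + R'$, and then claim that ``every cycle $C(e,c)$ queried by Algorithm~\ref{alg:find_augmenting_sequence} while coloring uncolored edges in $N(C_i)$ lies entirely in $N^{R+R'}(C_i)$.'' That claim is not justified by the condition $r+1 \leq R+R'$. The cut only disconnects $N^{R'}(C_i)$ from $V \setminus N^{R+R'}(C_i)$; it gives no guarantee about monochromatic paths originating from vertices in $N^{R+R'}(C_i)\setminus N^{R'}(C_i)$. So if the augmenting-sequence exploration is allowed to reach an edge $e$ whose endpoints are both outside $N^{R'}(C_i)$ (which the weak bound $r+1\leq R+R'$ permits, since $R$ can be much larger than $R'$), the path $C(e,c)$ may leave $N^{R+R'}(C_i)$; the leader then cannot tell whether $C(e,c)=\emptyset$, and could assign $e$ a color that creates an unseen cycle, breaking the invariant. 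The condition you actually need is $r+1 \leq R'$, matching line~6 of Algorithm~\ref{alg:main}, which stipulates that $\psi$ is modified within $N^{R'}(C)$. With $r+1\leq R'$ the exploration stays in $N^{R'}(C_i)$, every queried edge has an endpoint in $N^{R'}(C_i)$, the good hypothesis then confines all queried paths to $N^{R+R'}(C_i)$, and the rest of your argument goes through. As written, the weaker bound lets the augmentation wander into the annulus $N^{R+R'}(C_i)\setminus N^{R'}(C_i)$, where neither the cut nor the leader's local view protects you.
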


\begin{proof}
Let $\bar R = R + R'$. For Line \ref{step:nd}, we use the algorithm of \cite{EN16} to get an $(O(\log n), O(\log n))$-network decomposition in the power graph $G^{2 \bar R}$ in $O( \bar R \log^2 n)$ rounds. Then Algorithm~\ref{alg:main} colors all edges that are adjacent to or inside a cluster $U$ of a class $z$ (Line \ref{ln:outerloopstart} to Line \ref{ln:lastline}).  Thus, if an edge $uv$ is not removed, it will become colored when we process the first class containing $u$ or $v$.

 Consider some cluster $U$, and suppose the execution is good. The modified coloring $\psi'$ can be found by some ``leader'' vertex in $U$, which reads in the neighborhood $U'' = N^{\bar R}(U)$. By Corollary~\ref{aug:cor}, it is possible to modify edge colors within $U'$ so that all edges in $U$ become colored, for large enough constant $K'$.  Note that, since there are no paths in $H_c(U)$ from $U$ to outside $U''$, we can check whether the coloring is acyclic by looking within $U''$ alone.

The distance between clusters in the same class is at least $2 \bar R + 1$. Moreover,  if $u,v$ are adjacent in $G^{2 \bar R}$, their distance in $G$ is at most $2 \bar R$. So each cluster $U$ has weak diameter at most $O(\bar R \log n)$, and also the balls $N^{\bar R}(U_1)$ and $N^{\bar R}(U_2)$ must be disjoint for any two clusters  $U_1$ and $U_2$ of the same class. We can process each cluster independently without interfering with others.   Therefore, Line \ref{ln:outerloopstart} to Line \ref{ln:lastline}, including implementation of $\cut$, can be simulated locally in $ O( \bar R \log n)$ rounds.  Since there are $O(\log n)$ classes, the total running time is $O(\bar R \log^2 n)$.
\end{proof}

\subsection{Implementing $\cut$} 
\label{sec32}
 Let us define $T = O( \log n )$ to be the number of classes in the network decomposition.  We now describe a few strategies to implement $\cut$, with different parameter choices for the radius $R$. We summarize these rules as follows:

\begin{theorem}
\label{thm32}
The procedure $\cut$ can be implemented so  that w.h.p. the leftover subgraph has pseudo-arboricity at most $ \eps \arb $ and the execution of Algorithm~\ref{alg:main} is good,  with the following values for parameter $R$:
\begin{enumerate}
\item $R = O ( \frac{\log^2 n}{\eps})$ if $\eps \arb\geq \Omega(\log n)$. 
\item $R = O( \frac{\log n}{\eps})$ if $\eps \arb \geq \Omega(\log n)$ and $\cs = \{1, \dots,  \arb (1 + \eps)  \}$ (i.e. forest decomposition).
\item $R = O(\frac{\Delta^{2/(\eps \arb)} \log \Delta \log^2 n}{\arb \eps^2})$ if $\eps \arb \leq \log \Delta$
\item $R = O(\frac{\log^2 n}{\eps})$ if $\eps \arb \geq  \log \Delta$
\end{enumerate}
\end{theorem}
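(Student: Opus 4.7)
The plan is to implement $\cut(C', R)$ by processing each color $c$ and each cluster $C$ independently: the subgraph $H_c(C)$ is a forest on $C'' \setminus C'$, conceptually rooted at its contact points with $C'$, and our task is to delete edges so that no root-to-depth-$R$ path survives. Orienting every deleted edge from child to parent in this rooted view, the out-degree of $v$ in $G^{\text{leftover}}$ is exactly the number of colors for which $v$'s parent-edge was deleted (at most one per color), and this out-degree upper-bounds the pseudo-arboricity of $G^{\text{leftover}}$; so it suffices to force per-vertex out-degree at $\lceil \eps \arb \rceil$ w.h.p.~while also cutting all length-$R$ paths.

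For Case 2 (ordinary FD, $\eps \arb = \Omega(\log n)$, $R = \Theta(\log n/\eps)$), I would sample each edge of each $H_c(C)$ independently with probability $p = \Theta(\log n / R) = \Theta(\eps)$. A vertex $v$ has at most $\lceil \arb(1+\eps) \rceil$ parent-edges summed over colors, each deleted independently with probability $p$, so the expected out-degree is $\Theta(\eps \arb)$, and Chernoff (using $\eps \arb = \Omega(\log n)$) concentrates this at $\lceil \eps \arb \rceil$ w.h.p. Each specific root-to-depth-$R$ path survives with probability $(1-p)^R \leq n^{-\Omega(1)}$, and a union bound over $\poly(n)$ triples (cluster, color, candidate depth-$R$ endpoint) yields goodness w.h.p. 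Cases 1 and 4 are the LFD variants with $R = O(\log^2 n/\eps)$: here the number of colors meeting $v$ is bounded by $\Delta$ instead of $\arb(1+\eps)$, but the extra $\log n$ factor in $R$ lets us use a smaller $p = \Theta(\log \Delta / R)$ (Case 4, $\eps \arb \geq \log \Delta$) or $\Theta(\log n/R)$ (Case 1, $\eps \arb = \Omega(\log n)$) so that Chernoff on mean $\Theta(p \Delta)$ still concentrates per-vertex load at $O(\eps \arb)$.

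Case 3, with the tiny budget $\eps \arb \leq \log \Delta$, is the main obstacle and is where the multi-stage load-balancing technique of \cite{SV19} enters. When $k = \lceil \eps \arb \rceil$ can be below $\sqrt{\log \Delta}$, a one-shot sampling scheme cannot simultaneously cut paths and concentrate out-degree: the Gaussian tail $\sqrt{\log \Delta}$ already exceeds $k$. I would therefore run $T = \Theta(\log n)$ sampling stages; at each stage, an edge is \emph{eligible} only if both endpoints still carry accumulated out-degree below a threshold $L = \Theta(k/T)$, and each eligible edge is sampled with a small probability. Ineligible edges are never deleted, which gives the deterministic load bound $T L \leq k$ on the final pseudo-arboricity. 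The hard part is proving path-cutting under eligibility restrictions: I would argue, as in \cite{SV19}, that on any root-to-depth-$R$ path only an $O(\arb/L)$ fraction of vertices can become ineligible in a given stage without the induced subgraph violating the global arboricity bound $\arb$, so a constant fraction of each path remains ``live'' throughout. The inflated radius $R = \Theta(\Delta^{2/k} \log \Delta \log^2 n /(\arb \eps^2))$ is tuned so that, after a union bound over the $\Delta^{O(R)}$ candidate depth-$R$ descendant paths emerging from any vertex of $C'$, enough independent cut opportunities survive along each path to kill it with probability $1 - n^{-\Omega(1)}$; the $\Delta^{2/k}$ factor is precisely what balances the per-edge cut probability forced by the tight budget $k$ against the exponential path count.
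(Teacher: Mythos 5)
Your proposal has a recurring load-accounting gap that undermines it across all four cases: you compute the expected out-degree added by a \emph{single} invocation of $\cut$, but $\cut$ is invoked once for each of the $T = \Theta(\log n)$ classes of the network decomposition, and the load accumulates across all of them. In Case 2, with $p = \Theta(\log n/R) = \Theta(\eps)$ the accumulated expected out-degree over $T$ invocations is $\Theta(T\eps a) = \Theta(\eps a \log n)$, blowing the budget $\lceil \eps a\rceil$ by a $\log n$ factor. Shrinking $p$ to $\Theta(\eps/T)$ restores the budget, but then $(1-p)^R = e^{-\Theta(1)}$: each candidate path survives with constant probability, so goodness fails. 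No single i.i.d.\ edge-sampling probability can reconcile $R = O(\log n/\eps)$ with load $O(\eps a)$. The paper's Case 2 instead deletes \emph{deterministically} given a small random seed: for each invocation and each color $c$, sample $J_c$ uniformly in $[N]$ with $N = \lfloor R/2\rfloor$, and delete exactly the parent-edges at tree-depths $\equiv J_c \pmod{N}$. Every length-$R$ path is then cut with probability one (so no union bound over paths is needed), while each (invocation, color) pair deletes $v$'s parent with probability exactly $1/N = O(\eps/T)$; the only probabilistic claim is load balance, via Lemma~\ref{chernoff-ext}.

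The child-to-parent orientation of deleted edges within $H_c(C)$ also gives you no deterministic cap on pseudo-arboricity in Cases 3 and 4: under edge-side i.i.d.\ sampling the number of colors at $v$ whose parent-edge is deleted can be $\Theta(\Delta)$, and if you shrink $p$ to tame this you again lose path cutting, forcing $R \gg \log^2 n/\eps$ whenever $\Delta \gg a$. The paper's Algorithm~\ref{alg:cut-im} is \emph{vertex-driven}, not edge-driven: it fixes one $3a$-orientation $J$ of the whole graph, and in each invocation an \emph{underloaded} vertex $v$ flips a single Bernoulli-$p$ coin and, on success, deletes one uniformly random out-edge of $v$ under $J$ and increments its own load counter $L(v)$. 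Since a vertex stops participating as soon as $L(v) \geq \eps a$, the pseudo-arboricity of $G^{\text{leftover}}$ is bounded by $\lceil\eps a\rceil$ with probability one, not merely w.h.p.\ or in expectation. (Case 1 is handled yet differently: the paper applies Proposition~\ref{decompose-prop3} as a black-box diameter-reduction to each $H_c(C)$ with error parameter $\eps/(2T)$, so that the union over $T$ invocations of leftover edge-sets has arboricity $\leq T \cdot \lceil\eps a/(2T)\rceil \leq \eps a$ once $\eps a \geq \Omega(\log n)$.)

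Finally, the structural claim you lean on for Case 3 --- that in any stage only an $O(a/L)$ fraction of vertices on a live branch can become ineligible, by an arboricity argument --- is neither what the paper proves nor obviously true: arboricity constrains the edge set, whereas ineligibility depends on the random sampling history and has no deterministic combinatorial characterization. The paper's Proposition~\ref{succ-lemma} and Lemma~\ref{lem37} argue probabilistically: the load $L(u)$ is stochastically dominated by a Binomial with mean at most $Tp$, so $u$ is overloaded with probability at most $r = F_+(Tp, \lceil\eps a\rceil)$; since consecutive edges of a path share a vertex, any subset $S$ of a path contains $\geq |S|/2$ distinct vertices, so Lemma~\ref{chernoff-ext} applies with $q = \sqrt{r}$ and gives $\Pr\bigl(|E_1(P)| > (1-\eta)R\bigr) \leq F_+\bigl(R\sqrt{r}, (1-\eta)R\bigr)$. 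The parameters $\eta$ and $R$ are then tuned so this sits below $\Delta^{-R}/\poly(n)$, and a union bound over the $\leq n\Delta^{R-1}$ candidate length-$R$ paths finishes. That tuning --- not a per-stage vertex count --- is what produces the $\Delta^{2/\lceil\eps a\rceil}$ factor in $R$ for Case 3 and the $O(\log^2 n/\eps)$ bound for Case 4.
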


Theorem~\ref{combine-thm0} will follow directly from Theorem~\ref{alg2-thm} and Theorem~\ref{thm32}. We show Theorem~\ref{thm32} here; the first two results follow from straightforward diameter-reduction algorithms.
\begin{proof}[\textnormal{\textbf{Proof of Theorem~\ref{thm32}(1)}}]
We apply Proposition~\ref{decompose-prop3} to  $H(U)$ with parameter $\eps' =  \eps/(2 T)$ in place of $\eps$. This reduces the diameter of each forest $H_c(U)$ to $D = O( \log^2 n /  \eps )$ and removes an edge-set of arboricity at most $\lceil \eps \arb / (2 T) \rceil$.  In particular, when $R \geq D$, the execution of Algorithm~\ref{alg:main} is good. Over the $T$ iterations of Algorithm~\ref{alg:main}, the arboricity of $G^{\text{leftover}}$ is at most $T \cdot \lceil \eps \arb / (2 T) \rceil$; since $T = O(\log n)$ and $\eps \arb \geq \Omega(\log n)$, this is at most $\eps \arb$.
\end{proof}

\begin{proof}[\textnormal{\textbf{Proof of Theorem~\ref{thm32}(2)}}]
For each color $c$, we choose an arbitrary root for each tree of $H_c(U)$. Next, we choose an integer $J_c$ uniformly at random from $[N]$, where $N = \lceil 4 T / \eps \rceil$, and set $R = 2 N + 1= \Theta(\frac{\log n}{\eps})$. Then $\cut(U, R)$ removes all edges $e$ in each $H_c(U)$ whose tree-depth $d_e$ satisfies $d_e \equiv J_c \mod  N $. After this deletion step, each $H_c(U)$ has path length at most $2 N < R$. So $V \setminus U''$ is disconnected from $U'$ with probability one and Algorithm~\ref{alg:main} is always good.

When the algorithm removes any edge $e = uv$, where $u$ is the parent of $v$ in the rooted tree of $H_c(U)$, we can orient edge $e$ away from $v$ in $G^{\text{leftover}}$. The outdegree of $v$ in $G^{\text{leftover}}$ is then $\sum_{i=1}^{T} \sum_{c \in \cs} Y_{i,c}$, where $Y_{i,c}$ is the indicator function that $v$ has its $c$-colored parent edge removed when processing class $i$.  For a subset $S$ of $[T] \times \cs$, we have $\Pr(\bigwedge_{(i,c) \in S} Y_{i,c} = 1)\leq q^{|S|}$ where $q = 1/N$.  Note that $|\cs| T q \leq \eps \arb / 2$, so by Lemma~\ref{chernoff-ext}, the probability that the outdegree of $v$ exceeds $\eps \arb$ is at most $F_+( \eps \arb/2, \eps \arb) \leq e^{-\eps \arb/6}$. When $\eps \arb \geq \Omega(\log n)$, then w.h.p. every vertex has at most $\eps \arb$ out-neighbors in the orientation.
\end{proof}

 We now turn to the last two results of Theorem~\ref{thm32}.   We assume that $\eps \arb = O(\log n)$, as otherwise we could apply Theorem~\ref{thm32}(1). In particular, from our assumption that $\eps > 1/n$, we have $\arb \leq O(n \log n)$. The algorithm for $\cut(U, R)$ here has two stages: an initialization procedure, which is called at the beginning of Algorithm~\ref{alg:main}, and an on-line procedure for a given cluster $U'$. 
 
 \begin{algorithm}[H]
\caption{\textsc{CUT$(U, R)$}}\label{alg:cut-im}
\begin{algorithmic}[1]\small
\Function{Initialization}{}
\State Using Theorem~\ref{ace1prop}, obtain a $3 \arb$-orientation $J$ of $G$

\State For each vertex $v \in V$, set $L(v) \leftarrow 0$
\EndFunction
\Function{\textsc{CUT$(U, R)$ execution}}{}
\For{each vertex $v \in U''$ with $L(v) < \eps \arb$}

\State Draw independent Bernoulli-$p$ random variable $X_v$
\If{$X_v = 1$} 

\State Select an edge $e$ uniformly at random from the out-going edges of $v$ under $J$.

\State Remove edge $e$ from $G$

\State $L(v) \leftarrow L(v) + 1$
\EndIf
\EndFor 
\EndFunction
\end{algorithmic}
\end{algorithm}

We say a vertex $u$ is {\it overloaded} if $L(u) \geq \eps \arb$, otherwise it is \emph{underloaded}; thus, Algorithm~\ref{alg:cut-im} only modifies underloaded vertices.  For an edge $e$ oriented from $u$ to $v$ in $J$, we say that $e$ is {\it overloaded} or {\it underloaded} if $u$ is.  Given a path $P$, we let $E_0(P)$ and $E_1(P)$ denote the set of underloaded and overloaded edges in $P$ respectively.  A length-$R$ path in $H_c(U)$ is called a {\it live branch}. 
\begin{proposition}
\label{succ-lemma}
Let $\eta \in (0,1/2]$.  If $p \geq \frac{30 \arb \log n}{\eta R}$, then w.h.p., either the execution of Algorithm~\ref{alg:main} is good, or some live branch $P$ has $|E_0(P)| < \eta R$.
\end{proposition}
\begin{proof}
Any path from $U'$ to $V \setminus U''$ has length at least $R$, hence will pass over some live branch. So it suffices to show that any live branch $P$ in $H_c(U)$ during an invocation of $\cut(U,R)$ is cut. Each underloaded edge of $P$ gets removed with probability at least $p/(3 \arb)$, and such removal events are negatively correlated. Thus, for $|E_0(P)| \geq \eta R$, the probability that $P$ remains is at most $(1 - p/(3 \arb))^{\eta R} \leq e^{-p R \eta /(3 \arb)}$. By our choice of $p$, this is at most $e^{-30/3 \log n} \leq n^{-10}$.

Each forest $H_c(U)$ has at most $n^2$ live branches, and Algorithm~\ref{alg:main} invokes $\cut(U,R)$  at most $O(n \log n)$ times, and the number of colors $c$ is at most $m  \arb (1+\eps) \leq 2 n \arb^2 \leq O(n^3 \log^2 n)$. Hence, by a union bound, we conclude the algorithm is good or some live branch has $|E_0(P)|<  \eta R$.
\end{proof}

\begin{lemma}
\label{lem37}
If $R \geq \Omega( \frac{ \Delta^{\frac{2 + 4 \eta}{ \eps \arb }}  \log^2 n}{\eta \eps})$ for some $\eta \in (0,1/2]$, then $p$ can be chosen so that Algorithm~\ref{alg:main} is good w.h.p. 
 \end{lemma}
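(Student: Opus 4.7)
The plan is to combine Proposition~\ref{succ-lemma} with a union bound that controls the appearance of live branches containing too many overloaded edges. I would set $p := \tfrac{K''\arb\log n}{\eta R}$ at the lower end of the range allowed by the proposition. Proposition~\ref{succ-lemma} then tells us that w.h.p.\ either Algorithm~\ref{alg:main} is good, or some live branch $P$ satisfies $|E_0(P)| < \eta R$, equivalently $|E_1(P)| > (1-\eta)R$; the remaining task is to rule out this second case.

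For the overload probability of a single vertex, I would observe that the $C''$ balls are disjoint within each class of the network decomposition (as already used in the proof of Theorem~\ref{alg2-thm}), so a vertex $v$ lies in $C''$ for at most one cluster per class, hence in at most $T = O(\log n)$ clusters total. Each such occurrence generates a fresh Bernoulli$(p)$ while $v$ is underloaded, so $L(v)$ is stochastically dominated by $\mathrm{Binomial}(T,p)$ and
\[
q := \Pr[v \text{ is overloaded}] \;\leq\; F_+\!\bigl(Tp,\lceil \eps\arb\rceil\bigr) \;\leq\; \Bigl(\tfrac{eTp}{\lceil\eps\arb\rceil}\Bigr)^{\lceil\eps\arb\rceil}.
\]
Since each vertex has its own private Bernoullis, overload events for distinct vertices are mutually independent. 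An edge is overloaded iff its tail under $J$ is, and along a live branch $P$ each vertex is the tail of at most two edges of $P$, so $|E_1(P)|>(1-\eta)R$ forces at least $(1-\eta)R/2$ distinct vertices on $P$ to be overloaded, giving
\[
\Pr[|E_1(P)| > (1-\eta)R] \;\leq\; \binom{R}{(1-\eta)R/2}\, q^{(1-\eta)R/2}.
\]

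There are at most $\poly(n)\cdot \Delta^R$ live branches across all colors, clusters, starting vertices, and extensions, so a union bound followed by taking logarithms reduces the requirement to $\log(1/q) \geq \tfrac{2\log \Delta}{1-\eta}+O(\tfrac{\log n}{R})$. Using $\tfrac{1}{1-\eta}\leq 1+2\eta$ for $\eta\leq \tfrac12$, so that $\tfrac{2}{1-\eta}\leq 2+4\eta$, and plugging in the bound on $q$ with $p=\Theta(\arb\log n/(\eta R))$ and $T=O(\log n)$, this reduces algebraically to the hypothesis $R \geq \tfrac{K\Delta^{(2+4\eta)/\lceil\eps\arb\rceil}\log^2 n}{\eta\eps}$ for a sufficiently large constant $K$. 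The main obstacle is the tight bookkeeping to obtain exactly this exponent: one must carefully exploit that each path vertex is the tail of at most two edges (paying only a factor of $2$, not $\Delta$), and invoke $1/(1-\eta)\leq 1+2\eta$ at the right step rather than absorbing a suboptimal factor into the constant $K$.
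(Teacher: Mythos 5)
Your proposal is correct and follows essentially the same approach as the paper: choose $p = K''\arb\log n/(\eta R)$ per Proposition~\ref{succ-lemma}, bound the per-vertex overload probability by $F_+(Tp,\lceil\eps\arb\rceil)$, use independence of vertex loads plus the observation that a path vertex is the tail of at most two path edges, and union-bound over the $\poly(n)\cdot\Delta^R$ live branches. The paper packages the tail bound via Lemma~\ref{chernoff-ext} with parameter $\sqrt{r}$ rather than your direct binomial-coefficient union bound (these are equivalent up to constants), and explicitly verifies $p\leq 1$ using the standing assumption $\eps\arb = O(\log n)$ stated just before the two implementations of $\cut$, a check your sketch implicitly presumes but does not carry out.
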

\begin{proof}
Let $t =  \eps \arb$ and $p = \frac{30 \arb \log n}{\eta R}$. We set $R = \frac{K \Delta^{\frac{2 + 4 \eta}{t }}  \log^2 n}{\eta \eps}$ for some constant $K$, and we can calculate:
\begin{equation}
\label{peqn}
p  = \frac{30 \arb \log n}{\eta R} \leq \eps \arb \cdot \frac{30}{K} \cdot \frac{1}{\log n \cdot \Delta^{\frac{2 + 4 \eta}{t}}} ~.
\end{equation}

Since we are assuming $\eps \arb \leq O(\log n)$, we have $p \in [0,1]$ for large enough $K$. By \Cref{succ-lemma}, it now suffices to show that w.h.p. $|E_1(P)| < (1-\eta) R$ for all live branches $P$. 

Consider the probability that all edges in $S$ are overloaded where $S$ is an arbitrary subset of the edges in a given live branch $P$.   Since $P$ is a path, $S$ involves at least $|S|/2$ distinct vertices. For each such vertex $u$, the value $L(u)$ is a truncated Binomial random variable with mean at most $T p$. Hence $u$ is overloaded with probability at most $r = F_+(T p, t)$.  Accordingly, the probability that all edges in $S$ are overloaded is at most $r^{|S|/2}$.

Since $T \leq O(\log n)$, Eq.~(\ref{peqn}) implies that $p T \leq \frac{\eps \arb}{10 e \Delta^{\frac{2 +4 \eta}{t}}}$ for large enough $K$, and therefore
$$
F_+( T p, t) \leq \Bigl( \frac{e T p}{t} \Bigr)^t \leq \Bigl( \frac{e  \cdot \eps \arb}{10 e \Delta^{(2+4 \eta)/t} \cdot  \eps \arb } \Bigr)^t \leq \frac{1}{10 \Delta^{2 + 4 \eta}}~.
$$

So we apply Lemma~\ref{chernoff-ext} with parameter $q = \sqrt{r}$ for the random variable $|E_1(P)|$ to get:
\begin{align*}
&\Prob{ |E_1(P)| > (1-\eta) R} \leq F_+(R \sqrt{r}, (1-\eta) R) \leq \Bigl( \frac{e  \sqrt{r}}{1-\eta} \Bigr)^{(1 - \eta) R}  \\
&\qquad \qquad \leq  \Bigl(  \frac{e}{\sqrt{10} (1-\eta) \Delta^{1 + 2 \eta}} \Bigr)^{(1-\eta)R} \leq (e/\sqrt{10})^{R/2} \Delta^{-(1-\eta) (1+2 \eta) R} \leq (0.93 / \Delta)^R  ~.
 \end{align*}
 
 Since $\eta \leq 1/2$ and $R \geq \omega(\log n)$, this is at most $ \poly(1/n) \cdot \Delta^{-R}$.   There are at most $n \Delta^{R-1}$ paths of length $R$. By a union bound, we conclude that w.h.p. $|E_0(P)| \geq \eta R$  for all such paths.   
\end{proof}

\begin{proof}[\textnormal{\textbf{Proof of Theorem~\ref{thm32}(3),(4)}}] 
In the algorithm for $\cut$, each vertex removes at most $ \eps \arb $ outgoing neighbors under $J$. Hence, $\arb^*(G^{\text{leftover}}) \leq  \eps \arb  $. Given $\eta$, we choose $R, p$ according to Lemma~\ref{lem37} so that Algorithm~\ref{alg:main} is good w.h.p.  For result (3), we set $\eta = \frac{  \eps \arb }{2 \log \Delta}$, giving $R  \leq O( \frac{\Delta^{2/(\eps \arb)} \log \Delta \log^2 n}{\arb \eps^2})$.   For result (4),  we set $\eta = 1/2$, giving $R \leq O(\frac{\log^2 n}{\eps})$.
\end{proof}

We remark that, by orienting edges  in terms of the forest $H_c(U)$ instead of the fixed orientation $J$, the bound on $R$ can be reduced to $\frac{\Delta^{1/(\eps \arb)} \log \Delta \log^2 n}{\arb \eps^2}$; this leads to an $O( \frac{\Delta \log \Delta \log^4 n}{\eps} )$-round algorithm for ordinary forest-decomposition when $\eps \arb \geq 3$. We omit this analysis here.

\subsection{Putting Everything Together}
\label{combine-sec}

We now need to combine the forest decomposition of the main graph with a forest decomposition on the leftover graph.  For ordinary coloring, this is straightforward; we summarize it as follows:
\begin{theorem}
\label{combine-thm1}
We can obtain an $(1+\eps) \arb$-FD of $G$ of diameter $D$, under the following conditions:
\begin{itemize}
\item If $\eps \arb \geq 3$, then $D \leq n$, and the complexity is $O( \frac{\Delta^2 \log \Delta \log^4 n}{\eps} )$.
\item  If $4 \leq \eps \arb \leq \log \Delta$, then $D \leq O( \frac{\log n}{\eps})$, and the complexity is $\frac{\Delta^{O(1/\eps \arb)} \log \Delta \log^4 n}{\arb \eps^2}$.
\item If $\eps \arb \geq \log \Delta$,  then $D \leq O( \frac{\log n}{\eps})$, and the complexity is $O(\frac{\log^4 n}{\eps})$.
\item If $\eps^2 \arb \geq \Omega(\log \Delta)$,  then $D \leq O( \frac{1}{\eps})$, and the complexity is $O(\frac{\log^4 n}{\eps})$.
\item If $\eps \arb \geq \Omega(\log n)$, then $D \leq O( \frac{1}{\eps})$, and the complexity is $O(\frac{\log^3 n}{\eps})$.
\end{itemize}
\end{theorem}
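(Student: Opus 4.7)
The approach is a three-stage pipeline that reuses the machinery already in place. \emph{Stage 1:} invoke Algorithm~\ref{alg:main} with palette $\{1,\dots,\lceil(1+\eps')\arb\rceil\}$ for a suitably chosen $\eps'$ that is a small constant multiple of $\eps$. By Theorem~\ref{combine-thm0} this produces a partial forest decomposition whose leftover subgraph $G^{\text{leftover}}$ has pseudo-arboricity at most $\lceil\eps'\arb\rceil$, with the round complexity dictated by the bullet of Theorem~\ref{combine-thm0} corresponding to the current regime of $(\eps',\arb,\Delta,n)$. \emph{Stage 2:} color $G^{\text{leftover}}$ with a disjoint block of $\lfloor(2+\eps')\lceil\eps'\arb\rceil\rfloor$ fresh colors using Theorem~\ref{ace1prop}(4); this takes only $O(\log n/\eps')$ rounds and completes a proper FD of $G$. \emph{Stage 3:} if a diameter bound is required, run Stages 1--2 with the smaller parameter $\eps/2$, then apply Proposition~\ref{decompose-prop3} with parameter $\eps/2$ to push the diameter down to $O(\log n/\eps)$, or to $O(1/\eps)$ when the hypothesis of Proposition~\ref{decompose-prop3} holds.

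\textbf{Color accounting.} Summing the three stages, the total number of forests is at most
\[
\bigl\lceil(1+\eps')\arb\bigr\rceil \;+\; \bigl\lfloor(2+\eps')\lceil\eps'\arb\rceil\bigr\rfloor \;+\; \bigl\lceil\eps\arb/2\bigr\rceil.
\]
Choosing $\eps'\le\eps/10$ gives $2+\eps'\le 2.1$, and the inequality $\lceil\eps'\arb\rceil\le \eps'\arb+1$ bounds the middle term by $2.1(\eps'\arb+1)$. Straightforward arithmetic then shows the sum fits inside $\lceil(1+\eps)\arb\rceil$ as soon as $\eps\arb$ exceeds a small absolute constant (namely $\ge 3$ with no diameter reduction and $\ge 4$ once Stage~3 is added). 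For runtime, Stages~2 and~3 contribute only $O(\log n/\eps) + O(\log^2 n/\eps)$ rounds, so the total is dominated by Stage~1.

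\textbf{Case-by-case and obstacle.} The five bullets are obtained by pairing the corresponding bullet of Theorem~\ref{combine-thm0} for Stage~1 with the appropriate case of Proposition~\ref{decompose-prop3} for Stage~3: bullets 1--2 use Theorem~\ref{combine-thm0}(3) (so that the factor $\Delta^{2/\lceil\arb\eps'\rceil}$ simplifies to $\Delta^2$ once $\lceil\arb\eps'\rceil\ge 1$), bullet 3 uses Theorem~\ref{combine-thm0}(4), bullet 4 again uses Theorem~\ref{combine-thm0}(4) together with the tighter case of Proposition~\ref{decompose-prop3} (whose hypothesis $\arb\ge\Omega(\log\Delta/\eps^2)$ follows from $\eps^2\arb\ge\Omega(\log\Delta)$, which in turn implies $\eps\arb\ge\log\Delta$ so Stage~1 runs in $O(\log^4 n/\eps)$), and bullet 5 uses Theorem~\ref{combine-thm0}(2) (saving the extra $\log n$ because $\mathcal C$ is a fixed set of $\lceil(1+\eps')\arb\rceil$ colors) paired with the $D=O(1/\eps)$ case of Proposition~\ref{decompose-prop3}. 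The main obstacle is the tight color-accounting at the small end of the range, $\eps\arb\in\{3,4\}$: here there is essentially no slack between the target $\lceil(1+\eps)\arb\rceil$ and the naive sum, so $\eps'$ must be chosen carefully---typically $\eps'=\Theta(\eps)$ when $\eps\arb$ is well above the threshold, but $\eps'=\Theta(1/\arb)$ right at the boundary (which forces $\lceil\eps'\arb\rceil=1$ and caps the leftover contribution at $2$, giving exactly $\arb+3\le\lceil(1+\eps)\arb\rceil$). The remaining bookkeeping---simplifying runtimes and verifying the hypotheses of Proposition~\ref{decompose-prop3}---is mechanical.
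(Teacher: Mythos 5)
Your proposal follows essentially the same three-stage pipeline as the paper: apply Theorem~\ref{combine-thm0} with a scaled-down $\eps'$, color the leftover via Theorem~\ref{ace1prop}, and finish with Proposition~\ref{decompose-prop3} for the diameter bounds, pairing each bullet with the matching cases as you describe. The only deviation is cosmetic: the paper cites Theorem~\ref{ace1prop}(3) for the leftover, but its own color count ($\lfloor 2.01 k'\rfloor$) matches part (4), which is what you correctly use.
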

\begin{proof}
The first step for all these results is to apply Theorem~\ref{combine-thm0} where each edge is given the palette $ \{1, \dots, \arb + \lceil \eps \arb / 10 \rceil \}$. Then $G^{\text{leftover}}$ has pseudo-arboricity at most $k' = \lceil \eps \arb / 10 \rceil$, and  Theorem~\ref{ace1prop}(3) yields a $\lfloor 2.01 k' \rfloor$-FD of these leftover edges.  Taken together, these give a $k$-FD of $G$ for $k = \arb + \lfloor 2.01 \lceil \eps \arb / 10 \rceil \rfloor + \lceil \eps \arb / 10 \rceil$. 

 The runtime bounds follow immediately from the four different cases of Theorem~\ref{combine-thm0} .
 
This immediately gives us the first result in our list.  For the next four results, we apply Proposition~\ref{decompose-prop3} to convert this into a $k + \lceil \eps \arb / 10 \rceil$-FD of $G$, with the given bounds on the diameter.
\end{proof}
 
 For list-coloring, we will combine the main graph and leftover graph by partitioning the color-space $\cs$. Specifically, each vertex $v$ will choose a color set $C_v \subseteq \cs$; we also write $\vec C$ for the ensemble of values $C_u: u \in V$. Given $\vec C$, we define new color palettes for each edge $e = uv$ by
 \begin{align*}
 Q^{\text{main}}(uv) &= Q(uv) \cap C_u \cap C_v \\
Q^{\text{leftover}} (uv) &= Q(uv) \setminus (C_u \cup C_v) 
 \end{align*}
 
We can now describe  two main algorithms for this type of color partition.
\begin{theorem}
\label{gen-partition0}
Suppose that each edge has a palette of size $(1+\eps) \arb$.  Then, w.h.p., we can choose $\vec C$ in $O(\frac{\log n}{\eps})$ rounds with the following palette sizes:
\begin{itemize}
\item If $\eps \arb \geq \Omega(\log n)$, every edge $e$ has $|Q^{\text{main}}(e)| \geq  (1+\eps/2) \arb$ and $|Q^{\text{leftover}}(e)| \geq \eps \arb / 20$.

\item If $\eps^2 \arb \geq \Omega(\log \Delta)$, every edge $e$ has $|Q^{\text{main}}(e)|  \geq (1+\eps/2) \arb$ and $|Q^{\text{leftover}}(e)| \geq \eps^2 \arb / 200$.
\end{itemize}
\end{theorem}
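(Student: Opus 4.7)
The strategy is to randomize the vertex-color-splittings and use concentration to ensure every edge meets both size constraints. The two regimes are distinguished by how tightly the random partitions are coordinated across vertices and by whether a global union bound or a distributed LLL is needed to handle all edges at once.

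For the regime $\eps \arb = \Omega(\log n)$, the plan is to use a globally coordinated partition: sample one bit $b_c \sim \textup{Bernoulli}(\eps/10)$ per color, and set $\mathcal{C}_{v,1} = \{c : b_c = 1\}$ at every vertex $v$. Then $Q_0(e)$ and $Q_1(e)$ depend only on the $|Q(e)|$ independent bits $\{b_c : c \in Q(e)\}$, so Chernoff gives $|Q_0(e)| \geq (1 + \eps/2)\arb$ and $|Q_1(e)| \geq \eps \arb / 20$ each with failure probability $e^{-\Omega(\eps \arb)}$; a union bound over the $m \leq \poly(n)$ edges succeeds by the assumption $\eps \arb = \Omega(\log n)$. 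The globally coordinated bits can be realized in the \local model via shared coins, or equivalently by propagating a short random seed along a network-decomposition skeleton, which is what accounts for the $O(\log n / \eps)$ round bound.

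For the regime $\eps^2 \arb = \Omega(\log \Delta)$, the plan is to use uncoordinated per-vertex randomness: each vertex independently places each color into $\mathcal{C}_{v,1}$ with probability $p = \eps/14$. Now a color $c \in Q(e)$ lands in $Q_1(e)$ only when both endpoints of $e$ chose side $1$, so $|Q_1(e)|$ is a sum of $|Q(e)|$ independent Bernoulli-$p^2$ variables with mean about $\eps^2 \arb/196$, while $|Q_0(e)|$ has mean about $(1 - 2p)(1+\eps)\arb \geq (1+\eps/2)\arb$ with slack. Chernoff then yields a per-edge failure probability of $e^{-\Omega(\eps^2 \arb)}$. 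Since the bad event at edge $uv$ depends only on the coins of $u$ and $v$, the LLL dependency degree is $O(\Delta)$, so the criterion $e p d^2 \leq 1 - \Omega(1)$ from Section~\ref{prelim-sec} reads $\Delta^2 e^{-\Omega(\eps^2 \arb)} = O(1)$, which matches the hypothesis $\eps^2 \arb = \Omega(\log \Delta)$. Running the distributed LLL algorithm of \cite{chung2017distributed} then fits comfortably inside the $O(\log n / \eps)$ budget.

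The main technical obstacle will be Case 1: a naive per-vertex construction fails because squaring the marking probability forces $|Q_1(e)|$ to have mean $O(p^2 \arb)$, which cannot simultaneously satisfy $|Q_0(e)| \geq (1+\eps/2)\arb$ whenever $\eps$ is small. This is what compels the globally coordinated partition; the care will be in implementing that partition within $O(\log n / \eps)$ rounds (for instance via a public-coin model or a derandomization using a shared short seed), and in tuning the constant in $p = \Theta(\eps)$ so that both thresholds are met with enough Chernoff slack for the union bound.
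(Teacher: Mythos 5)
Your second case is essentially the paper's proof: independent per-vertex Bernoulli coins, a Chernoff bound on $|Q_0(e)|, |Q_1(e)|$, and the distributed LLL of \cite{chung2017distributed} with dependency degree $O(\Delta)$. The only discrepancy is numerical slack --- with $p=\eps/14$ the mean of $|Q_1(e)|$ is about $\eps^2\arb/196$ against a target of $\eps^2\arb/200$, which leaves almost no Chernoff margin; the paper uses $p=\eps/10$ (mean $\geq \eps^2\arb/100$ against $\eps^2\arb/200$), giving a comfortable constant-factor gap. This is easily fixed. Your observation that the naive per-vertex scheme cannot handle the first regime (since $p^2=\Omega(\eps)$ would force $|Q_0(e)|$ below $(1+\eps/2)\arb$) is also correct and is precisely why the paper changes strategy there.

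The gap is in the first regime. Globally coordinated bits $b_c$ --- ``every vertex agrees on the same split of $\mathcal C$'' --- cannot be realized in the \local model in $O(\log n/\eps)$ rounds: there is no shared randomness, and on a graph of large diameter a seed cannot be broadcast to all vertices within the round budget. Propagating a seed over a network-decomposition skeleton only yields \emph{per-cluster} agreement, and edges crossing clusters (or whose endpoints lie in clusters of different classes) then have endpoints using different seeds; your argument does not say what happens to such edges. The paper's actual mechanism is different in a crucial way: for each color $c$ it independently samples a $\bigl(O(\log n/\eps),\,\eps/10\bigr)$-\emph{stochastic} network decomposition $L_c$ and assigns $c$ to $\mathcal C_{v,0}$ or $\mathcal C_{v,1}$ by a single Bernoulli coin per \emph{component} of $(V,L_c)$. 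If an edge $e=uv$ is not cut by $L_c$, then $u,v$ see the same coin and $c$ lands deterministically in exactly one of $Q_0(e),Q_1(e)$; if $e$ is cut by $L_c$, the color may be lost entirely. The stochastic decomposition guarantees each edge is cut with probability at most $\eps/10$, so the expected number of wasted colors per edge is at most $(\eps/10)|Q(e)|$, and --- critically --- the per-color decompositions are drawn independently, so the events ``$c$ is usable on $e$'' for different $c$ are independent and Chernoff plus a union bound over the $m\leq n\arb$ edges applies. Without this per-color, locally-generated coordination and an explicit accounting of the cut colors, your Case 1 argument does not go through.
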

\begin{proof}
For the first result, we independently draw a $( O(\frac{\log n}{\eps}), \frac{\eps}{10} )$-stochastic network decomposition $L_c$ for each color $c$, and for each connected component $U$ in the graph $(V,  L_c)$ we draw a Bernoulli-$(\eps/10)$ random variable $X_{c,U}$. Then each vertex $v \in U$ has $c \in C_v$ if and only if $X_{c,U} = 0$.

Now consider some edge $e=uv$ and color $c \in Q(e)$. If $e \in L_c$, then vertices $u$ and $v$ are in the same component $U$ of $(V,  L_c)$, so $c$ is in $Q^{\text{main}}(e)$ or $Q^{\text{leftover}}(e)$ depending on the value $X_{c,U}$.  Thus, $c$ goes into $Q^{\text{main}}(e), Q^{\text{leftover}}(e)$ with probability at least $(1 - \eps/10)  (1-\eps/10) \geq 1 - \eps/5$ and $(1-\eps/10)  \eps/10 \geq \eps/11$, respectively. Since each color operates independently, Chernoff bounds imply that $|Q^{\text{main}}(e)|$ and $|Q^{\text{leftover}}(e)|$ have respective sizes at least $\arb(1 + \eps/2)$ and $\eps \arb/20$ with probability at least $1 - e^{-\Omega(\eps \arb)}$. There are $m \leq n \arb$ edges, so the desired bounds hold with probability at least $1 -  n \arb e^{-\Omega(\eps \arb)}$; since $\eps > 1/n$, this is $1 - 1/\poly(n)$ for $\eps \arb \geq \Omega(\log n)$.

For the second result, we draw an independent Bernoulli-$(\eps/10)$ random variable $X_{c,v}$ for each color $c$ and vertex $v$, and we place $c$ in $C_v$ if and only if $X_{c,v} = 0$. For any edge $e$, the expected size of $Q^{\text{main}}(uv)$ is $\arb(1+\eps)(1 - \eps/10)(1-\eps/10) \geq \arb (1 + \eps/5)$ and the expected size of $Q^{\text{leftover}}(e)$ is  $\arb (1+\eps) (\eps/10)  (\eps/10) \geq \eps^2 \arb/100$. We can use the LLL algorithm of \cite{chung2017distributed}, where each edge $e$ has a bad-event $B_e$ that  $|Q^{\text{main}}(e)| < (1 + \eps/2) \arb$ or $|Q^{\text{leftover}}(e)| < \eps^2 \arb / 200$. When $\eps^2 \arb \geq \Omega(\log \Delta)$, a straightforward Chernoff bound shows  $p \leq \Delta^{-11}$. Also, $B_e$ affects at most $d = 2 \Delta$ other bad-events (corresponding to its neighboring edges). So  the criterion $p d^2 \ll 1$ holds and the LLL algorithm runs in $O(\log n)$ rounds.
\end{proof}

These give the following final results for LFD:
\begin{theorem}
\label{combine-thm3}
Suppose that $G$ is a multigraph where each edge has a palette of size $(1+\eps) \arb$. We can obtain an LFD of $G$ of diameter $D$ w.h.p., under the following conditions:
\begin{itemize}
\item If $\eps \arb \geq \Omega(\log n)$, the complexity is $O( \frac{\log^4 n}{\eps} )$ rounds and $D = O( \frac{\log n}{\eps})$.
\item If $\eps^2 \arb \geq \Omega(\log \Delta)$, the complexity is $O( \frac{\log^4 n}{\eps^2} )$ rounds and $D = O( \frac{\log n}{\eps^2})$.
\end{itemize}
\end{theorem}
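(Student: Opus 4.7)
The plan is to use vertex-color-splitting (Theorem~\ref{gen-partition0}) to divide each palette $Q(e)$ into a ``main'' portion $Q_0(e)$ large enough to run Theorem~\ref{combine-thm0} and a ``backup'' portion $Q_1(e)$ large enough to list-color the resulting leftover via Theorem~\ref{ace1prop}(4); the two pieces are stitched together by Proposition~\ref{ccb1}, and the diameter is enforced by invoking the list-forest-decomposition analogue of Proposition~\ref{decompose-prop3} (from Appendix~\ref{dp3-app}) using reserved palette slack.

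For Case~1 ($\eps\arb \geq \Omega(\log n)$), invoke Theorem~\ref{gen-partition0}(1) in $O(\log n/\eps)$ rounds to obtain $|Q_0(e)| \geq (1+\eps/2)\arb$ and $|Q_1(e)| \geq \eps\arb/20$. Fix a sufficiently large constant $c$ and set $\eps_{\mathrm{main}} = \eps/c$. Running Theorem~\ref{combine-thm0} with parameter $\eps_{\mathrm{main}}$ on $G$ using a sub-palette of $Q_0$ of size $\lceil(1+\eps_{\mathrm{main}})\arb\rceil$ produces a partial LFD $\phi_0$ of some $E_0 \subseteq E$ in $O(\log^4 n/\eps)$ rounds, leaving a leftover $E_1 = E\setminus E_0$ with $\arb^*(E_1) \leq \lceil \eps\arb/c\rceil$. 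Applying Theorem~\ref{ace1prop}(4) to $E_1$ with $Q_1$ yields an LFD $\phi_1$, since for $c$ large enough the inequality $|Q_1(e)| \geq \eps\arb/20 \geq \lfloor(2+o(1))\lceil \eps\arb/c\rceil\rfloor$ holds (constants are absorbed using $\eps\arb = \Omega(\log n)$). Proposition~\ref{ccb1} combines $\phi_0,\phi_1$ into a full LFD of $G$, and the diameter is then driven down to $O(\log n/\eps)$ by the LFD-analogue of Proposition~\ref{decompose-prop3} applied separately on each piece, which is possible because both $Q_0$ and $Q_1$ retain $\Theta(\eps\arb)$ slack after the above allocation.

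For Case~2 ($\eps^2\arb \geq \Omega(\log \Delta)$), invoke Theorem~\ref{gen-partition0}(2) to obtain $|Q_0(e)| \geq (1+\eps/2)\arb$ and $|Q_1(e)| \geq \eps^2\arb/200$. Choose $\eps_{\mathrm{main}} = \eps^2/c$ for a large constant $c$, so that $\arb^*(E_1) \leq \lceil \eps^2\arb/c\rceil$ sits comfortably below $|Q_1|/3$. Since $\eps_{\mathrm{main}}\arb = \Theta(\eps^2\arb) = \Omega(\log\Delta)$, Theorem~\ref{combine-thm0} falls into its third or fourth case and yields runtime $O(\log^4 n/\eps^2)$: the $\Delta^{2/\lceil \arb\eps_{\mathrm{main}}\rceil}$ factor in case~(3) collapses to $O(1)$ because $\arb\eps_{\mathrm{main}} = \Omega(\log\Delta)$, and the hypothesis $\eps^2\arb \geq \Omega(\log\Delta)$ gives $1/(\arb\eps_{\mathrm{main}}^2) = O(1/(\log\Delta\cdot\eps^2))$, which absorbs the $\log\Delta$ in the numerator. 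We then proceed as in Case~1: cover the leftover via Theorem~\ref{ace1prop}(4), combine by Proposition~\ref{ccb1}, and invoke the LFD-analogue of Proposition~\ref{decompose-prop3} with parameter $\Theta(\eps^2)$ to obtain diameter $O(\log n/\eps^2)$ in $O(\log n/\eps^2)$ additional rounds.

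The central technical obstacle is palette bookkeeping: the backup portion $Q_1$ must simultaneously accommodate the $\approx 2\arb^*(E_1)$ colors required by Theorem~\ref{ace1prop}(4) and a comparable reservation for LFD diameter reduction, which forces $\eps_{\mathrm{main}}$ to be substantially smaller than $|Q_1|/\arb$. In Case~2 this squeezes $\eps_{\mathrm{main}}$ down to $\Theta(\eps^2)$, which is ultimately responsible for both the $\eps^{-2}$ runtime and the $\log n/\eps^2$ diameter in that regime. Everything else is routine verification of palette sizes, addition of round complexities, and checking the preconditions of Proposition~\ref{ccb1}.
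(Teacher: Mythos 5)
There is a genuine gap in how you control the diameter. Your proposal colors the leftover $E_1$ via Theorem~\ref{ace1prop}(4), which yields an LFD with \emph{no diameter guarantee}, and then hopes to ``drive down'' the diameter afterward by applying the LFD-analogue of Proposition~\ref{decompose-prop3}. But the diameter-reduction procedure (Proposition~\ref{decompose-prop3a1}) works by \emph{de-coloring} an edge-set $E''$ and then giving it a fresh forest decomposition with colors from $\{1,\dots,k'\}$ together with auxiliary star-forest colors --- these are not colors from the edges' palettes, so the result is no longer a list-forest decomposition. If you instead try to re-color $E''$ from the palettes to preserve the list property, you have simply recreated the original problem on $E''$: its LFD again has unbounded diameter, and you are stuck in a loop. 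The ``$\Theta(\eps\arb)$ slack'' you invoke does not resolve this, because the issue is not palette size but that there is no terminating step that produces a palette-respecting, bounded-diameter coloring of the residual edges.

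The paper escapes this circularity by ordering the steps differently and by using Theorem~\ref{lem:Hpartition3} (an LSFD algorithm) rather than Theorem~\ref{ace1prop}(4) for the leftover. Concretely: it first runs Theorem~\ref{combine-thm0} on palettes $Q_0$ to get a partial LFD $\phi_0$ and a leftover $E'$; it then applies Proposition~\ref{decompose-prop3a1} to $\phi_0$ \emph{only as a de-coloring device}, shrinking the diameter of $\phi_0$ to $O(\log n/\eps)$ while removing another small edge-set $E''$; and finally it colors $E'\cup E''$ all at once via Theorem~\ref{lem:Hpartition3} using the backup palettes $Q_1$. Since the result on $E'\cup E''$ is a list \emph{star}-forest decomposition, its diameter is $2$ unconditionally, and no further diameter reduction is needed. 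The combined coloring from Proposition~\ref{ccb1} then has diameter $O(\log n/\eps)$. Your Case~2 calculation inherits the same problem. The fix is to replace your use of Theorem~\ref{ace1prop}(4) with Theorem~\ref{lem:Hpartition3} (checking that $|Q_1(e)|\geq 4\arb^*(E'\cup E'')+\Omega(1)$, which your slack estimates already support), and to move the diameter reduction to $\phi_0$ \emph{before} coloring the leftover, so its de-colored edges join the pool handed to the LSFD step.
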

\begin{proof}
For the first result, let $\eps' = \eps/1000$. We begin by applying Theorem~\ref{gen-partition0}, obtaining palettes $Q^{\text{main}}, Q^{\text{leftover}}$. We then apply Theorem~\ref{combine-thm0} with respect to palettes $Q^{\text{main}}$; given our bound $|Q^{\text{main}}(e)| \geq (1 + \eps/2) \arb \geq (1 + \eps') \arb$ for all $e$ and $\eps' \arb \geq \Omega(\log n)$, this produces an LFD $\phi$ of $G^{\text{main}}$, along with a leftover graph with pseudo-arboricity at most $\eps' \arb $.  

Next, we apply Proposition~\ref{decompose-prop3a1} to $\phi$ with parameter $\eps'$ to obtain an edge-set $E' \subseteq E^{\text{main}}$ such that $\arb(E') \leq \lceil \eps' \arb \rceil$ and $\phi$ has diameter $O( \frac{\log n}{\eps})$ on $E^{\text{main}} \setminus E'$.  Finally, we apply Theorem~\ref{lem:Hpartition3} to edge-set $E'' = E^{\text{leftover}} \cup E'$ to get an LSFD $\phi''$ of $E''$ with palettes $Q^{\text{leftover}}$; note that $\arb^*(E'') \leq \arb^*(E^{\text{leftover}}) + \arb^*(E') \leq 2 \lceil \eps' \arb \rceil$ and $|Q^{\text{leftover}}(e)| \geq \eps \arb/20 \geq 5 \arb^*( E'' )$ for all $e \in E''$.

Now consider the coloring $\bar \phi$ defined by setting $$
\bar \phi(e) = 
\begin{cases}  \phi''(e) & \text{if $e \in E''$} \\
\phi(e) & \text{if $e \in E \setminus E''$}
\end{cases}
$$
 We claim that any component of any color $c$ can only contain edges from $E''$ or $E \setminus E''$, but not both. For, suppose some vertex $v$ has $c$-colored edges $vu, vu''$ in $E \setminus E'', E''$ respectively. Since $\bar \phi(vu) = \phi(vu)$, we have $c \in Q^{\text{main}}(vu) \subseteq C_v$, but since $\bar \phi(vu'') = \phi''(vu'')$, we have $c \in Q^{\text{leftover}}(v u'') \subseteq \mathfrak C \setminus C_v$. This is a contradiction.
 
In particular, $\bar \phi$ is an LFD of the full graph $G$ and  the diameter of $\bar \phi$ is the maximum of the diameters of $\phi$ on $E \setminus E''$ and $\phi''$ on $E''$. 

The second result is completely analogous, except we set $\eps' = \eps^2/1000$. 
\end{proof}

\section{Star-Forest Decomposition for Simple Graphs}
\label{star-forest-sec}
Let $G$ be a simple graph of arboricity $\arb$. By using Theorem~\ref{orient-cor}, we may assume that we have obtained some $t$-orientation $J$ in $O(\log^3 n/\eps)$ rounds, where $t = \lceil (1 + \eps) \arb \rceil$. We write $J(v)$ for the set of out-neighbors of each vertex $v$; by adding dummy directed edges as necessary, we may assume that $|J(v)| = t$ exactly. 

To obtain a star-forest-decomposition of the graph, consider the following process:  each vertex $v$ in the graph selects a color set $C_v \subseteq \cs$; again, we write $\vec C$ for the ensemble of values $C_u: u \in V$. 
For each vertex $v$, we construct a corresponding bipartite graph $W_v(\vec C)$, whose left-nodes correspond to $\cs$ and whose right-nodes correspond to $J(v)$, with an edge from left-node $c$ to right-node $u$ if and only if $c \in Q(uv) \cap (C_v \setminus C_u)$. We have the key observation:

\begin{proposition}
\label{thm1a}
Let $\delta \in \mathbb Z_{\geq 0}$ be a globally-known parameter. If each $W_v( \vec C)$ has a matching of size at least $t-\delta$, then in $O(1)$ rounds we can generate a partition of the edges $E = E_0 \sqcup E_1$ along with a LSFD $\phi_0$ of $E_0$,  such that $\arb^*(E_1) \leq \delta$.  (In particular, if $\delta = 0$, then $\phi_0$ is an LSFD of $G$).
\end{proposition}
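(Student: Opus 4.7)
The plan is to view the matchings in the bipartite graphs $H_v(C)$ as a distributed coloring rule: each vertex independently colors a subset of its outgoing edges and pushes the unmatched out-edges into $E_1$. Concretely, in one round each vertex $v$ collects $C_u$ and $Q(uv)$ for all $u\in A(v)$ (since $G$ is simple and $|A(v)|=t$, this is constant local data per vertex), locally constructs $H_v(C)$, and by hypothesis computes a matching $M_v$ of size $\geq t-\delta$. For every matched pair $(c,u)\in M_v$ we set $\phi_0(uv)=c$ and place $uv$ in $E_0$; every out-edge of $v$ that is not covered by $M_v$ is placed in $E_1$. This requires $O(1)$ rounds.

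The first thing to check is that $\phi_0$ is well defined. Since $A$ is an orientation, each edge $uv$ has exactly one endpoint that lists it as an out-edge, so at most one vertex ever writes a color onto it; thus the definitions of $E_0$, $E_1$ and $\phi_0$ are consistent. The list condition $\phi_0(uv)\in Q(uv)$ holds by definition of $H_v(C)$, since only colors in $Q(uv)$ appear on the edge from color-node $c$ to right-node $u$.

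The main (and most delicate) step is to verify that for every color $c$, the $c$-colored subgraph under $\phi_0$ is a star-forest. I will argue by a case analysis on whether a vertex $w$ is a $c$-leaf or a $c$-center. If $c\in C_w$, then any edge $vw$ colored $c$ by a neighbor $v$ would require $c\in C_v\setminus C_w$, contradicting $c\in C_w$; hence $w$'s only possible $c$-colored edge is the (at most one) out-edge paired with $c$ in $M_w$, going to a $c$-center. If $c\notin C_w$, then no $c$-colored out-edge can originate at $w$ because coloring such an edge via $M_w$ requires $c\in C_w$; therefore all of $w$'s $c$-colored edges are in-edges from $c$-leaves. Combining the two cases, every connected component of the $c$-colored subgraph is a star whose center is a $c$-center and whose leaves are $c$-leaves, so $\phi_0$ is a valid LSFD on $E_0$.

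Finally, for $E_1$: by construction each vertex $v$ contributes at most $t-(t-\delta)=\delta$ out-edges (with respect to the orientation $A$ restricted to $E_1$), so $A\!\restriction\! E_1$ is a $\delta$-orientation of $E_1$, giving $\arb^*(E_1)\leq \delta$. The special case $\delta=0$ means every out-edge of every $v$ is matched, so $E_1=\emptyset$ and $\phi_0$ is an LSFD of all of $G$. I expect the only subtle part of the write-up to be the star-forest verification above; the construction, runtime, and $\arb^*(E_1)\leq\delta$ bound are then immediate.
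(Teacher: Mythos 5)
Your proof is correct and takes essentially the same approach as the paper's: color each matched out-edge $(c,u)\in M_v$ with $\phi_0(vu)=c$, observe that the matching property plus the leaf/center dichotomy forces each color class to be a star-forest, and orient the unmatched out-edges to witness $\arb^*(E_1)\le\delta$. Your explicit two-case verification of the star-forest property is just a more spelled-out version of the paper's terse sentence ``Since $M_v$ is a matching, the edges of each color $c$ are a collection of stars on the $c$-centers and $c$-leaves.''
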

\begin{proof}
For each edge $(c,u)$ in the matching $M_v$ of $W_v( \vec C)$, we set $\phi(vu) = c$. Thus, all color-$c$ edges have the form $vu$ where $c \in C_v \setminus C_u$ and $(c,u) \in M_v$.  For $c \in C_v$, we say that $v$ is a $c$-leaf and for $c \notin C_v$ we say that $v$ is a $c$-center.  Since $M_v$ is a matching, the edges of each color $c$ are a collection of stars on the $c$-centers and $c$-leaves. The original $t$-orientation $J$ of $G$ in turn yields a $\delta$-orientation for the residual uncolored edges.
\end{proof}

So we need to select $\vec C$ so that every graph $W_v( \vec C)$ has a large matching.  The following two results show how to achieve this by random sampling; the precise details are different for list and ordinary coloring. Given a fixed choice for $\vec C$, we write $W_v$ as shorthand for $W_v(\vec C)$.
\begin{lemma}
\label{sclem1}
Suppose that $\eps \arb \geq 100 (\sqrt{\log \Delta} + \log \arb)$. If $\cs = [t]$ and each set $C_u$ is chosen uniformly at random among $\arb$-element subsets of $\cs$, then for any vertex $v$ there is a probability of at least $1 - 1/\Delta^{10}$ that $W_v( \vec C)$ has a matching of size at least $\arb (1 - \eps)$.
\end{lemma}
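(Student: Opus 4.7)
The plan is to apply the defect version of Hall's theorem to $H_v(C)$, viewed as a bipartite graph with left side $C_v$ (of size $\arb$) and right side $A(v)$ (of size $t$).  Since only vertices of $C_v$ have positive degree on the left, the maximum matching has size $\arb - d$ where $d = \max_{S \subseteq C_v}(|S| - |N(S)|)$, so it suffices to show $d \leq \eps\arb$ with probability at least $1 - 1/\Delta^{10}$.  For $|S| \leq \eps\arb$ this is automatic.  For $|S| = s > \eps\arb$, since $\mathcal{C} = [t]$ we have $|N(S)| = t - Y_S$, where $Y_S = |\{u \in A(v) : S \subseteq C_u\}|$ is a sum of $t$ independent Bernoullis (the $C_u$ being drawn independently) with common parameter $p_s = \binom{\arb}{s}/\binom{t}{s} \leq (1+\eps)^{-s}$.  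Using the union-bound tail estimate $\Pr(Y_S \geq y) \leq \binom{t}{y}p_s^y$ (obtained by summing over $y$-subsets of $A(v)$), followed by a union bound over the $\binom{\arb}{s}$ choices of $S$, the total failure probability is at most
\[
\sum_{s = \eps\arb+1}^{\arb}\binom{\arb}{s}\binom{t}{s - \eps\arb - 1}\, p_s^{t - s + \eps\arb + 1}.
\]

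I will show this sum is dominated by its first term, at $s = \eps\arb + 1$, where the expression reduces to $\binom{\arb}{\eps\arb + 1}p_s^t$.  Using $\ln(1+\eps) \geq \eps/2$ for $\eps \leq 1$, we obtain $p_s^t \leq e^{-\eps st/2} \leq e^{-(\eps\arb)^2/2}$ (since $\eps st \geq \eps\cdot\eps\arb\cdot\arb = (\eps\arb)^2$); the hypothesis $(\eps\arb)^2 \geq 10^4 \log\Delta$ then gives $p_s^t \leq \Delta^{-5000}$.  For the binomial factor, $\binom{\arb}{\eps\arb+1} \leq (e/\eps)^{\eps\arb+1}$, and the hypothesis $\eps\arb \geq 100\log\arb$ forces $\log(1/\eps) \leq \log\arb \leq \eps\arb/100$, whence $\ln\binom{\arb}{\eps\arb+1} \leq (\eps\arb)^2/100 + O(\eps\arb)$.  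Combining, the first term is at most $\exp(-0.48(\eps\arb)^2) \leq \Delta^{-4800}$.  For $s > \eps\arb + 1$, I verify directly that the ratio between consecutive terms is at most $\arb^{-\Omega(1)}$, using $p_s \geq (1-\eps)^s$ to control the denominator and the fact that the geometric factor $(1-\eps)^{t-1}$ (arising from $(p_{s+1}/p_s)^{t-k-1}$) decays as $e^{-\eps\arb}$, which is vanishingly small under the hypothesis.  Hence the full sum is at most $2\cdot \Delta^{-4800} \leq \Delta^{-10}$.

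The main obstacle is the boundary case $s = \eps\arb + 1$: there $\mathbb{E}[Y_S] = tp_s$ may be a non-trivial fraction of $t$, so a direct Chernoff tail bound on $Y_S$ is too weak.  I circumvent this by using the exact identity $\Pr(Y_S = t) = p_s^t$ instead.  The two pieces of the hypothesis play distinct roles: the $\sqrt{\log\Delta}$ term supplies the $(\eps\arb)^2$ slack in the exponent of $p_s^t$, while the $\log\arb$ term absorbs the union-bound blowup $\binom{\arb}{\eps\arb + 1}$ over subsets of size $\eps\arb$.
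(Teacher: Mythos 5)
Your overall strategy---condition on $C_v$, reduce the matching bound to a defect-Hall condition, reformulate the failure event as the existence of a bad set $S$ of colors together with a large set of vertices $u\in A(v)$ containing $S$ in $C_u$, then union bound---is essentially identical to the paper's, just phrased from the $C_v$ side rather than the $A(v)$ side (the paper's ``bad pairs $(X,Y)$'' with $|X|+|Y|=\arb(1+2\eps)$ are, after the change of variable $y=|S|$ and $x=t-|S|+\eps\arb+1$, the same objects). Your bound on the summand at $s=\eps\arb+1$ and the way you split the work between the $\sqrt{\log\Delta}$ and $\log\arb$ parts of the hypothesis also match the paper's calculation at the large-$x$ end.

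However, the way you dispose of the \emph{remaining} terms of the sum is not correct. You assert ``the ratio between consecutive terms is at most $\arb^{-\Omega(1)}$, using $p_s\geq(1-\eps)^s$ to control the denominator,'' but the inequality $p_s\geq(1-\eps)^s$ is false for $s$ near $\arb$: for instance $p_\arb=1/\binom{t}{\arb\eps}\leq(\eps/(1+\eps))^{\arb\eps}\approx e^{-\arb\eps\ln(1/\eps)}$, which is far below $(1-\eps)^\arb\approx e^{-\arb\eps}$ once $\ln(1/\eps)>1$. More importantly, the sum is \emph{not} dominated by its first term with geometric decay: if you write out $\ln(T_{s+1}/T_s)$ explicitly, near $s\approx\arb/2$ the three factors ($\arb^{-\Omega(1)}$ from the binomial ratios, a factor $\exp(-\Theta(\arb\eps))$ from $((\arb-s)/(t-s))^{t-s+k-1}$, and a factor $\exp(+\Theta(\arb\eps))$ from $p_s^{-1}$) roughly cancel and the ratio can be $\Theta(1)$ or larger. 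So the claimed ``first term dominates'' structure is not there to exploit. The fix is exactly what the paper does: do not compare consecutive terms at all, but instead bound each summand $T_s$ directly, splitting into two regimes (roughly $s\leq\arb/2$ and $s>\arb/2$, i.e.\ the paper's Case~II and Case~I), using slightly different crude estimates for the binomials and for $p_s$ in each regime; since each of the at most $O(\Delta)$ terms is individually $\leq\Delta^{-2500}$, the union bound closes. Your derivation of the first-term bound is fine and transfers to the large-$x$/small-$s$ regime, but you also need a separate direct estimate for the large-$s$ regime rather than a ratio argument.

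Two smaller points: you should make the integrality convention for $\eps\arb$ explicit (the paper implicitly works with thresholds like $2\arb\eps$ and $\arb(1+2\eps)$ and tolerates the $\pm1$ slack); and the union-bound tail estimate $\Pr(Y_S\geq y)\leq\binom{t}{y}p_s^y$ is correct and is effectively the same device the paper uses when it fixes a set $X$ of vertices all containing $Y$, so that part is sound.
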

\begin{proof}
Let us suppose that we have fixed $C_v$ to some arbitrary value.  By a slight extension of Hall's theorem, it suffices to show that any set $X \subseteq J(v)$ has at least $|X| - 2 \eps \arb$ neighbors in $W_v$; equivalently, there is no pair of sets $X \subseteq J(v), Y \subseteq C_v$ with $|X| \geq 2 \eps \arb$ and $|Y| = \arb (1 + 2 \eps) - |X|$ such that $W_v$ contains no edges between $X$ and $Y$. We say that such a pair $X,Y$ is \emph{bad}. For any fixed $X, Y$ with $x = |X|, y = |Y| = \arb(1+2 \eps) - x$, the probability that $X, Y$ is bad is given by
$$
\Bigl( \tbinom{t - y}{\arb - y} / \tbinom{t}{\arb} \Bigr)^x =  \Bigl( \tbinom{t -y }{\eps \arb} / \tbinom{t}{\eps \arb} \Bigr)^x \leq (1 - y/t)^{x \eps \arb} \leq e^{-x y \eps \arb/t} \leq e^{-x y \eps/2}.
$$

Summing over the $\binom{\arb (1+\eps)}{x}$ choices for $X$  and $\binom{\arb}{y}$ choices for $Y$ of a given cardinality, the total probability of any bad pair $X, Y$ is at most:
\begin{equation}
\label{xyeq0}
\sum_{x = 2 \eps \arb}^{\arb(1+\eps)}  \binom{\arb (1+\eps)}{x} \binom{\arb}{\arb(1+2 \eps) - x} e^{-x (\arb(1 + 2 \eps) - x) \eps/2}~.
\end{equation}

When  $2 a \eps \leq x \leq \arb/2$, the summand in Eq.~(\ref{xyeq0}) is at most
\begin{align*}
&\tbinom{\arb (1+\eps)}{x} \tbinom{\arb}{\arb(1 + 2 \eps) - x} e^{-x (\arb (1 + 2 \eps) - \arb/2) \eps/2} \leq \tbinom{\arb (1+\eps)}{x} \tbinom{\arb}{x - 2 \eps \arb}  e^{-x \eps \arb/4}  \leq (\arb(1+\eps))^x \arb^x e^{-x \eps \arb/4} \\
&\qquad \qquad \leq (2  \arb^2  e^{-\eps \arb/4} )^x \leq (2  \arb^2  e^{-25 \log \arb - 25 \sqrt{\log \Delta}} )^x \leq (2 \arb^{-23} e^{-25 \sqrt{\log \Delta}})^x \\
& \qquad \qquad  \leq (e^{-25 \sqrt{\log \Delta}})^x  \leq (e^{-25 \sqrt{\log \Delta}})^{2 \eps \arb}  \leq (e^{-25 \sqrt{\log \Delta}})^{200 \sqrt{\log \Delta}} = \Delta^{-5000}.
\end{align*}

In a completely analogous way (but with slightly different numerical constants), the summand of Eq.~(\ref{xyeq0}) for  $\arb/2 \leq x \leq \arb (1 + \eps)$ is at most $\Delta^{-2500}$. Since there are at most $\Delta$ such summands, the overall sum is at most $\Delta^{-2499}$.
\end{proof}

\begin{lemma}
\label{sclem2}
Let $\eps \leq 10^{-6}, \Delta \geq 10^6$, and $\eps \arb \geq 10^6 \log \Delta$.  Suppose  each edge has a palette of size $\arb(1 + 200 \eps)$. If we form each set $C_u$ by selecting each color independently with probability $1 - \eps$, then for any vertex $v$ there is a probability of at least $1 - 1/\Delta^{10}$ that $W_v(\vec C)$ has a matching of size $t$. 
\end{lemma}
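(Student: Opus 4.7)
My plan is to apply Hall's theorem: a matching of size $t$ saturating $A(v)$ exists in $H_v(C)$ if and only if every subset $X \subseteq A(v)$ satisfies $|N_{H_v}(X)| \geq |X|$. I would fix $X$ of size $x$, derive a sharp tail bound on $|N(X)|$, and then union-bound over $X$.

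The key structural observation is that, writing $|N(X)| = \sum_c Y_c$ with $Y_c = \mathbb{1}[c \in N(X)]$, the indicators $Y_c$ are mutually \emph{independent} across colors $c$: each $Y_c$ depends only on $\mathbb{1}[c \in C_v]$ and on $\mathbb{1}[c \in C_u]$ for $u \in X_c := \{u \in X : c \in Q(uv)\}$, and these variable sets are disjoint for different $c$. A direct computation gives $\mathbb{E}[Y_c] = (1-\eps)(1-(1-\eps)^{d_X(c)})$ where $d_X(c) := |X_c|$. Invoking the concavity of $h(d) = 1 - (1-\eps)^d$ with $h(0) = 0$ (so $h(d)/d$ is non-increasing and $h(d) \geq (d/x)\,h(x)$ for $d \in [0,x]$) together with $\sum_{u \in X} |Q(uv)| \geq x \cdot \arb(1+200\eps)$ yields
\[
\mu \;:=\; \mathbb{E}[|N(X)|] \;\geq\; (1-\eps)\,\arb(1+200\eps)\,\bigl(1-(1-\eps)^x\bigr).
\]
This bound is essentially tight when the palettes $Q(uv)$ for $u \in X$ coincide, which is the natural worst case.

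I would then apply the Chernoff bound $\Pr[|N(X)| \leq x-1] \leq \exp(-(\mu-x)^2/(3\mu))$, splitting on the size of $x$. For small $x$ (where $\eps x$ is at most a small constant), the expansion $1-(1-\eps)^x \geq \eps x/2$ gives $\mu \geq \tfrac{1}{2}\arb \eps x \gg x$ by the hypothesis $\arb \eps \geq 10^6 \log \Delta$, producing a Chernoff exponent of order $\arb \eps x$ that comfortably beats the union-bound cost $\binom{t}{x} \leq \Delta^x$. For large $x$ (in particular, whenever $x$ is at least a small constant times $\arb$, so that $\eps x \geq \Omega(\arb \eps) \geq \Omega(\log \Delta)$), the term $(1-\eps)^x$ is at most $\Delta^{-\Omega(1)}$ and hence negligible, so $\mu \geq \arb(1+198\eps)$; the $200\eps$ slack built into the palette size produces a margin $\mu - x \geq 197\,\arb \eps$ for every $x \leq t = \lceil \arb(1+\eps)\rceil$. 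For $x$ close to $t$ I would invoke $\binom{t}{x} = \binom{t}{t-x}$ to keep the union-bound cost small.

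The main obstacle will be the Chernoff calculation in the large-$x$ regime: there the exponent scales only as $\arb \eps^2$, not the stronger $\arb \eps x$ of the small-$x$ regime, so one must verify that the numerical hypotheses $\eps \leq 10^{-6}$, $\arb \eps \geq 10^6 \log \Delta$, and the constant $200$ in the palette size are tuned so that $(\mu-x)^2/(3\mu) = \Omega(\arb \eps^2)$ dominates the $\log \binom{t}{t-x}$ union-bound contributions with room to spare, yielding the claimed failure probability $\Delta^{-10}$. This is where the concrete constants in the lemma earn their keep.
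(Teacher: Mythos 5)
Your overall architecture matches the paper's: Hall's theorem on subsets $X \subseteq A(v)$, independence of the indicators $Y_c$ across colors, and a concavity/secant argument converting $\sum_{u \in X} |Q(uv)| \geq x \cdot \arb(1+200\eps)$ into a lower bound on the relevant quantity. The gap is in the concentration step. The plain multiplicative Chernoff lower tail $\Pr[|N(X)| \leq x-1] \leq \exp(-(\mu-x)^2/(3\mu))$ is quantitatively insufficient in the regime $x = \Theta(\arb)$, in two distinct ways. First, for $x$ near $t$ you correctly compute $\mu - x \geq 197\,\arb\eps$, so the exponent is $\Theta(\arb\eps^2)$; but under the stated hypotheses $\arb\eps^2$ need \emph{not} be $\Omega(\log \Delta)$. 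Since $\eps$ can be as small as $10^6 \log \Delta / \arb$, one can have $\arb \eps^2 = (\arb\eps)\cdot \eps \approx 10^{12} \log^2 \Delta / \arb$, which is $o(\log \Delta)$ once $\arb \gg \log \Delta$ (e.g. $\arb = \Theta(\Delta)$ with $\Delta$ large). So the hoped-for bound $\Delta^{-10}$ is unreachable this way; the constants do not ``earn their keep.'' Second, for $x$ in the bulk (say $x = \arb/2$), the exponent $(\mu-x)^2/(3\mu) \approx \arb/12$ is dominated by the union-bound cost $\log \binom{t}{x} = \Theta(\arb)$ with a larger constant, so the union bound fails there too.

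The underlying reason is that $|N(X)|$ is a sum of independent Bernoullis whose success probabilities are extremely close to $1$ (for a color with $d_X(c) = x \geq \arb/3$, the failure probability is roughly $\eps + e^{-\eps x} \leq \eps + \Delta^{-10^6/3}$), and the $\lambda^2/\mu$-type bound is blind to this: the true lower tail behaves like ``at least $\mu - x$ of nearly-deterministic indicators must fail,'' which is astronomically smaller. This is exactly what the paper's argument captures: it bounds $\mathbf{E}[(1-\theta)^Y]$ with an optimized $\theta$ (together with the secant-line bound on $\log(1-\theta(1-e^{-\eps z_c}))$, i.e. your concavity step applied to the log-MGF rather than to the mean), yielding per-$X$ bounds of the form $\Delta^{-100(s-x)}$ that crush both the $\Delta^{-10}$ target and the $\binom{t}{x}$ union bound. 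To repair your proof you would need to replace the $(\mu-x)^2/(3\mu)$ bound by a variance- or entropy-sensitive tail bound (Bernstein, or the KL-divergence form of Chernoff), applied in a way that retains the per-color failure probabilities $\approx e^{-\eps d_X(c)}$ rather than only their aggregate mean — at which point you have essentially reconstructed the paper's moment-generating-function computation.
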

\begin{proof}
Let $s = \arb(1 + 100 \eps)$. We say that $C_v$ is \emph{good} if $|Q(uv) \cap C_v| \geq s$ for all $u \in J(v)$. We first claim that $C_v$ is good with probability at least $1 - \Delta^{-100}$. For, observe that $|Q(uv) \setminus C_v|$ is a Binomial random variable with mean $|Q(uv)| \eps \leq 2 \eps \arb$. Hence, $\Pr(|Q(uv) \setminus C_v| \geq 100 \eps \arb) \leq F_+( 2 \eps \arb, 100 \eps \arb) \leq ( \frac{e \cdot 2 \eps \arb}{100 \eps \arb})^{100 \eps \arb} \leq 2^{-100 \eps \arb}$; by our assumption $\eps \arb \geq 10^6 \log \Delta$, this is at most $\Delta^{-10^8}$. By a union bound over $u \in J(v)$, the probability that $C_v$ is good is at least $1 - t \cdot \Delta^{-10^8} \geq 1 - \Delta^{-100}$.

We next argue that, conditioned on a fixed choice of good $C_v$, there is a probability of at least $1 - \Delta^{-11}$ that $W_v$ has a $t$-matching. By Hall's theorem, it suffices to show that any set $X \subseteq J(v)$ has at least $|X|$ neighbors in $W_v$; define $B_X$ to be the bad-event that this condition fails for $X$. By considering the exponential generating function for the number of neighbors of $X$, similar to a Chernoff bound argument, we can show that any such set $X$ has
$$
\Pr(B_X) \leq  \Bigl(  \frac{s e^{-\eps x}}{s - x } \Bigr)^{s - x } \Bigl( \frac{s (1 - e^{-\eps x})}{x} \Bigr)^x \qquad \qquad \text{where $x = |X|$}
$$
(The details are somewhat involved, so we defer the proof of this fact to Proposition~\ref{nnb4} in the Appendix.)  We can take a union bound over the $\binom{t}{x}$ possible sets $X$ of cardinality $x$, to get:
$$
\sum_{X} \Pr(B_X) \leq \sum_{x=1}^t \beta_x \qquad \text{for } \beta_x :=  \Bigl(  \frac{s e^{-\eps x}}{s - x } \Bigr)^{s - x } \Bigl( \frac{s (1 - e^{-\eps x})}{x} \Bigr)^x \binom{t}{x}~.
$$

We will upper-bound $\beta_x$ separately over two different parameter regimes:

\paragraph{Case I: $\pmb{\arb/3 \leq x \leq t}$} Here, we have $s e^{-\eps x} \leq (2 \arb) e^{\eps \arb /3 } \leq (2 \Delta) e^{-10^6 (\log \Delta) / 3} \leq 2\Delta^{-1000}$ and $\binom{t}{x} \leq t^{t-x} \leq (2 \Delta)^{t-x}$. Thus, we  use the upper bound:
$$
\beta_x \leq f_1(x) := \bigl( 2 \Delta^{-1000} \bigr)^{s-x} (s/x)^x (2 \Delta)^{t-x},
$$

Letting $g_1(x) = \log f_1(x)$, we compute the derivative $g'_1(x) = -1 + 999 \log \Delta + \log s - \log(4 x)$; since $x \leq s$ and $\Delta \geq 10^6$ this is at least $100 \log \Delta$. Hence, in this range, we have $\beta_x \leq e^{g_2(x)} \leq e^{g_1(s) - 100 (s-x) \log \Delta}$. Since $g_1(s) = (t-s) \log (2 \Delta) \leq 0$, we have $\beta_x \leq \Delta^{-100 (s-x)}$, and the overall sum $\sum_{x=a/3}^{t} \beta_x$ is at most $2 \Delta^{-100 (s-t)} \leq 2 \Delta^{-100}$. 

\paragraph{Case II: $\pmb{1 \leq x \leq \arb/3}$} Here $s(1 - e^{-\eps x}) / x \leq s$ and $\binom{t}{x} \leq t^x$, giving the upper bound:
$$
\beta_x \leq f_2(x) :=  (s e^{-\eps x} / (s-x))^{s-x} (s t)^x,
$$

Again, letting $g_2(x) = \log f_2(x)$, we compute the derivative $g'_2(x) = 1 - \eps s + 2 \eps x + \log t + \log(s-x)$. Now $x \leq \arb/3$ while $\arb \leq s, t \leq 2 \Delta$; along with our bound $\eps \arb \geq 10^6 \log \Delta$, we get $g_2'(x) \leq -100 \log \Delta$ for $x \geq 0$.  Hence, in this range, we have $\beta_x \leq e^{g_2(x)} \leq e^{g_2(0) - 100 x \log \Delta} =  \Delta^{-100 x}$, and so the overall sum $\sum_{x=1}^{a/3} \beta_x$ is at most $2 \Delta^{-100}$.

\bigskip
 
Putting the two cases together, we have shown that, conditional on a fixed good choice of $C_v$, we have $\sum_X \Pr(B_X) \leq 4 \Delta^{-100}$. Since $C_v$ is good with probability at least $1-\Delta^{-100}$, the overall probability that $W_v$ has a $t$-matching is at least $1 - \Delta^{-10}$.
\end{proof}

This leads to our main results for star-forest decomposition:
\begin{theorem}
\label{agab1}
Let $G$ be a simple graph with arboricity $\arb$. 
\begin{itemize}
\item If $\eps \arb \geq \Omega( \sqrt{\log \Delta} + \log \arb )$, then we get an $\arb (1 + \eps)$-SFD in $ O(\frac{\log^3 n}{\eps})$ rounds w.h.p. 
\item If $\eps \arb \geq \Omega(\log \Delta)$, and every edge has palette size $\arb (1+\eps)$, we can get an LSFD in $ O( \frac{\log^3 n}{\eps})$ rounds w.h.p. 
\end{itemize}
\end{theorem}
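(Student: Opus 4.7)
The plan is to (i) produce a low-outdegree orientation $A$ of $G$, (ii) choose the random color sets $C_v$ in a single coordinated step via a local LLL, and (iii) invoke Proposition~\ref{thm1a} to turn the guaranteed matchings in the bipartite graphs $H_v(C)$ into a (partial) star-forest decomposition; for the plain SFD case, a short cleanup step handles the residual edges.

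First I would use Theorem~\ref{orient-cor} to obtain an orientation $A$ with $|A(v)| = t := \lceil (1+\eps/C_0)\arb \rceil$ (padding with dummy out-arcs if needed) for a sufficiently large constant $C_0$, in $O(\log^3 n/\eps)$ rounds.  Every subsequent step will cost only $O(\log n)$ or $O(\log n/\eps)$ rounds, so this orientation step dominates the running time.

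For the SFD case, each vertex $v$ independently samples $C_v$ uniformly among the $\arb$-subsets of $\mathcal C = [t]$.  Define the bad event $B_v$ to be ``$H_v(C)$ has no matching of size $\arb(1-\eps/C_0)$''. By Lemma~\ref{sclem1} (applied with the rescaled slack), $\Pr(B_v) \leq \Delta^{-10}$.  Since $B_v$ is a function of $C_u$ for $u \in \{v\} \cup A(v)$, two bad events $B_v, B_{v'}$ are dependent only when $v, v'$ lie within $G$-distance $2$, so each bad event shares variables with at most $d = O(\Delta^2)$ others.  The criterion $e p d^2 \leq e \cdot \Delta^{-10} \cdot \Delta^4 \ll 1$ holds, and the algorithm of \cite{chung2017distributed} produces a choice of $C$ avoiding all $B_v$ w.h.p.\ in $O(\log n)$ rounds.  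Proposition~\ref{thm1a} with $\delta = O(\arb \eps)$ then yields a partition $E = E_0 \sqcup E_1$ with an SFD $\phi_0$ of $E_0$ on the $t$ colors and $\arb^*(E_1) \leq \delta$.  Applying Theorem~\ref{ace1prop}(3) to $E_1$ with $O(\arb\eps)$ fresh colors, in $O(\log n/\eps)$ rounds, produces an $O(\arb\eps)$-SFD of $E_1$; combining the two pieces yields an $\arb(1+O(\eps))$-SFD, and rescaling $\eps$ by the resulting constant gives the stated bound.

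For the LSFD case the approach is the same but easier.  We rescale so that the given palette size $\arb(1+\eps)$ plays the role of $\arb(1+200\eps_0)$ in Lemma~\ref{sclem2} for $\eps_0 = \Theta(\eps)$, which is admissible precisely because $\arb\eps = \Omega(\log\Delta)$.  Each vertex forms $C_v$ by including each color of $\mathcal C$ independently with probability $1-\eps_0$.  Lemma~\ref{sclem2} gives $\Pr(B_v) \leq \Delta^{-10}$ for the bad event ``$H_v(C)$ lacks a matching of size $t$'', and exactly the same dependency/LLL computation as above produces a good assignment of $C$ in $O(\log n)$ rounds.  Invoking Proposition~\ref{thm1a} with $\delta = 0$ now yields a full LSFD of $G$ with no residual edges, and the total running time is dominated by the orientation at $O(\log^3 n/\eps)$.

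The main obstacle is really just careful bookkeeping: the probabilistic content lives in Lemmas~\ref{sclem1} and~\ref{sclem2}, so the substantive work here is verifying that the bad events have a dependency graph of degree $O(\Delta^2)$ so that the $epd^2 \ll 1$ criterion of \cite{chung2017distributed} is met, and that after rescaling the slack parameter and absorbing the residual star-forest cover the total number of colors stays within $\arb(1+\eps)$.
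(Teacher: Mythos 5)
Your proposal matches the paper's proof essentially step for step: both obtain the $t$-orientation via Theorem~\ref{orient-cor}, invoke the LLL algorithm of \cite{chung2017distributed} with Lemma~\ref{sclem1} (resp.\ Lemma~\ref{sclem2}) and the $d = O(\Delta^2)$ dependency bound to fix the sets $C_v$, then apply Proposition~\ref{thm1a} and, in the SFD case, recolor the pseudo-arboricity-$O(\arb\eps)$ residual via Theorem~\ref{ace1prop}(3). The only cosmetic difference is that you rescale $\eps$ upfront while the paper does so at the end.
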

\begin{proof}
By reparametrizing, it suffices to show that we can get an $\arb (1 + O(\eps))$-SFD of $G$, or an LSFD when every edge has palette size $(1+O(\eps)) \arb$.

After obtaining the orientation $J$, we use the LLL algorithm of \cite{chung2017distributed} to select $\vec C$; here, each vertex $v$ has a bad-event that $W_v(\vec C)$ has maximum matching size less than $t - 2 a \eps$.  By Lemma~\ref{sclem1}, this event has probability at most $p = \Delta^{-10}$ and depends on $d = \Delta^2$ other such events ($u$ and $v$ can only affect each other if they have distance at most $2$). Thus, the criterion $p d^2 \ll 1$ is satisfied and the LLL algorithm runs in $O(\log n)$ rounds.  

Having selected $\vec C$, we apply Proposition~\ref{thm1a} to get a $(1+\eps) \arb$-SFD of $G$, plus a left-over graph of pseudo-arboricity at most $2 \eps \arb$. We finish by applying Theorem~\ref{ace1prop} to get a $6.01 \eps \arb$-SFD of the left-over graph. Overall, we get a $(1 + 7.01 \eps) \arb$-SFD.

The second result is completely analogous, except we use Lemma~\ref{sclem2} to obtain the matchings of $W_v$. In this cases, there is no left-over graph to recolor.
\end{proof}

%second two bounds for Corollary~\ref{star-arb-cor}. 

We remark that the main algorithmic bottleneck for Theorem~\ref{agab1} is obtaining the $t$-orientation. For example, we could alternatively use the algorithm of \cite{SV19b} to obtain the $t$-orientation, and hence obtain the $\arb(1+\eps)$-LSFD, in $\tilde O(\frac{\log^2 n}{ \eps^2})$ rounds.

\section{Lower Bounds on Round Complexity}
In this section, we show lower bounds for the round complexity of randomized \local algorithms for forest decomposition, using the following construction. For given integer parameters $\arb \geq 2$ and $t \geq 1$, we can form a multigraph $G$ beginning with four named vertices $x_1, x_2, y_1, x_2$. We then put $\lfloor \arb/2 \rfloor$ parallel edges from $x_1$ to $x_2$ and $\lfloor \arb/2 \rfloor$ parallel edges from $y_1$ to $y_2$.  We insert a path $P_1$ with $t+2$ vertices from $x_1$ to $y_1$, with $a$ parallel edges between successive vertices on the path (Thus, $P_1$ contains $t$ vertices aside from $x_1, y_1$), and we insert a second path $P_2$ of $t + 2$ vertices arranged in a line from $x_2$ to $y_2$ with $a$ parallel edges. See Figure~\ref{fig4}.

\begin{figure}[H]
\centering
\begin{subfigure}[b]{0.49\textwidth}
\centering
\includegraphics[scale=0.65]{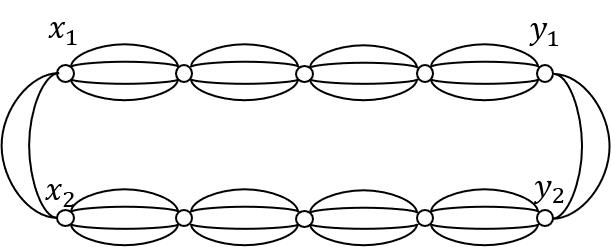}
\caption{}\label{fig:lower_bound_new_g1}
\end{subfigure}
\caption{An illustration of $G$ when $a=4$ and $t=3$.\label{fig4}}
\end{figure}

The graph $G$ has arboricity $\arb$; to see this, consider coloring the edges $x_1$ to $x_2$ by $\{1, \dots \lfloor a/2 \rfloor \}$ and coloring the edges from $y_1$ to $y_2$ by $\{ \lfloor a/2 \rfloor + 1, \dots, 2 \lfloor a/ 2 \rfloor \}$, as well as coloring edges in  $P_1$ and $P_2$ by $\{1, \dots, a \}$. Also, $G$ has $n = 2 t + 4$ vertices and maximum degree $\Delta = O(\arb)$.

\begin{proposition}
\label{missing-prop}
For any $\arb(1+\eps)$-forest-decomposition on $G$, there are at most $2 (t+1) \eps \arb$ colors $c$ where there is a $c$-colored edge between $x_1, x_2$ and also a $c$-colored edge between $y_1, y_2$.
\end{proposition}
\begin{proof}
For any two adjacent nodes $u,v$ in $G$, let us say that color $c$ \emph{appears} if any of the parallel edges between $u$ and $v$ have color $c$, else color $c$ is \emph{missing}. We need to show that at most $2 (t+1) \eps \arb$ colors  appear on both $x_1,x_2$ and $y_1, y_2$.

For consecutive vertices $u,v$ in path $P_1$ or $P_2$, the $\arb$ parallel edges must receive distinct colors (else it would immediately have a cycle). Thus, $u,v$ are missing at most $\eps \arb$ colors. Over the entire paths $P_1, P_2$ which have $2(t+1)$ vertices, there are at most $2(t+1) \eps \arb$ colors missing in total.

But now observe that if a color $c$ appears on all consecutive vertices in the path $P_1, P_2$ as well as between $x_1, x_2$ and between $y_1, y_2$, then the $c$-colored edges would have a cycle. Hence, the only colors which can appear between $x_1, x_2$ and also between $y_1, y_2$ are the ones that are missing from some consecutive vertices in $P_1, P_2$, and there are at most $ 2 (t+1) \eps \arb$ of them.
\end{proof}

Our lower bound will depend in a critical way on using a randomized algorithm which is \emph{oblivious}, i.e. it does not use the provided vertex ID's. In particular, for an $r$-round oblivious randomized algorithm, the output for a given edge $e$ is determined by the isomorphism class of $N^r(e)$. 
\begin{observation}
If any randomized or deterministic \local algorithm can solve a problem in $r$ rounds, then also an oblivious randomized \local algorithm can solve it in $r$ rounds w.h.p.
\end{observation}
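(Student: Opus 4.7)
\medskip

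\noindent\textbf{Proof proposal.} The plan is to simulate the given (possibly ID-using) algorithm $A$ by an oblivious algorithm $A'$ that manufactures its own random pseudo-IDs. Concretely, at the start of $A'$, every vertex $v$ independently draws a uniformly random bit string $\widetilde{\ID}(v)$ of length $L = \lceil c \log n \rceil$ for a sufficiently large constant $c$. Thereafter, $A'$ runs $A$ verbatim, using $\widetilde{\ID}(v)$ wherever $A$ would have consulted $\ID(v)$, and drawing any additional random bits that $A$ itself requires (treating a deterministic $A$ as the trivial case where no extra random bits are needed). Since $A$ completes in $r$ rounds and the pseudo-ID generation is purely local (zero rounds), $A'$ also runs in $r$ rounds.

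It remains to argue correctness w.h.p. By a union bound over the $\binom{n}{2}$ pairs, the probability that two vertices share a pseudo-ID is at most $\binom{n}{2} 2^{-L} \leq n^{2-c}$, which is $1/\poly(n)$ for $c \geq 3$. Conditioned on the pseudo-IDs being pairwise distinct, the string $\widetilde{\ID}(\cdot)$ is a legitimate assignment of unique $O(\log n)$-bit IDs — this is exactly the form of ID assignment that $A$ was designed to accept — so the correctness guarantee of $A$ applies and $A'$ produces a valid output w.h.p. (with one failure mode, namely an ID collision, absorbed into the w.h.p. error). Finally, $A'$ is oblivious by construction: its output on edge $e$ depends only on the bits drawn by vertices in $N^r(e)$ and on the local topology of $N^r(e)$, and not on the externally provided IDs, so it is determined by the isomorphism class of $N^r(e)$ together with the random tape.

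The only delicate point is accounting for the ID length. If $A$ assumes IDs from a universe of size $\poly(n)$, then choosing $L = \Theta(\log n)$ suffices; if $A$ assumes IDs exactly of length $\lceil \log n \rceil$, we can instead draw $\widetilde{\ID}(v)$ uniformly from $[n^c]$ and then apply any canonical order-preserving map to $\{1,\dots,n\}$ after distinctness is verified, noting that the relative order of distinct pseudo-IDs is all that a deterministic or randomized LOCAL algorithm can extract from its IDs anyway. No step is technically hard; the entire content is the standard observation that $\Theta(\log n)$ random bits per vertex give collision-free pseudo-IDs with high probability.
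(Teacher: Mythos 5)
Your proposal is essentially identical to the paper's proof: each vertex draws a random $\Theta(\log n)$-bit pseudo-ID, a union bound shows these are distinct w.h.p., and conditioned on distinctness the original algorithm $A$ runs correctly, giving an oblivious algorithm with the same round complexity. One small caveat about your final "delicate point" paragraph: the proposed fix for algorithms demanding IDs exactly from $\{1,\dots,n\}$ does not actually work, since applying a "canonical order-preserving map to $\{1,\dots,n\}$" is a global operation (a vertex cannot locally know its rank), and the assertion that a LOCAL algorithm can only extract the relative order of IDs is false in general — order-invariance of LOCAL algorithms is a nontrivial theorem with hypotheses, not a built-in property of the model. That paragraph is harmless only because it is unnecessary: under the standard convention that IDs come from a universe of size $\poly(n)$, the main argument already suffices, which is exactly what the paper does.
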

\begin{proof}
Given the original algorithm $A$, each vertex chooses a random bit-string of length $K \log n$, and uses it as its new vertex ID for algorithm $A$. W.h.p., all chosen ID's are unique for $K$ sufficiently large, and hence algorithm $A$ succeeds (either with probability one or w.h.p., depending on whether $A$ is randomized).
\end{proof}

\begin{lemma}
\label{double-lemma}
Suppose that $2 (t+1) \eps \arb \leq \lfloor \arb/2 \rfloor$. Then any oblivious algorithm $A$ for $a(1+\eps)$-forest-decomposition on $G$ in less than $t/2$ rounds has success probability at most $\frac{\arb (1+\eps)}{4 (\lfloor \arb/2 \rfloor - 2 (t+1) \eps \arb) }$.
\end{lemma}
\begin{proof}
Let $\ell = \arb(1+\eps)$. For any color $i = 1, \dots, \ell$, let $X_i$ be the indicator function that $(x_1, x_2)$ has an $i$-colored edge and $Y_i$ be the indicator function that $(y_1, y_2)$ has an $i$-colored edge, after we run algorithm $A$ on the graph.   Since the edges $(x_1, x_2)$ and $(y_1, y_2)$ have distance $t$, the random variables $X_i, Y_i$ are independent for each $i$. Furthermore, since the view from $(x_1, x_2)$ is isomorphic to the view from $(y_1, y_2)$ and algorithm $A$ is oblivious,  they follow the same distribution. Thus, we denote $q_i= \E[X_i]  =\E[Y_i]$ and note that $\E[X_i Y_i] = \E[X_i] \E[Y_i] = q_i^2$.

If $A$ returns a forest decomposition, there are $\lfloor \arb/2 \rfloor$ colors between $x_1$ and $x_2$ (a repeated color immediately leads to a cycle), and by \Cref{missing-prop} there are at most $2 (t+1) \eps \arb$ colors  $i$ which appear between $x_1, x_2$ and also between $y_1, y_2$, i.e. which satisfy $X_i = Y_i = 1$. Overall, whenever $A$ returns a forest-decomposition, we have $\sum_{i = 1}^{\ell} X_i = \sum_{i=1}^{\ell} Y_i = \lfloor \arb/2 \rfloor$ and  $\sum_{i=1}^{\ell} X_i Y_i \leq 2 (t+1) \eps \arb$, and in particular  $$
\sum_{i = 1}^{\ell} X_i (1 - Y_i) \geq \lfloor \arb/2 \rfloor - 2 (t+1) \eps \arb.
$$

Let $p$ be the probability that $A$ successfully returns an $\ell$-forest-decomposition. Taking expectations, and noting that $\E[X_i (1 - Y_i)] = \E[X_i] (1 - \E[Y_i]) = q_i (1-q_i)$, we have 
$$
\sum_{i=1}^{\ell} q_i (1-q_i) \geq p ( \lfloor \arb/2 \rfloor - 2 (t+1) \eps \arb ).
$$

On the other hand, clearly $q_i (1-q_i) \leq 1/4$ for all $i$ (since $q_i \in [0,1]$), so $$
\sum_{i = 1}^{\ell} q_i (1 - q_i) \leq \ell / 4.
$$

Putting the bounds together gives $p (\lfloor \arb/2 \rfloor - 2 (t+1) \eps \arb) \leq \ell/4$, which leads to the claimed bound after rearrangement.
\end{proof}

Putting these results together, we obtain the following:
\begin{theorem}\label{thm:lower_bound_multi}
Let $\arb, n \in \mathbb Z_{\geq 2}$ and $\eps \in (0,1)$ with $\eps \arb \geq 1$.  Any randomized algorithm for $(1+\eps)\arb$-forest-decomposition on $n$-node graphs of arboricity $\arb$ with success probability at least $0.51$ requires $\Omega(\min\{ n, 1/\eps \})$ rounds. This bound holds even on graphs of maximum degree $\Delta = O(\arb)$.
\end{theorem}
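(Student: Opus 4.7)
The plan is to combine the probabilistic lower bound of Lemma~\ref{double-lemma} with a structural upper bound of the same flavor as Proposition~\ref{missing-prop}. By the observation at the start of the section, it suffices to prove the lower bound for oblivious randomized $r$-round algorithms. Given $n, \eps, \arb$, I would set $t = \lfloor c \cdot \min\{n, 1/\eps\} \rfloor$ for a small absolute constant $c$, construct the graph $G$ from the excerpt (padding with isolated vertices to reach exactly $n$ vertices if necessary), and aim to show that every oblivious $r$-round algorithm with success probability $p \geq 0.993$ satisfies $r \geq t/2 = \Omega(\min\{n, 1/\eps\})$.

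Suppose for contradiction $r < t/2$. Since $\ell = \lceil \arb(1+\eps) \rceil \leq 2\arb$, Lemma~\ref{double-lemma} yields probability at least $\frac{p^2}{128 - p^2}$ that the output has at least $p^2 \arb / 64$ colors appearing on both the $(x_1, x_2)$ and $(y_1, y_2)$ bundles. The algorithm succeeds with probability at least $p$, so a union bound on complements shows that with probability at least $\frac{p^2}{128 - p^2} + p - 1$, which is strictly positive for $p = 0.993$, the output is simultaneously a valid $(1+\eps)\arb$-FD of $G$ and has at least $p^2 \arb / 64$ such ``shared'' colors.

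The crux will be a structural claim that in any valid $(1+\eps)\arb$-FD of $G$, the number of shared colors is at most $2(t+1)\arb\eps$. I would first argue that no shared color can span both paths $P_1$ and $P_2$: if some color $c$ had one edge at every step of both paths together with an $(x_1, x_2)$-edge and a $(y_1, y_2)$-edge, these would close up into the cycle $x_1 \to x_2 \to \cdots \to y_2 \to y_1 \to \cdots \to x_1$ in $G$, contradicting that $c$ is a forest; hence every shared color fails to span at least one of $P_1, P_2$. Then I would reuse the edge-counting identity from the proof of Proposition~\ref{missing-prop}: colors spanning both paths use exactly $2(t+1)$ path edges each, those failing to span at least one path use at most $2t + 1$ path edges each, and the path edges total $2(t+1)\arb$, so a direct calculation bounds the number of colors that fail to span both paths by $2(t+1)\arb\eps$.

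Combining the two bounds gives $p^2 / 64 \leq 2(t+1)\eps$, i.e., $t \geq p^2 / (128 \eps) - 1$. For $p = 0.993$ and a sufficiently small constant $c$, the choice $t \leq c/\eps$ strictly violates this inequality, yielding the desired contradiction (the regime where $\min\{n, 1/\eps\} = O(1)$ is vacuous). The main obstacle I anticipate is carefully verifying the cycle construction (one edge per step of each path closing up with both bundle edges) and tuning the constants so that $\frac{p^2}{128 - p^2} + p - 1 > 0$ at the claimed threshold $p = 0.993$; the remaining arithmetic is elementary.
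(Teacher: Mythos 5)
Your proposal is correct and follows essentially the same route as the paper's proof: same graph $G$, same choice $t = \Theta(1/\eps)$, same use of Lemma~\ref{double-lemma}, and the same edge-counting that underlies Proposition~\ref{missing-prop}. The only cosmetic difference is that you replace the contracted graph $G'$ and the observation ``a shared color that is cyclic on $G'$ is cyclic on $G$'' with the equivalent direct observation ``a shared color cannot span both $P_1$ and $P_2$,'' which yields the same bound on the number of shared colors in a valid decomposition; the final probabilistic step (your union bound $\Pr(A)+\Pr(B)-1>0$) is just a rephrasing of the paper's disjointness argument showing $\Pr(\text{valid FD}) \leq 1 - p^2/(128-p^2) < p$.
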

\begin{proof}
It suffices to show this for an oblivious randomized algorithm $A$ and where $1/n < \eps \leq 0.0001$. In this case, consider forming graph $G$ with parameter $ t = \lceil 0.001/\eps \rceil$; note that $t+2 \leq 0.002/\eps$ due to the upper bound on $\eps$. Thus, $G$ has at most $2 t + 4 \leq 2 (0.002/\eps) \leq n$ nodes. 

Here $\lfloor \arb/2 \rfloor \geq 0.499 \arb$ since $\arb \geq 1/\eps \geq 10000$, and also $2 (t+1) \eps \arb \leq 2 (0.002/\eps) \eps \arb \leq \lfloor \arb/2 \rfloor$.  If $A$ runs in fewer than $t/2$ rounds, then by Lemma~\ref{double-lemma} it has success probability at most $\frac{\arb (1+\eps)}{4 (\lfloor \arb/2 \rfloor - 2 (t+1) \eps \arb) } \leq \frac{\arb (1+\eps)}{4 ( 0.499 \arb - 2 (0.002/\eps) \eps \arb) } \leq 0.506 (1+\eps) < 0.51$.
\end{proof}

We also show that it is impossible to compute $\arb$-FD in $o(n)$ rounds in simple graphs.

\begin{proposition}\label{cor:lower_bound_simple}
In simple graphs with arboricity 2, computing a $2$-forest-decomposition with success probability at least $0.5$ requires $\Omega(n)$ rounds.
\end{proposition}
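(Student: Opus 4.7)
The plan is to adapt the strategy of Theorem~\ref{thm:lower_bound_multi} to simple graphs by replacing the parallel-edge bundles of the multigraph construction with small simple-graph gadgets. Concretely, I would construct a simple graph $G$ on $\Theta(n)$ vertices with arboricity exactly $2$ and bounded maximum degree, consisting of two constant-sized ``anchor'' gadgets $H_1, H_2$ (each of arboricity $2$, e.g.\ a copy of $C_4$, with one distinguished edge playing the role of $x_1 x_2$ or $y_1 y_2$) at distance $\Omega(n)$ from one another, joined by a long simple path (or a pair of internally disjoint paths that together with the anchor edges close into cycles). A companion graph $G'$ is obtained by identifying the corresponding vertices of $H_1$ with those of $H_2$, playing the role of contracting $x_1 \leftrightarrow x_2$, $y_1 \leftrightarrow y_2$ in the multigraph proof. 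Both graphs are simple and have arboricity $2$.

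The argument then parallels Theorem~\ref{thm:lower_bound_multi} in three steps. First, I would prove a counting lemma in the spirit of Proposition~\ref{missing-prop}: after contraction, the global cycle-structure of $G'$ strictly limits the number of $2$-FDs that can arise by ``merging'' two identical local colorings of the anchors in $G$, so that most matched anchor-colorings on $G$ do not extend to a valid $2$-FD on $G'$. Second, I would invoke obliviousness and symmetry: for any oblivious $r$-round algorithm $A$ with $r < \Omega(n)$, the two anchors' $r$-neighborhoods in $G$ are disjoint and isomorphic, hence $A$'s outputs on $H_1$ and $H_2$ are i.i.d.\ random variables. A second-moment/Markov calculation analogous to Lemma~\ref{double-lemma} then shows that with at least a constant probability, the two anchors receive coinciding colorings (i.e.\ the distinguished edge of $H_1$ gets the same color as that of $H_2$, and similarly for nearby edges).

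Combining these two facts, the algorithm $A$, when run on $G$, with constant probability produces a coloring whose anchor assignments coincide; but the very same algorithm would then also produce coinciding anchor assignments on $G'$ (whose local views at the anchors agree with those of $G$), which by step one fails to be a $2$-FD. Hence the overall success probability of any oblivious $r$-round algorithm on $\{G, G'\}$ is strictly below $1/2$, contradicting the assumed success probability of $0.5$. Derandomizing via the standard oblivious-reduction observation gives the stated $\Omega(n)$ lower bound.

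The main obstacle is choosing the anchor gadget and the connection pattern so that a ``matched anchor coloring'' in $G$ really does force a monochromatic cycle in $G'$—this is more delicate than the multigraph setting because with simple edges the identification step produces cycles only when the connecting paths have the right parity and when anchor edges are properly aligned. The cleanest fix is to pick anchors and connecting paths of carefully chosen lengths so that every coloring that is consistent across $H_1, H_2$ and $2$-FD-valid on the two paths closes into a monochromatic cycle after contraction, making the step-one counting bound tight enough to complete the contradiction.
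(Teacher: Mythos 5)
Your plan is in the right spirit, but the final step contains a genuine error, and the $G'$ machinery you import from Theorem~\ref{thm:lower_bound_multi} is both unnecessary and misapplied here. The error: you claim that because ``local views at the anchors agree'' between $G$ and $G'$, the same algorithm would produce coinciding anchor assignments on $G'$. But $G'$ is obtained by contracting anchor vertices, so the $r$-neighborhood of the identified anchor vertex in $G'$ sees both sides of the graph at once --- it is nothing like the anchor's view in $G$, and obliviousness tells you nothing about $A$'s behavior on $G'$. In the multigraph proof, $A$ is only ever run on $G$; $G'$ serves a purely combinatorial role (Proposition~\ref{missing-prop}), and the real work is the observation ``cyclic in $G'$ with matched anchor colors $\Rightarrow$ cyclic in $G$,'' a step you never state or prove for your gadgets.

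More to the point, for $\arb = 2$ with exactly two colors none of this machinery is needed, and the paper's proof is far more direct than what you propose. It takes the multigraph $G$ with $a=2$ and replaces each doubled path edge by a copy of $K_4$, keeping the two anchor edges $(x_1,x_2)$ and $(y_1,y_2)$ as single edges; the result is a simple graph of arboricity $2$. In any valid $2$-FD each $K_4$ must split into two spanning trees, so \emph{each} color class already connects $x_1$ to $y_1$ along $P_1$ and $x_2$ to $y_2$ along $P_2$; hence if the two anchor edges receive the same color, that color has a cycle in $G$ itself. By obliviousness and the $\Omega(n)$ distance between anchors, the two anchor colors are i.i.d., and with only two colors they coincide with probability $p^2+(1-p)^2 \geq 1/2$. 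This contradicts success probability $0.5$ directly --- no companion graph $G'$, no Hall-type counting lemma, no second-moment argument. To repair your version, drop $G'$ entirely and establish the ``matched anchors $\Rightarrow$ monochromatic cycle'' claim in $G$ directly; this is exactly the obstacle you flag at the end, and the $K_4$-chain gadget resolves it immediately.
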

\begin{proof}
First construct $G$ with parameters $\arb=2$ and $t = n/10$  (See Figure \ref{fig:lower_bound_simple_1}). Next, replace every set of parallel edges by a copy of the complete graph $K_4$ (See Figure \ref{fig:lower_bound_simple_2}). The resulting simple graph $H$ has $6 t+8 \leq n$ nodes and has arboricity 2.  

Suppose now that $A$ is an oblivious randomized algorithm which runs in $n/100$ rounds and assigns colors $\{1, 2 \}$ to the edges of $H$. Let $q$ denote the probability that algorithm $A$ outputs color 1 on edge  $(x_1,x_2)$. Since the two edges are $t$-hops away and $n/100<t/2$, the probability that $A$ outputs color 1 on $(y_1,y_2)$ is also $q$, independent of the color of $(x_1,x_2)$. Therefore, with probability at least $q (1-q) + (1-q) q \geq 1/2$, the edges $(x_1, x_2)$ and $(y_1, y_2)$ receive the same color; in this case, that color does not form a forest.
\end{proof}

\begin{figure}[H]
\centering
\begin{subfigure}[b]{0.49\textwidth}
\centering
\includegraphics[scale=0.65]{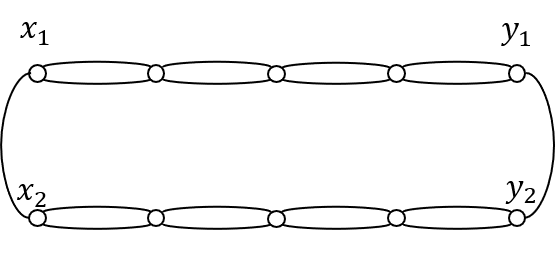}
\caption{}\label{fig:lower_bound_simple_1}
\end{subfigure}
\begin{subfigure}[b]{0.49\textwidth}
\centering
\includegraphics[scale=0.65]{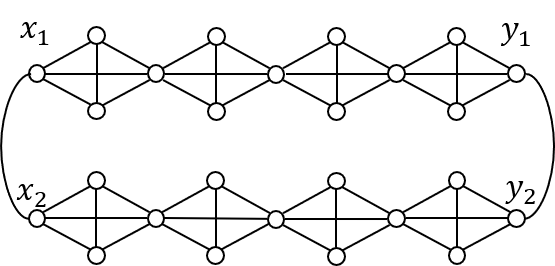}
\caption{}\label{fig:lower_bound_simple_2}
\end{subfigure}
\caption{}\label{fig:lower_bound_simple}
\end{figure}

\section{Acknowledgements}
Thanks to Vladimir Kolmogorov for suggesting how to set the parameters for Lemma~\ref{sclem1}. Thanks to Noga Alon for explaining some lower bounds for star arboricity. Thanks to Louis Esperet for some suggestions on notations and terminology. Thanks to anonymous conference and journal reviewers for helpful comments and suggestions.

\appendix

\section{Proof of Theorem~\ref{ace1prop}}
\label{ace1propapp}

For the first result, starting with $i=1$, we remove vertices with degree at most $t = \lfloor (2+\eps)\arb^{*} \rfloor$ and add them to $H_i$. We continue this process, forming sets $H_1, \dots, H_k$, until the graph is empty.  

We claim that each iteration removes at least a $(\eps / (2+\eps) )$ fraction of the vertices in the remaining graph. For, consider the remaining graph $G' = (V', E')$. If  more than $\frac{2}{2+\eps} |V'|$ vertices have degree greater than $(2+\eps)\arb^{*}$, then we derive a contradiction as follows:
\begin{align*}
2|E'| = \sum_{v \in V'} \deg(v) > ((2+\eps)\arb^{*})|V'|\cdot\frac{2}{2+\eps} = 2\arb^{*} |V'| \geq 2 ({|E'|}/{|V'|}) \cdot |V'| = 2|E'|.
\end{align*}

Since the number of vertices reduces by a $(1-\Omega(\eps))$ factor in each iteration, there are at most $k = O( \log n / \eps)$ iterations in total. 

For the second result, consider an edge $e = uv$ where $u \in H_i$ and $v \in H_j$. If $i < j$, we orient it from $u$ to $v$ and likewise if $i > j$ we orient it from $v$ to $u$.   If $i=j$, we orient it from the vertex with a lower ID to the vertex  with a higher ID.  Since a vertex in $H_i$ has at most $t$ neighbors in $H_i \cup \ldots \cup H_k$, the outdegree is at most $t$.  Since  the edges are always oriented from a lower index partition to a higher index, with ties broken by vertex ID, the resulting orientation is acyclic.

For the third result, we arbitrarily give distinct labels to the out-edges of each vertex; this gives us a $t$-forest decomposition where, moreover, each tree in each forest is rooted. We can use the algorithm of \cite{cole1986deterministic} to get a proper 3-vertex-coloring of each tree in $O(\log^* n)$ rounds. If we assign each edge to the color of its parent, then each of the $t$ forests decomposes into $3$ star-forests.

For the final result, consider the following process: we first fix some acyclic $t$-orientation. For each vertex $v$, we go through its out-edges $e_1, \dots, e_t$ in some arbitrary order, and each edge $e_i$ in turn chooses a color from its palette not already chosen by edges $e_1, \dots, e_{i-1}$.  This can be done in a greedy fashion since every edge has a palette of size  $t$. Each vertex operates independently, so the entire process can be simulated in $O(1)$ rounds. 

\section{Proof of Theorem~\ref{lem:Hpartition3}}
\label{Hpart3-sec}
To start, apply Theorem~\ref{ace1prop}(1) with $\eps/10$ in place of $\eps$, giving partition $H_1, \dots, H_k$ for $k = O( \frac{\log n}{\eps} )$ where each vertex $v \in H_i$ has at most $t = \lfloor (2 + \eps/10) \arb^* \rfloor$ neighbors in $H_i \cup \dots \cup H_k$. We orient the edges from $H_i$ to $H_j$ if $i < j$ and otherwise break tie by vertex ID. 

We proceed  for iterations $j = k, k-1, \dots, 1$; at iteration $j$, we define $E_j$ to be the set of edges which have one endpoint in $H_j$ and the other endpoint in  $H_j \cup \dots \cup H_k$. We also define a related line graph $G'_j$ as follows: each edge $e$ in $E_j$ corresponds to a node $x_e$ in $G'_j$. For every pair of edges $e, e'$ in $E_j$ which share a common vertex in $H_j$, there is an edge in $G'_j$ between corresponding nodes $x_e$ and $x_{e'}$.  Our strategy is to compute a proper list vertex-coloring of each $G'_j$, where the residual palette $Q'(x_e)$ for edge $e = uv \in E_j$ is obtained by removing from $Q(e)$ any colors chosen already by any out-edges of $u$ or $v$ in $E_{j+1} \cup \dots \cup E_k$.

We first claim that this procedure gives an LSFD, where each edge is oriented toward the center of the star. For, suppose  edges $e = uv, e' = u'v$ share a color $c$ and intersect in vertex $v$, but $e$ is oriented away $v$. Say $v \in H_i, u \in H_j, u' \in H_{j'}$; necessarily $j\geq i$ and $e \in H_i$ by definitions. If $j' \geq i$, then $e' \in H_{i}$ and the graph $G'_j$ would have an edge between $x_e$ and $x_{e'}$, and they could not receive the same color. If $j' < i$, then $e' \in H_{j'}$ and the color chosen by $e$ would have been removed from $Q'(x_{e'})$ in iteration $j'$.  Either case is a contradiction.

Next let us argue that each graph $G'_j$ can be greedily colored, i.e. for each edge $e = uv$, the palette of each node $x_e$ is larger than its degree. Suppose that $u$ and $v$ have $a$ and $b$ many neighbors in $H_{j+1} \cup \dots \cup H_k$ respectively. Then $x_e$ has at most $2 t - 1 - (a + b)$ neighbors in $G'_j$ and has $|Q'(x_e)| \geq |Q(e)| - (a+b)$. So $G'_j$ has maximum degree $\Delta(G'_j) = 2 t  = \Theta(\arb^*)$ and in either case the node $x_e \in G'_j$ satisfies
$$
| Q'(x_e) | - \text{deg}(x_e) \geq |Q(e)| - (2t - 1) \geq \lfloor (4 + \eps) \arb^*  \rfloor - 2 \lfloor (2 + 0.1 \eps) \arb^*  \rfloor + 1 \geq 1 + \lfloor \eps \arb^* / 2 \rfloor~.
$$

Finally, let us analyze the round complexity. We use an algorithm of \cite{EPS15} for list-vertex coloring with palettes of size $\deg + \eps \Delta$  applied to each graph $G'_j$.  There are three parameter regimes to consider. First, if $\arb \leq n^3$, then the algorithm of \cite[Corollary 4.1]{EPS15} combined with the network decomposition result of \cite{RG20}, runs in $O( \log(1/\eps) ) + \polyloglog m$ rounds; since $m \leq n \arb \leq n^4$, this is  $O( \log(1/\eps) )+ \polyloglog n$.

Next, when $\arb > n^3$, we use the algorithm of \cite[Theorem 4.1]{EPS15}. This algorithm requires $\eps \Delta(G'_j) > (\log  |V(G'_j)|)^{1 + \Omega(1)}$. Here, $|V(G'_j)| = m$ and $\Delta(G'_j) = \Theta(\arb) $, and $\eps > 1/n$ by assumption, so $\frac{\eps \Delta(G'_j)}{\log^2 |V(G'_j)|} \geq \Omega(\frac{\arb}{n \eps \log^2 m}) > \omega(1)$. So the algorithm runs in $O(\log^* m + \log(1/\eps))$ rounds.

Finally, when $\arb$ is super-exponentially larger than $n$, we can obtain an $(O(\log n), O(\log n))$-network decomposition of $G^3$. We then color each $E_j$ by iterating sequentially through the classes; within each cluster, we simulate a greedy coloring of the edges. Since clusters are  at distance at least $3$, no edges will choose a conflicting color. The overall runtime is $O(\log^2 n)$.

Putting the three cases together, we can color each $G'_j$ with round complexity $$
O \bigl( \min \bigl\{ \log(1/\eps) + \log^* m,  \log(1/\eps) + \polyloglog n, \log^2 n \bigr\} \bigr)
$$
This process is repeated for iterations $j = 1, \dots, k = O( \frac{\log n}{\eps})$.

\section{Proof of Proposition~\ref{decompose-prop3}}
\label{dp3-app}

We begin with the following algorithm to reduce diameter in a given list-forest-decomposition $\phi$.
 \begin{algorithm}[H]
\caption{\textsc{Reduce\_Forest\_Diameter}$(\eps, \phi)$}\label{alg:reduce_diameter}
\begin{algorithmic}[1]\small
\State Initialize $E_0 \leftarrow E$ and $E_1', E_1'' \leftarrow \emptyset$.
\State Apply Theorem~\ref{ace1prop}(2) to get an  acyclic $3 \arb$-orientation $J$ of $G$.
\For{\textbf{each} vertex $v$}
\State Draw independent Bernoulli-$1/2$ random variable $X_v$
\If{$X_v = 1$}
\For{$j = 1$ to $k'$ where $k' = \lceil \eps \arb / 20 \rceil$}
\State Select an edge $e_{v,j}$ uniformly at random from the out-neighbors of $v$ in $J$.
\State  Remove edge $e_{v,j}$ from $E_0$, add it to $E_1'$.
\State Set $\phi_1'(e_{v,j}) = j$.
\EndFor
\EndIf
\EndFor
\For{\textbf{each} color $i \in \cs$ and each vertex $v$}
\If{there is a $i$-colored directed path of length $10000 \log n / \eps$ starting from $v$ in $E_0$}
\State{Remove  all edges of color $i$ incident to $v$ from $E_0$, add them to $E_1''$.}
\EndIf
\EndFor
\For{\textbf{each} $j = 1, \dots, k'$  and each vertex $v$}
\If{there is a $j$-colored directed path of length $10000 \log n / \eps$ starting from $v$ in $E_1'$}
\State{Remove  all edges of color $j$ incident to $v$ from $E_1'$, add them to $E_1''$.}
\EndIf
\EndFor
\State Apply Theorem~\ref{ace1prop}(3) to obtain a $3 \lfloor 2.01 \arb^*(E_1'') \rfloor$-star-forest-decomposition $\phi_1''$ of $E_1''$.
\State Return $E_0$ and $E_1 = E_1' \cup E_1''$ along with forest decomposition $\phi_1 = \phi_1' \cup \phi_1''$ of $E_1$.
\end{algorithmic}
\end{algorithm}

\begin{proposition} 
\label{decompose-prop3a1}
Given a multigraph $G$ with a LFD $\phi$ and $\eps > 0$, Algorithm~\ref{alg:reduce_diameter} runs in $O( \frac{\log n}{\eps})$ rounds. It partitions the edges as $E = E_0 \sqcup E_1$, where $\phi_1$ is an $\lceil \eps \alpha \rceil$-FD of $E_1$. W.h.p, both the restriction of $\phi$ to $E_0$ and the decomposition $\phi_1$ on $E_1$ have diameter $D \leq O( \frac{\log n}{\eps} )$.
\end{proposition}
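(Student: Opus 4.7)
The plan is to verify four things in order: the runtime, validity of $\phi_1'$ as a (partial) forest coloring, a probabilistic analysis showing $E_1''=\emptyset$ w.h.p.\ (which pins down the color count), and the diameter bound. Runtime is $O(\log n/\eps)$: Theorem~\ref{ace1prop}(2) produces the acyclic $3\arb$-orientation in $O(\log n/\eps)$ rounds; sampling $X_v$ and the $k'=\lceil\eps\arb/20\rceil$ out-edges is $O(1)$ rounds; the directed-path checks in the two inner loops require inspecting only an $O(\log n/\eps)$-hop neighborhood; and Theorem~\ref{ace1prop}(3) on $E_1''$ contributes another $O(\log n/\eps)$ rounds. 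Validity of $\phi_1'$ is automatic: each vertex $v$ with $X_v=1$ places at most one out-edge in each color $j\in[k']$, so each color class is a DAG of out-degree at most $1$ under the acyclic orientation, i.e.\ a rooted forest.

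The heart of the argument is a probabilistic bound on undirected monochromatic path lengths. Fix an undirected color-$c$ path $v_0-v_1-\cdots-v_L$ of length $L=\Theta(\log n/\eps)$ in the input $\phi$, orient each edge by the fixed $3\arb$-orientation, and observe that an edge's removal to $E_1'$ depends only on the randomness at its tail, so events at distinct tail vertices are independent. At a ``peak'' tail $\tau$ (both adjacent path-edges out-going from $\tau$) with $d=\deg^+(\tau)\le 3\arb$, the probability both survive is at most $\frac{1}{2}+\frac{1}{2}\bigl(1-2/d\bigr)^{k'}\le \frac{1}{2}+\frac{1}{2}e^{-\eps/30}\le 1-\Omega(\eps)$, and at a ``valley'' tail (single out-going path-edge) the survival probability is similarly $1-\Omega(\eps)$. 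Since each tail contributes to at most two path-edges, the $L$ edges involve at least $L/2$ distinct tails, so the path survives with probability $(1-\Omega(\eps))^{L/2}\le 1/\poly(n)$. A union bound over $n$ starting vertices, $\poly(n)$ colors, and at most $n$ simple paths of length $L$ from each vertex (in a tree, a simple path from $v$ is determined by its endpoint) yields the claim for $E_0$. For $E_1'$, each color-$j$ subgraph has out-degree at most $1$, so each connected component is a rooted tree whose undirected diameter is at most twice its depth; a directed color-$j$ walk from $v_0$ extends to length $L$ only if $X_{v_0}=\cdots=X_{v_{L-1}}=1$, which occurs with probability $(1/2)^L$, and a union bound over $O(nk')$ pairs $(v,j)$ completes the bound.

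Since any directed path is an undirected path, the above bounds rule out the trigger conditions of both inner loops, so $E_1''=\emptyset$ w.h.p.\ and $\phi_1=\phi_1'$ uses only $k'=\lceil\eps\arb/20\rceil\le\lceil\eps\arb\rceil$ colors. Both diameter claims then follow immediately: $\phi|_{E_0}$ has no monochromatic undirected path of length $\Omega(\log n/\eps)$ w.h.p., and $\phi_1'$ consists of rooted forests of depth $O(\log n)$ w.h.p., giving strong diameter $O(\log n)\le O(\log n/\eps)$ in both cases. The main technical subtlety I expect is the dependency between consecutive path-edges sharing a peak tail; this is resolved by counting distinct tails (yielding the factor $L/2$ in the exponent) and paying only a constant factor in the per-tail survival bound, which is absorbed by choosing the constant in $L=\Theta(\log n/\eps)$ large enough. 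A secondary point to verify is that a potential ambiguity in the sampling (whether the $k'$ draws at each vertex are with or without replacement) does not affect the order-of-magnitude estimates above.
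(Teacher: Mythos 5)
Your proof is correct, but it follows a genuinely different line than the paper's. The crucial divergence is in how $E_1''$ is handled. You argue that $E_1'' = \emptyset$ w.h.p.\ by a union bound over all candidate long paths (over $\poly(n)$ edges, colors, and tree-endpoints), which immediately pins down the color count as $k' = \lceil \eps \arb / 20\rceil$ and makes line 15 of the algorithm (the call to Theorem~\ref{ace1prop}(3) on $E_1''$) vacuous. The paper instead treats the deletion steps in lines 9--14 as a deterministic safety net that guarantees the diameter bound regardless of what $E_1''$ is, and then separately bounds $A = \arb^*(E_1'')$ probabilistically: it shows each edge enters $E_1''$ with probability $1/\poly(n)$, deduces $\E[|E_1''|] \leq \arb/\poly(n)$, and applies Markov's inequality (using $\eps \geq 1/n$) to get $A \leq \eps\arb/20$ w.h.p. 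The paper then must account for the extra $3\lfloor 2.01 A\rfloor$ colors coming from the star-forest decomposition of $E_1''$, adding up to $k' + 3\lfloor 2.01\lceil\eps\arb/20\rceil\rfloor \leq \lceil \eps\arb\rceil$ total, whereas your route avoids this accounting entirely. Your per-tail probability analysis (the $1/2 + \tfrac12(1-2/d)^{k'}$ bound at a peak, the $L/2$ independent distinct tails, and the factor-$2$ on the exponent) matches the paper's per-edge calculation $(1-0.005\eps)^{r/2}$ in spirit, and your $(1/2)^L$ bound for $E_1'$ matches the paper's stopping argument exactly. Both routes need the constant in the $\Omega(\log n/\eps)$ threshold to be taken large; neither has an advantage there. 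One presentational caveat: you should state that the claim ``$\phi_1$ is a $\lceil\eps\arb\rceil$-forest-decomposition'' holds only w.h.p.\ (in the low-probability event $E_1''\neq\emptyset$, the color count can exceed $\lceil\eps\arb\rceil$); the paper also only asserts this w.h.p., so this is not a gap, just something to make explicit.
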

\begin{proof}
The complexity follows from specifications of Theorem~\ref{ace1prop}.   Since the orientation is acyclic, $\phi_1'$ is a $k'$-forest decomposition of $E_1'$ for $k' = \lceil \eps \arb/20 \rceil$.  The two modification steps (Lines 10 -- 12 and 13 -- 15) ensure each forest in $E_0$ or $E_1'$ has diameter $O( \frac{\log n}{ \eps})$, while the forests in $E_1''$ have diameter $2$.  Overall, $E_1$ is decomposed into $k' + 3 \lfloor 2.01 \arb^*(E_1'') \rfloor$ forests; we can upper-bound this, somewhat crudely, as $k' + 7 |E_1''|$.

It remains to bound $|E_1''|$. We first claim that any edge goes into $E_1''$ with probability at most $1/n^{24}$. For, suppose that $e$ remains in $E_0$ with color $i$. Every other color-$i$ edge  gets removed with probability at least $ \frac{1}{2} \times \frac{ \lceil \eps \arb / 20 \rceil }{3 \arb} \geq 0.005 \eps$, and edges at distance two are independent. Thus, any path of length $r \geq 10000 \log n / \eps$ path survives with probability at most $(1 - 0.005 \eps)^{r/2} \leq e^{-0.005 \cdot 10000 \cdot 1/2 \cdot \log n} = n^{-25}$, and there are at most $n$ paths of color $i$ involving any given edge.

Similarly, suppose that $e$ goes into $E_1'$ with color $j \in \{1, \dots, k' \}$. Each vertex only has deleted out-neighbors with probability $1/2$. Thus, starting at the edge $e$, and following its directed path with respect to the $j$-colored edges, there is a probability of $1/2$ of stopping at each vertex. Thus, the probability that $e$ goes goes into $E_1''$ is at most $ 2^{-10000 \log n / \eps} \leq n^{-24}$. 

Putting these two cases together, $E_1''$ has expected size at most $m / n^{24} \leq \arb / n^{23}$. By our assumption that $\eps > 1/n$, Markov's inequality gives $\Pr(  |E''_1| \geq \eps \arb / 20 ) \leq O( \frac{ \arb/n^{23}}{\eps \arb} ) \leq O(n^{-22})$.  So w.h.p $E_1$ uses at most $\lceil \eps \arb / 20 \rceil + 7 ( \eps \arb / 20  ) \leq \lceil \eps \arb \rceil$ forests.
\end{proof}

We next show that the diameter can be reduced further in some cases. Note that Proposition~\ref{decompose-prop3a2}, in its full generality for list-forest-decompositions, will not be directly required for our algorithm.
\begin{proposition} 
\label{decompose-prop3a2}
Let $G$ be a multigraph with an LFD $\phi$. If $|\cs| = \rho \arb$ and $\arb \geq \Omega \bigl( \min \{ \frac{\log n}{\eps}, \frac{\rho \log \Delta}{\eps^2} \} \bigr)$, there is an $O( \frac{ \rho \log n}{\eps})$-round algorithm to obtain an edge-set $E'$ such that $\arb^*(E') \leq \eps \arb$ and such that w.h.p. $\phi$ has diameter $D \leq O( \rho / \eps )$ in the graph $G \setminus E'$.
\end{proposition}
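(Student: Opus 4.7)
The plan is to extend the random-shift technique of Theorem~\ref{thm32}(2) to the list-coloring setting with arbitrary palettes. First, root each color-$c$ tree arbitrarily, so that each non-root vertex $v$ has a well-defined parent edge and depth $d_{v,c}$ within its color-$c$ tree. Then, independently for each color $c \in \mathcal{C}$, draw $J_c$ uniformly from $\{0, \ldots, N-1\}$ with $N = \lceil K \rho/\eps \rceil$ for a suitable constant $K$, and place into $E'$ (i.e., decolor) the color-$c$ parent edge of every vertex $v$ satisfying $d_{v,c} \equiv J_c \pmod N$. Since the cut depths within each color-$c$ tree are exactly $N$ apart, no surviving root-to-leaf subpath contains more than $N$ consecutive edges, so each surviving color-$c$ component has diameter at most $2N = O(\rho/\eps)$, as desired.

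To bound $\arb^*(E') \leq \eps \arb$, orient each decolored color-$c$ edge from child to parent; the outdegree of a vertex $v$ in this orientation equals $Y_v := \sum_{c \in C(v)} X_{v,c}$, where $X_{v,c}$ is the indicator that $v$ has a color-$c$ parent and $d_{v,c} \equiv J_c \pmod N$. For fixed $v$ the $\{X_{v,c}\}_c$ are mutually independent Bernoulli variables (since the $\{J_c\}_c$ are drawn independently), each of mean at most $1/N$, so $\E[Y_v] \leq |\mathcal{C}|/N \leq \eps \arb/K$. Under the first hypothesis $\arb \geq \Omega(\log n/\eps)$ we have $\eps \arb \geq \Omega(\log n)$, and a standard Chernoff bound yields $\Pr[Y_v > \eps \arb] \leq 1/\poly(n)$; a union bound over the $n$ vertices finishes the argument.

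The main obstacle is the second hypothesis $\arb \geq \Omega(\rho \log \Delta/\eps^2)$, under which Chernoff only gives $\Pr[Y_v > \eps \arb] \leq \Delta^{-\Omega(1)}$. Here we instead invoke the distributed LLL algorithm of~\cite{chung2017distributed}. As posed, the bad events $\{Y_v > \eps \arb\}$ are not locally dependent: $B_u$ and $B_v$ share the variable $J_c$ whenever both $u, v$ touch some color-$c$ edge, which is non-local in $G$ when color-$c$ trees are large. To localize dependency I would first partition each color-$c$ tree into sub-clusters of diameter $\Theta(\rho/\eps)$ via a stochastic network decomposition inside the tree, and use an independent shift $J_{c,i}$ per sub-cluster; dependency between $B_u$ and $B_v$ is then restricted to pairs lying in some common cluster (hence at $G$-distance $O(\rho/\eps)$ of each other). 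With the parameters calibrated so that the LLL criterion $e p d^2 \leq 1 - \Omega(1)$ holds under the stated hypothesis, the decoration finishes in a single LLL invocation. The overall runtime is $O(\log n/\eps)$: the per-tree stochastic network decomposition runs in $O(\rho/\eps)$ rounds, the shift sampling and deletion are purely local, and the LLL invocation contributes an additional $O(\log n)$ factor.
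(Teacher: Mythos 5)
Your high-level strategy matches the paper — randomly shift cuts within each rooted color tree to force diameter $O(\rho/\eps)$, orient the deleted edges child-to-parent, and bound the outdegree by Chernoff or by LLL depending on which hypothesis on $\arb$ holds. However, there are two concrete gaps.

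\emph{Missing diameter-reduction preprocessing.} Your algorithm needs each vertex to know its depth $d_{v,c}$ in the color-$c$ tree (or at least $d_{v,c} \bmod N$), which is a global computation: a color-$c$ tree can have diameter $\Theta(n)$, so rooting it and computing depths can take $\Omega(n)$ rounds. The paper's proof first applies Proposition~\ref{decompose-prop3a1} (with $\eps/10$ in place of $\eps$) to reduce all tree diameters to $O(\log n/\eps)$, discarding an extra edge set of pseudo-arboricity $\lceil \eps\arb/10\rceil$; only then can the trees be rooted and depths computed within the claimed $O(\log n/\eps)$ rounds. Without this stage your claimed runtime does not hold. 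Note also that the discarded edges from this stage must be accounted for in the final $\arb^*(E')\le\eps\arb$ budget.

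\emph{The LLL localization step is both under-specified and misparameterized.} You correctly observe that a single global shift $J_c$ makes $B_u, B_v$ share randomness whenever both touch color $c$, which is non-local. But your fix via a stochastic network decomposition inside each tree does not cleanly produce clusters of diameter $\Theta(\rho/\eps)$: the $(O(\log n/\beta),\beta)$-stochastic decomposition from the preliminaries gives cluster diameter $O(\log n/\beta)$, so hitting $\Theta(\rho/\eps)$ would require $\beta=\Theta(\eps\log n/\rho)$, which can exceed $1$; and the edges it removes would also contribute to $\arb^*(E')$, a contribution you never bound. The paper instead partitions each (already diameter-bounded) tree by depth into blocks of height $z=\Theta(\rho/\eps)$: the block-roots $U_c$ are vertices whose depth is a multiple of $z$, and each block-root $u$ independently draws a shift $J_{u,c}$ controlling where the cut falls inside its block. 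This makes each bad event $B_v$ depend only on vertices within $G$-distance $O(z)=O(\rho/\eps)$, yielding dependency $d\le\Delta^{O(\rho/\eps)}$, and then $p\,d^2\le e^{-\eps\arb/60}\Delta^{O(\rho/\eps)}\ll 1$ under $\eps^2\arb\ge\Omega(\rho\log\Delta)$. It also cuts both cases with a single mechanism (no separate global-shift version). You should replace the stochastic-decomposition idea with this depth-block construction, or explain how the extra cut edges stay within budget.
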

\begin{proof}
First, by applying Proposition~\ref{decompose-prop3a1} with $\eps/4$ in place of $\eps$, we reduce the diameter of the forests to $O( \frac{\log n}{\eps})$, while discarding an edge set of pseudo-arboricity at most  $k' \leq \lceil \eps \arb / 4 \rceil$. In $O( \frac{\log n}{\eps} )$ rounds we can choose an arbitrary rooting of each remaining tree. For each color $c$, let $U_c$ be the set of vertices whose depth is a multiple of $N = \lceil 4 \rho/ \eps \rceil$.

Consider the following random process:   For each color $c$ and each vertex $u \in U_c$, we independently draw an integer $J_{u,c}$ uniformly at random from $[N]$. For all vertices $v$ of depth $J_{u,c}$ below $u$ in the color-$c$ tree, we remove the color-$c$ parent edge of $u$.  After this deletion step,  every vertex $u \in U_c$ is disconnected from its distance-$N$ descendants and is also disconnected from its distance-$N$ ancestors. Thus,  with probability one, the forests have diameter reduced to $4N \leq O(\rho/\eps)$.

It remains to bound the pseudo-arboricity of the deleted edges. For each vertex $v$, let $B_v$ denote the bad-event that $v$ has more than $k'' = \eps \arb / 2$ deleted parent edges. If all these bad-events are avoided, then the deleted edges have pseudo-arboricity at most $k' + k'' \leq \eps \arb$ from both stages.

For a vertex $v$ and color $c$, let $u_c$ be the maximum-depth ancestor of $v$ in $U_c$. The color-$c$ parent of $v$ gets deleted if and only if $J_{u_c,c}$ is equal to the depth of $v$ below $u_c$. This has probability  $1/N$, so the expected number of deleted parents is at most $|\cs|/N \leq \eps \arb/4$. Each color operates independently, so by Chernoff's bound we have $\Pr(B_v) \leq F_+(\eps \arb/4, \eps \arb/2) \leq e^{-\eps \arb/12}$. 

If $\eps \arb \geq \Omega(\log n)$, then w.h.p. none of the bad-events occur, and we are done.  If $\eps^2 \arb \geq \Omega(\rho \log \Delta)$ we use the LLL algorithm of \cite{chung2017distributed}.  We have already calculated $p \leq e^{-\eps \arb / 12}$. Also, events $B_v$ and $B_w$ only affect each other if $v$ and $w$ have distance at most $2 N$, hence $d \leq \Delta^{2 N} \leq \Delta^{4 \rho/\eps + 2}$.   So $p d^2 \leq e^{-\eps \arb/12}  (\Delta^{4 \rho/\eps + 2})^2 \ll 1$ for $\eps^2 \arb \geq \Omega(\rho \log \Delta)$. Each $B_v$ depends on vertices within distance $O(\rho/\eps)$, so the LLL algorithm can be simulated in $O( \frac{\rho \log n}{\eps} )$ rounds on $G$.
\end{proof}

To show Proposition~\ref{decompose-prop3}, suppose now we are given some $k$-FD of $G$; we may assume that $k \leq O(a)$, as we can always use Theorem~\ref{ace1prop} to obtain a $2.01$-FD. Here,  we have $|\cs| = k$ and so $\rho = k/a \leq O(1)$.   For  the bound $D \leq O( \frac{\log n}{\eps})$, we can directly apply Proposition~\ref{decompose-prop3a1}. For the bound $D \leq O( \frac{1}{\eps} )$ when $a$ is large, we apply Proposition~\ref{decompose-prop3a2} with $\eps/10$ in place of $\eps$ to obtain a $k + \lceil \eps \arb / 10 \rceil$-FD, where the uncolored edges $E'$ have $\arb^*(E') \leq \eps \arb / 10$. We then use Theorem~\ref{ace1prop} to obtain a $3 \arb^*(E')$-SFD of $E'$. Overall, this gives a $k + \lceil \eps \arb \rceil$ FD of $G$ of diameter $O( 1/ \eps)$.

\section{Concentration bound for Lemma~\ref{sclem2}}
Here, we show the concentration bound we used in the proof of Lemma~\ref{sclem2}.
\begin{proposition}
\label{nnb4}
Let $s = a(1+100 \eps), t = a(1+ \eps)$. Let $X$ be any fixed subset of $J(v)$ and let $x = |X| \leq t$. Suppose the hypotheses of Lemma~\ref{sclem2} hold, and that $C_v$ is fixed so that $|Q(uv) \cap C_v| \geq s$ for all $u \in X$.  Then the probability that $X$ has fewer than $x$ neighbors in $W_v$ is at most
$$
\Bigl(  \frac{s e^{-\eps x}}{s - x } \Bigr)^{s - x } \Bigl( \frac{s (1 - e^{-\eps x})}{x} \Bigr)^x
$$
\end{proposition}
\begin{proof}
For each $c \in C_v$, let $z_c$ be the number of vertices $u \in X$ with $c \in Q(uv)$, and let $Y_c$ be the indicator function that $c \in \bigcup_{u \in X} (Q(uv) \setminus C_u)$.  Here $\Pr(Y_c = 1) = 1 - (1 - \eps)^{z_c} \geq 1 - e^{-\eps z_c}$. By hypothesis  we have $\sum_{c \in C_v} z_c = \sum_{u \in X} |Q(uv) \cap C_v| \geq x s$. 

Consider variable $Y = \sum Y_c$, and note that $X$ has fewer than $x$ neighbors if and only if $Y < x$. For some parameter $\theta \in [0,1]$ to be determined, we define the random variable
$$
\Phi = (1-\theta)^Y = \prod_{c \in C_v} (1 - \theta)^{Y_c}.
$$
Since the colors are independent, we have
\begin{equation}
\label{phieq1}
\mathbf E[ \Phi ] = \prod_{c \in C_v} (1 - \theta \Pr(Y_c = 1)) \leq \prod_{c \in C_v} (1 - \theta (1-e^{-\eps z_c})) = e^{\sum_{c \in C_v} \log 
 (1 - \theta (1-e^{-\eps z_c})) }
\end{equation}

Elementary calculus shows that the function $y \mapsto \log \bigl( 1 - \theta (1-e^{-y}) \bigr)$ is  negative, decreasing, and concave-up. Since  $0 \leq z_c \leq x$, we thus bound it by the secant line from $0$ to $x$, i.e.
$$
\log (1 - \theta (1-e^{-\eps z_c}))  \leq \frac{z_c}{x} \log(1 - \theta (1 - e^{-\eps x}))~.
$$

Substituting this bound into Eq.~(\ref{phieq1}) and using the bound $\sum z_c \geq x s$, we calculate
$$
\mathbf E[ \Phi ] \leq e^{\sum_{c \in C_v} \frac{z_c}{x} \log (1 - \theta (1-e^{-\eps x})) } = \bigl( 1 - \theta(1-e^{-\eps x}) \bigr)^{\sum_{c \in C_v} z_c / x} \leq  \bigl( 1 - \theta(1-e^{-\eps x}) \bigr)^{s}~.
$$

Now by Markov's inequality applied to $\Phi$, we get
\begin{equation}
\label{phieq0}
\Pr(Y < x) \leq \mathbf E[ \Phi ]  (1-\theta)^{-x} \leq \bigl( 1 - \theta(1-e^{-\eps x}) \bigr)^{s} \bigl( 1 - \theta \bigr)^{-x}~.
\end{equation}

At this point, we set 
$$
\theta = \frac{ (1 - e^{-\eps x}) s - x }{ (1-e^{-\eps x}) (s - x)}~.
$$

Clearly $\theta \leq 1$. We claim also that $\theta \geq 0$, which substituting into Eq.~(\ref{phieq0}) will give the claimed formula.  For, consider the function $f(x) = (1 - e^{-\eps x})  s  -  x$. The second derivative is given by $f''(x) = -e^{-\eps x} \eps^2 s \leq 0$. Hence, the minimum value of  $f(x)$ in the region $x \in [0, t]$ will occur at either $0$ or $t$. For the former, we have $f(0) = 0$. For the latter, we have
$$
f(t) = (1 - e^{-\eps t}) s - t = (1 - e^{-\eps a (1 + \eps)}) a (1 + 100 \eps) - a (1 + \eps)
$$

Now $\eps a \geq 10^6 \log \Delta$, so $e^{-\eps a (1+\eps)} \leq \Delta^{-10^6}$. Since $a \leq \Delta$ and $\Delta \geq 2$, we get
\[
f(t) \geq  (1 - \Delta^{-10^6}) a (1 + 100 \eps) - a (1 + \eps) = a ( 99 \eps - 100 \eps / \Delta^{10^6} - a / \Delta^{10^6}) \geq 0
\]
as desired.
\end{proof}

\section{Miscellaneous Observations and Formulas}
\label{sec:missingproofs}

\begin{proposition}
\label{eps-bound}
For any integer $\arb \geq 1$ and any $\eps > 0$, there is a multigraph with arboricity $\arb$ and $n = O(1/\eps)$ and $\Delta = O(\arb)$, for which any $\arb(1+\eps)$-FD has diameter $\Omega(1/\eps)$.
\end{proposition}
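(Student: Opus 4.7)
I plan to construct $G$ as a ``fat path.'' Take $L+1$ vertices $v_0, v_1, \dots, v_L$ with $L = \lceil 1/\eps \rceil$ and place $k$ parallel edges between each consecutive pair $v_{i-1}, v_i$. This gives $n = L+1 = O(1/\eps)$, $\Delta = 2k = O(k)$, and applying the Nash--Williams formula to the whole graph yields $\arb(G) = \lceil kL/L \rceil = k$, matched by the decomposition that places distinct labels from $[k]$ on the edges of each bundle to form $k$ spanning paths. The task then reduces to showing that every $k(1+\eps)$-FD of $G$ has diameter $\Omega(1/\eps)$.

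The key structural observation is that parallel edges of $G$ must receive distinct colors (else a $2$-cycle forms), so for each color $c$ its edge set uses at most one edge per bundle and can be encoded as a subset $T_c \subseteq [L]$ of ``active steps.'' The color-$c$ subgraph then decomposes into one sub-path per maximal run of consecutive indices in $T_c$; a run of length $\ell$ contributes a tree of diameter exactly $\ell$. Hence, if the FD has diameter at most $D$, every run in every $T_c$ has length at most $D$.

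The counting step, which is the main technical ingredient, combines two constraints on each $T_c$. Writing $e_c = |T_c|$ and $r_c$ for its number of runs: the runs are maximal and therefore separated by at least one gap step, giving $e_c + (r_c - 1) \leq L$; and the diameter bound gives $e_c \leq r_c D$. Eliminating $r_c$ yields $e_c \leq (L+1) D/(D+1)$. Summing over the at most $k(1+\eps)$ colors and using $\sum_c e_c = kL$ produces $kL(D+1) \leq k(1+\eps)(L+1)D$, which rearranges to $D \geq L/(\eps L + 1 + \eps)$. For $L = \lceil 1/\eps \rceil$ we have $\eps L \leq 1 + \eps$, so the denominator is $O(1)$ and $D = \Omega(1/\eps)$.

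The only subtle point, and what I would flag as the main obstacle, is the simultaneous use of the two complementary constraints: the run-length bound $e_c \leq r_c D$ says that covering many edges with one color requires many runs, while the gap bound $e_c + r_c \leq L+1$ says that many runs waste many slots. Neither alone suffices; their combination is what quantifies the cost of short runs and forces enough extra colors when $D$ is small. Everything else (the verifications of $n$, $\Delta$, and $\arb$, and the final algebra) is routine.
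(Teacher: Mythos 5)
Your proposal is correct and takes essentially the same approach as the paper: the same fat-path construction (a path on $\Theta(1/\eps)$ vertices with $k$ parallel edges per step), the same observation that each color class is a union of short runs separated by gaps, and the same counting argument bounding edges-per-color by roughly $LD/(D+1)$ before summing over the $k(1+\eps)$ colors. Your version is slightly cleaner in that you eliminate $r_c$ explicitly rather than invoking a ceiling bound on the number of runs, but the underlying argument is identical.
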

\begin{proof}
Consider the graph $G$ with $\ell = \lceil 1/\eps \rceil$ vertices arranged in a path and $\arb$ edges between consecutive vertices. This has maximum degree $\Delta = 2 \arb$ and arboricity $\arb$.  In any forest decomposition of diameter $D$, each forest must consist of consecutive sub-paths each of length at most $D$. Thus, each color uses at most $\lceil \frac{\ell}{D+1} \rceil \times D \leq D (1 + \ell/D+1)$ edges. There are $(\ell-1) \arb$ total edges, so we must have $\arb (1+\eps) D (1 + \ell/(d+1)) \geq (\ell - 1) \arb$. Since $\ell = \Theta(1/\eps)$, this implies that $D \geq \Omega(1/\eps)$.
\end{proof}

\begin{proposition}
\label{star-pseudo-cycle}
For a loopless multigraph, there holds $\arb_{\text{star}} \leq 2 \arb^*$.
\end{proposition}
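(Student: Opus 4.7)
The plan is to combine two ingredients. First, by Hakimi's equivalent characterization of pseudo-arboricity (the one that also underlies Theorem~\ref{orient-cor}), $G$ admits an orientation in which every vertex has outdegree at most $k := \arb^*(G)$. For each vertex $v$, I would arbitrarily label its out-edges by distinct labels $1, 2, \ldots, k$ and let $P_\ell$ be the subgraph consisting of all edges with label $\ell$ at their tail. Every vertex then has outdegree at most one in $P_\ell$, so each $P_\ell$ is a pseudo-forest (every connected component has at most one cycle).

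The second, and main, step is to show that every pseudo-forest admits a decomposition into two star-forests; combining these decompositions over the $k$ subgraphs $P_1, \ldots, P_k$ then produces a $2 \arb^*$-star-forest decomposition of $G$. I would handle each connected component of a pseudo-forest separately. For a tree component, rooting at any vertex and coloring each edge by the parity of the depth of its deeper endpoint makes each color class a disjoint union of stars whose centers lie at vertices of one fixed parity. For a unicyclic component with cycle $C = v_0 v_1 \cdots v_{\ell-1}$ and subtrees $T_0, \ldots, T_{\ell-1}$ rooted at the cycle vertices, I would color the edges of $C$ in an alternating fashion (giving two perfect matchings when $\ell$ is even, and two near-matchings plus a single two-edge star at one ``defect'' cycle vertex when $\ell$ is odd), and then color each $T_i$ by depth parity with the local parity chosen so that the role of $v_i$ (star center vs.\ star leaf) in each color is consistent with the colors of the cycle edges incident to $v_i$.

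The main obstacle will be coordinating the cycle-vertex roles for unicyclic components with odd cycles. The alternating cycle coloring forces the center/leaf role at each cycle vertex to flip from one color to the other as one walks around $C$, which is inherently inconsistent around an odd cycle unless exactly one defect cycle vertex is chosen at which the alternation breaks. I would argue that the defect can always be placed so that the attached subtree at that vertex is compatible with the two cycle edges of the same color there, and then verify that no color class contains a path of length three, so each color class is genuinely a star-forest. Putting everything together yields the claimed bound $\arb_{\text{star}} \leq 2 \arb^*$.
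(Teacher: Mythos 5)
Your plan follows essentially the same route as the paper: decompose $G$ into $\arb^*$ pseudo-forests (the paper implicitly, you via Hakimi's $\arb^*$-orientation), then claim that each loopless pseudo-tree splits into two star-forests by alternately coloring the cycle and coloring each subtree by depth parity. You also correctly put your finger on the crux, namely that the parity used for the subtree hanging off each cycle vertex must be coordinated with the colors of that vertex's two cycle edges --- something the paper's written proof, which assigns \emph{all} depth-$0$ subtree edges to color $1$ regardless of cycle position, does not do.

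There is, however, a genuine and unfixable gap: the key lemma that every loopless pseudo-tree decomposes into two star-forests is \emph{false}, so no placement of the ``defect'' can rescue the odd-cycle case of your plan. Let $H$ be the triangle on $v_0,v_1,v_2$ with a pendant leaf $w_i$ attached to each $v_i$. Orienting $v_0\to v_1\to v_2\to v_0$ and $w_i\to v_i$ gives every vertex out-degree $1$, so $\arb^*(H)=1$ and the Proposition would give $\arb_{\text{star}}(H)\le 2$. But in any $2$-coloring of the six edges, the three cycle edges cannot all share a color, so two adjacent ones do --- say $v_0v_1$ and $v_1v_2$ get color $0$ --- and they form a $0$-colored star centered at $v_1$, forcing $v_0$ and $v_2$ to be degree-one $0$-leaves; hence $v_0w_0$, $v_2w_2$, and $v_2v_0$ must all receive color $1$, and color $1$ then contains the length-three path $w_0$--$v_0$--$v_2$--$w_2$, which is not a star-forest. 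Thus $\arb_{\text{star}}(H)=3>2\arb^*(H)$, so Proposition~\ref{star-pseudo-cycle} is false as stated, and the paper's own proof is in error as well (on $H$ its coloring rule puts $v_1v_2$ together with all three pendant edges into color $1$, again producing a length-three path). What this approach actually yields is three star-forests per pseudo-tree, i.e.\ $\arb_{\text{star}}\le 3\arb^*$; the weaker bound $\arb_{\text{star}}\le\arb_{\text{star}}^{\text{list}}\le 4\arb^*$ already follows from Theorem~\ref{degen-thm}.
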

\begin{proof}
It suffices to show that a loopless pseudo-tree can be decomposed into two star-forests. This pseudo-tree can be represented as a cycle $C = \{x_1, \dots, x_t \}$ plus trees $T_1, \dots, T_t$ rooted at each $x_i$.

We can two-color the edges of $C$ from $\{0, 1 \}$, such that there is at most one pair of consecutive edges on $C$ with the same color; say w.l.g. it is color $0$. For each edge $e$ in a tree $T_i$ which is at depth $d$ ($d = 0$ means that $e$ has an endpoint $x_i$), we assign $e$ to color $(d+1) \mod 2$.  In particular, the child edges from root $x_i$ have color $1$, the grandchild edges have color $0$, etc.
\end{proof}

\begin{theorem}
\label{degen-thm}
For a multigraph with degeneracy $\delta$, there holds $\arb_{\text{star}}^{\text{list}} \leq 2 \delta \leq \min \{4 \arb^*, 4 \arb - 2 \}$.
\end{theorem}
\begin{proof}
It is a standard result that $\delta \leq 2 \min \{ \arb - 1, \arb^* \}$, so it suffices to show $\arb_{\text{star}}^{\text{list}} \leq 2 \delta$.

Fix some acyclic $\delta$-orientation of $G$, and color each edge sequentially, going backward in the orientation. For each edge $e$ oriented from $u$ to $v$, we choose a color in $Q(e)$ not already chosen by any out-neighbor of $u$ or $v$. Each vertex has at most $\delta$ out-neighbors, so at most $\delta$ colors are already used  by out-neighbors of $v$ and $\delta-1$ colors used by out-neighbors of $u$. Since $e$ has a palette of size $2\delta$, we can always choose a color for $e$. The resulting coloring at the end  is a star-list-coloring, where each edge is oriented toward the center of its star.
\end{proof}

\bibliographystyle{alpha}	
\bibliography{references}

\end{document}